\newcommand\headercell[1]{
   \smash[b]{\begin{tabular}[t]{@{}c@{}} #1 \end{tabular}}}
\theoremstyle{plain}
\newtheorem{theorem}{Theorem}
\newtheorem{definition}[theorem]{Definition}
\newtheorem{corollary}[theorem]{Corollary}
\newtheorem{lemma}[theorem]{Lemma}
\newcounter{rowItemCount}
\newcounter{subRowItemCount}
\newcommand{\diag}{\mathrm{diag}~}
\newcommand{\I}{\mathrm{i}}
\newcommand{\mc}[1]{\mathcal{#1}}
\newcommand{\mf}[1]{\mathfrak{#1}}
\newcommand{\wt}[1]{\widetilde{#1}}
\newcommand{\abs}[1]{\left\lvert#1\right\rvert}
\newcommand{\norm}[1]{\left\lVert#1\right\rVert}
\newcommand{\ud}{\,\mathrm{d}}
\newcommand{\Or}{\mathcal{O}}
\newcommand{\RR}{\mathbb{R}}
\newcommand{\CC}{\mathbb{C}}
\newcommand{\II}{\mathds{1}}
\newcommand{\bP}{\mathds{P}}
\newcommand{\rd}{\mathrm{d}}
\newcommand{\ugrp}{\mathsf{U}}
\newcommand{\expt}[1]{\mathbb{E}\left( #1 \right)}
\newcommand{\exptwrt}[2]{\mathbb{E}_{#1}\left( #2 \right)}
\newcommand{\expl}{\text{exp}}
\newcommand{\nsys}{{n}}
\newcommand{\Nsys}{{N}}
\newcommand{\nbe}{{m}}
\newcommand{\Nbe}{{M}}
\newcommand{\svt}{\vartriangleright}
\newcommand{\eig}{{\text{eig}}}
\newcommand{\maxl}{\text{max}}
\newcommand{\argminl}{\mathop{\text{argmin}}}
\newcommand{\SP}{{Supplementary Note}}
\newcommand{\circqsvt}{\mc{U}}
\crefname{equation}{Eq.}{Eqs.}
\crefname{section}{}{}
\begin{document}

\title{A quantum hamiltonian simulation benchmark}

\newcommand{\DeptMath}{Department of Mathematics, University of California, Berkeley, California 94720 USA}
\newcommand{\LBLMath}{Computational Research Division, Lawrence Berkeley National Laboratory, Berkeley, CA 94720, USA}
\newcommand{\BQIC}{Berkeley Center for Quantum Information and Computation, Berkeley, California 94720 USA}
\newcommand{\DeptChem}{Department of Chemistry, University of California, Berkeley, California 94720 USA}
\newcommand{\CIQC}{Challenge Institute of Quantum Computation, University of California, Berkeley, California 94720 USA}

\author{Yulong Dong$^{1,2}$} 
\author{K. Birgitta Whaley$^{1,3,4}$}
\author{Lin Lin$^{2,4,5}$}
\email{Electronic address: linlin@math.berkeley.edu}

\affiliation{$^1$\BQIC}
\affiliation{$^2$\DeptMath}
\affiliation{$^3$\DeptChem}
\affiliation{$^4$\CIQC}
\affiliation{$^5$\LBLMath}

\begin{abstract}
Hamiltonian simulation is one of the most important problems in quantum computation, and quantum singular value transformation (QSVT) is an efficient way to simulate a general class of Hamiltonians. However, the QSVT circuit typically involves multiple ancilla qubits and multi-qubit control gates. In order to simulate a certain class of $n$-qubit random Hamiltonians, we propose a drastically simplified quantum circuit that we refer to as the minimal QSVT circuit, which uses only one ancilla qubit and no multi-qubit controlled gates. We formulate a simple metric called the quantum unitary evolution score (QUES), which is a scalable quantum benchmark and can be verified without any need for classical computation. Under the globally depolarized noise model, we demonstrate that QUES is directly related to the circuit fidelity, and the potential classical hardness of an associated quantum circuit sampling problem. Under the same assumption, theoretical analysis suggests  there exists an `optimal' simulation time $t^{\text{opt}}\approx 4.81$, at which even a noisy quantum device may be sufficient to demonstrate the potential classical hardness.

\end{abstract}

\maketitle

\renewcommand*{\thesubfigure}{\alph{subfigure}}

\noindent {\large \textbf{Introduction}}

Recent years have witnessed tremendous progress in quantum hardware and quantum algorithms.
As near-term quantum devices become increasingly accessible, the need for holistic benchmarking of such devices is also rapidly growing.
Indeed, while most of the frequently used quantum benchmarks, such as randomized benchmarking \cite{MagesanGambettaEmerson2011} and gateset tomography \cite{Blume-KohoutGambleNielsenEtAl2017}, still focus on the performance of one or a few qubits, 
over the past three years a number of `whole machine' benchmarks have been proposed that aim at assessing the performance of quantum devices beyond a small number of qubits~\cite{BoixoIsakovSmelyanskiyEtAl2018,CrossBishopSheldonEtAl2019,ErhardWallmanPostlerEtAl2019,AruteAryaBabbushEtAl2019,ProctorRudingerYoungEtAl2022,CornelissenBauschGilyen2021,DongLin2021,ProctorSeritanRudingerEtAl2021}.

While results from such generic benchmarks certainly provide important characteristics of the quantum devices themselves, we are ultimately interested in applying the devices to carry out specific computational tasks. However, the circuit structure of quantum algorithms can be vastly different for different algorithms. Generic quantum benchmarks can miss structural information that is specific to a particular algorithm and which may amplify either quantum errors of certain types or errors amongst a certain group of qubits, and/or reduce errors elsewhere.
In this work we address the benchmarking of quantum simulations for time-independent Hamiltonians.  Such a simulation can be stated as follows: given an initial state $\ket{\psi_0}$ and a Hamiltonian $\mf{H}$, evaluate the quantum state at time $t$ according to $\ket{\psi(t)}=\exp(-\I t \mf{H})\ket{\psi_0}$. Hamiltonian simulation is of immense importance in characterizing quantum dynamics for a diverse range of systems and situations in quantum physics, chemistry and materials science. Simulation of one quantum Hamiltonian by another quantum system was also one of the motivations of Feynman's 1982 proposal for design of quantum computers~\cite{Feynman1982}. Hamiltonian simulation is also used as a subroutine in numerous other quantum algorithms, such as quantum phase estimation~\cite{Kitaev1995} and solving linear systems of equations~\cite{HarrowHassidimLloyd2009}. 

Following the conceptualization of a universal quantum simulator using a Trotter decomposition of the time evolution operator $e^{-it\mf{H}}$ \cite{Lloyd1996}, a number of quantum algorithms for Hamiltonian simulation have been proposed~\cite{BerryAhokasCleveEtAl2007,BerryChildsCleveEtAl2015,LowChuang2017,LowWiebe2019,Campbell2019}.  Detailed assessment of these algorithms, with continued improvement of theoretical error bounds, has since emerged as a very active area of research~\cite{BerryChilds2012,BerryCleveGharibian2014,BerryChildsKothari2015,ChildsMaslovNamEtAl2018,ChildsOstranderSu2019,Low2019,ChildsSu2019,ChildsSuTranEtAl2020,ChenHuangKuengEtAl2020,SahinogluSomma2020,AnFangLin2021,SuBerryWiebeEtAl2021}.
In this context, one of the most significant developments in recent years is the quantum signal processing (QSP) method \cite{LowChuang2017}, and its generalization, the quantum singular value transformation (QSVT) method~\cite{GilyenSuLowEtAl2019}.
For sparse Hamiltonian simulation, the query complexity of QSVT matches the complexity lower bound with respect to all parameters~\cite{LowChuang2017,GilyenSuLowEtAl2019}. The QSVT method also enjoys another advantage, namely that the quantum circuit is relatively simple, and requires very few ancilla qubits.
QSVT allows one to use essentially the same parameterized quantum circuit to perform a wide range of useful computational tasks, including Hamiltonian simulation~\cite{DongMengWhaleyEtAl2021}, solution of linear systems~\cite{GilyenSuLowEtAl2019,LinTong2020,TongAnWiebeEtAl2020}, and finding eigenstates of quantum Hamiltonians~\cite{LinTong2020a}.
In this sense, it provides a `grand unification' of a large class of known quantum algorithms~\cite{MartynRossiTanEtAl2021}.

Despite these advantages, QSVT is generally not viewed as a suitable technique for near-term quantum devices today. 
This is largely because these techniques rely on an input model called `block encoding' which views the Hamiltonian $\mf{H}$ as a submatrix of an enlarged unitary matrix $U_{\mf{H}}$.
For Hamiltonians arising from realistic applications (e.g., linear combination of products of Pauli or fermionic operators, and sparse matrices in general), the construction of $U_{\mf{H}}$ often involves multiple ancilla qubits and multi-qubit control gates. Taken together, these requirements can make QSVT very difficult to implement with high fidelity and to date there has been no QSVT based Hamiltonian simulation on realistic devices.

\begin{figure*}
\centering
\includegraphics[width=\textwidth]{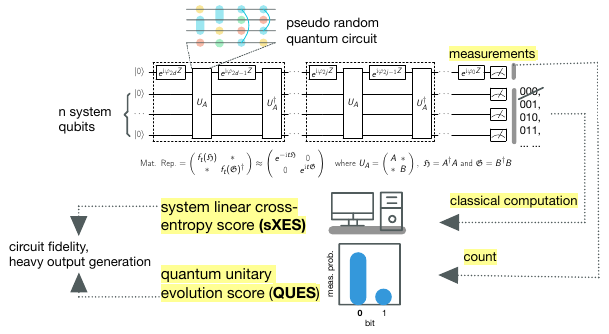}
\caption{{Illustration of the minimal quantum singular value transformation (mQSVT) circuit for the Hamiltonian simulation benchmark. 
The overall circuit implements a complex matrix polynomial $f_t(\mf{H})$ of degree $d$ on the Hamiltonian $\mf{H}$ that is defined in terms of a pseudo random quantum circuit $U_A$.
The circuit acts on $n+1$ qubits, consisting of $n$ system qubits and $1$ ancilla qubit.
After measuring the top ancilla qubit and post-selecting on the $0$ outcome of this, the action on the bottom $n$ system qubits accurately approximates $\exp(-\I t\mf{H})\ket{0^n}$. }} 
\label{fig:qsvt_circuit}
\end{figure*}

\vspace{1em}
\noindent {\large \textbf{Results}}\\
\noindent\textbf{Overview}\\

In this work we remedy this situation by identifying and demonstrating an application for QSVT on near term quantum devices that allows benchmarking of Hamiltonian simulation for a class of Hamiltonians that are relevant to recent efforts to demonstrate supremacy of quantum computation over classical computation~\cite{AruteAryaBabbushEtAl2019}.  This is the class of random Hamiltonians generated from block encoding of random unitary operators that correspond to random unitary circuits. We show that for this class of Hamiltonians it is possible to formulate a simple metric, called the quantum unitary evolution score (QUES), for the success of quantum unitary evolution. This metric is the primary output from the Hamiltonian simulation benchmark, and is directly related to the circuit fidelity. This allows verification of Hamiltonian simulation on near-term quantum devices without any need for classical computation, and the approach can be scaled to a large number of qubits.

The main result of this paper is a \emph{very simple} quantum circuit (\cref{fig:qsvt_circuit}), called the minimal QSVT (mQSVT) circuit. 
With proper parameterization, the mQSVT circuit is able to 
propagate a certain class of random Hamiltonians $\mf{H}$ to any given target accuracy. In fact, we argue that the mQSVT circuit is not only \emph{the simplest} quantum circuit for carrying out a QSVT based Hamiltonian simulation, but that it is actually the simplest possible circuit for all tasks based on QSVT.
Here $\mf{H}$ is not a Hamiltonian corresponding to a given physical system, but a random Hamiltonian generated using a simple random unitary circuit, called a Hermitian random circuit block encoded matrix (H-RACBEM)~\cite{DongLin2021}. However, for the purpose of benchmarking the capability of a quantum device to perform arbitrary Hamiltonian simulations, averaging over a distribution of the underlying arbitrary Hamiltonians is precisely what is required to generate a holistic benchmark protocol that samples from all possible instantiations.

The quantum circuit in \cref{fig:qsvt_circuit} consists of two components: an arbitrary random unitary matrix $U_A$ that implicitly defines the Hamiltonian $\mf{H}$, together with its Hermitian conjugate $U_A^{\dag}$ and a series of $R_z$ gates with carefully chosen phase factors $\{\varphi_{i}\}_{i=0}^{2d}$ (see \cref{sec:opt_phase}). 
The mQSVT circuit makes $d$ queries to $U_A$ and $U_A^\dagger$, two of which are shown explicitly in \cref{fig:qsvt_circuit}.
For an $n$-qubit matrix $\mf{H}$, the total number of qubits needed is always $n+1$, i.e., only $1$ ancilla qubit, hereafter referred to as the signal qubit, is required. This is even smaller than the simplest QSVT circuit~\cite{LowChuang2017}, which requires at least 2 ancilla qubits. 
However, more important than the reduction of the number of qubits is the fact that \cref{fig:qsvt_circuit} removes all two-qubit and multi-qubit gates outside of the unitary $U_A$. 
This means that one can choose any convenient entangling two-qubit gate (e.g.. CZ, CNOT, $\sqrt{\mathrm{iSWAP}}$) and any coupling map that is native to a quantum device to construct the random $U_A$. Combining this with the sequence of single qubit $R_z$ gates then makes the resulting benchmarking quantum circuit of \cref{fig:qsvt_circuit} readily executable.

\vspace{1em}
\noindent\textbf{Quantum unitary evolution score (QUES)}\\

\cref{fig:qsvt_circuit} implements $f_t(\mf{H})\ket{0^n}$ on the system qubits, where $f_t(\mf{H})$ is a matrix polynomial (see~\SP \ for details), with approximation error in the operator norm upper bounded by $\norm{f_t(\mf{H}) - e^{-\I t \mf{H}}}_2 \le \epsilon$. 
    Therefore in the absence of quantum errors, after applying the circuit to the input state $\ket{0^{n+1}}$, the probability $P_t(U_A) := \norm{f_t(\mf{H})\ket{0^n}}^2$ of measuring the top ancilla qubit with outcome $0$ will be close to $1$, indicating that the underlying Hamiltonian evolution is unitary. 

    From now on, we will primarily consider mQSVT circuits with a fixed set of phase factors $\{\varphi_{i}\}$ and hence fixed simulation time $t$. 
For notational simplicity, we will drop the $t$-dependence in quantities such as $P_t(U_A)$, unless specified otherwise. 

On a real quantum device, the probability $P(U_A)$ should be replaced by $P_\expl(U_A)$, which is the experimentally measured probability. 
We define the quantum unitary evolution score (QUES) by
\begin{equation}\label{eqn:ques}
    \text{QUES}(n,d) := \expt{P_\expl(U_A)},
\end{equation}
where the expectation is taken over the ensemble of random quantum circuit instances $U_A$. 
The deviation of QUES from $1$ then measures the average performance of the quantum computer under a Hamiltonian simulation task.

There is no unique prescription for constructing random quantum circuits. To fix the choice of $U_A$, we employ here the model random quantum circuit construction used to analyze the concept of quantum volume in~\cite{CrossBishopSheldonEtAl2019}. Here, 
given a number of qubits $n$, $U_A$ is constructed to contain $n$ layers, each consisting of a random permutation of the qubit labels followed by random two-qubit gates between the $n$ qubits. Given this construction, the QUES in \cref{eqn:ques} will then depend only on $n$ and $d$, and the overall depth of the circuit is approximately $2d$ times the circuit depth of $U_A$. Note that given the basic quantum gate set of a particular quantum device, alternative constructions of $U_A$ using  random choices of specific one- and two-qubit gates are possible.

\cref{fig:ques-ibmq} shows the results of computing the QUES  across $8$ different IBM Q quantum devices~\footnote{https://quantum-computing.ibm.com}, each having $5$ qubits and one of three distinct coupling maps (panel b).   When the number of qubits $n\le 3$, the QUES on all devices is relatively high ($\gtrsim 0.7$) but it decreases sharply for $n\ge 4$.
In contrast, the QUES decreases only relatively mildly as $d$ increases.
This is particularly noticeable for $n=2$, which may indicate that the quantum circuit transpiler provided by the IBM Q may be particularly effective for this device with very small qubit number.
We emphasize that compared to generic benchmark measures such as the quantum volume, the QUES is specific to the computational task of the Hamiltonian simulation, and any information specific to this is not diluted by additional averaging over output distributions from other computational tasks.
In particular, we find that even for quantum devices with relatively small quantum volume (8-QV), the performance in terms of QUES is only mildly worse than for those with a larger quantum volume (32-QV).

\begin{figure*}[htbp]
    \centering
    \includegraphics[width=0.975\textwidth]{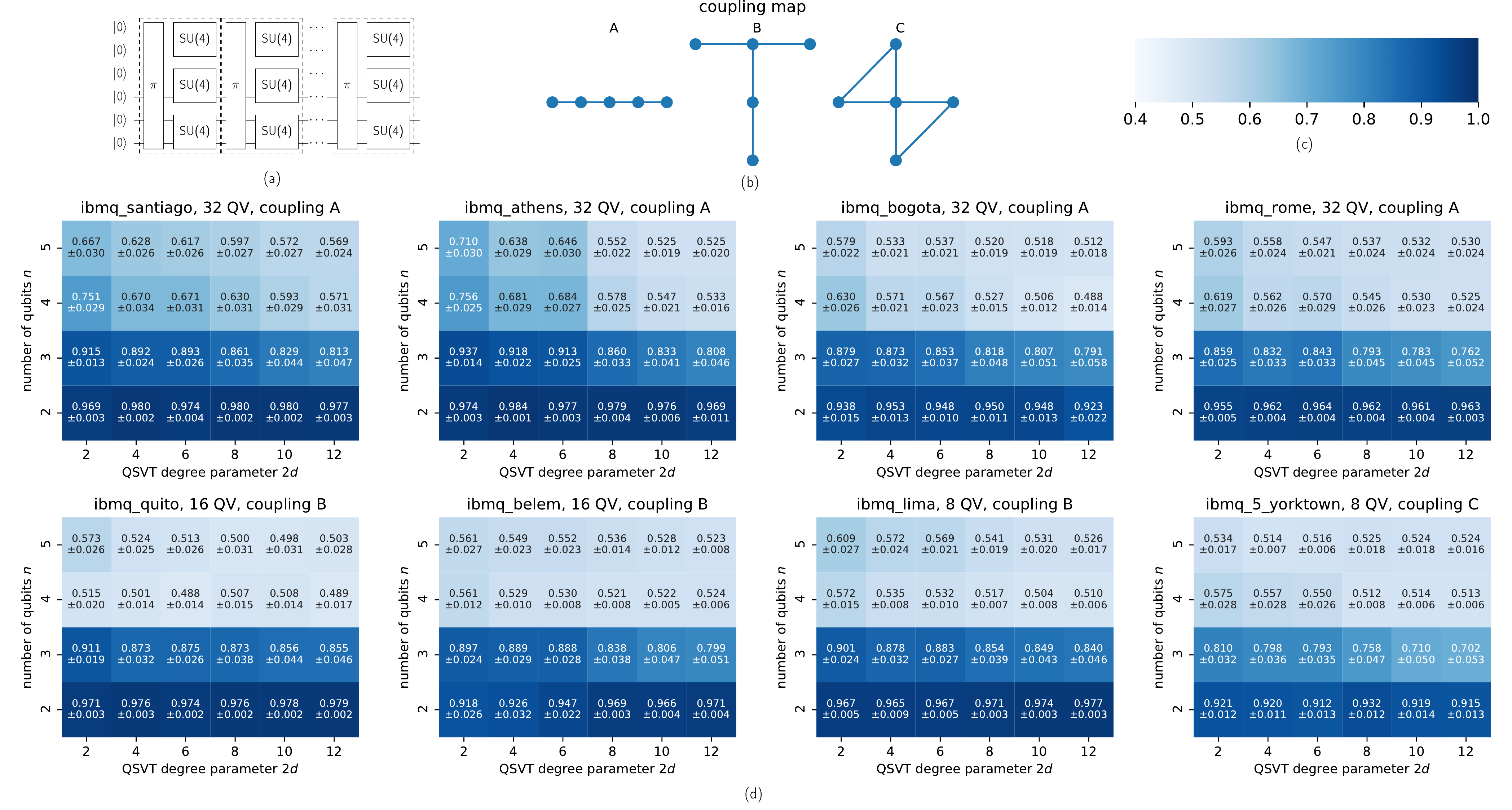}
    \caption{Quantum unitary evolution score (QUES) of the $5$-qubit quantum devices provided by the IBM Q platform~\cite{IBMQ}. (a) Visualization of the quantum circuit $U_A$ used in computing QUES. When the number of qubits is $n$, there are $n$ layers of the dashed boxes consists of the random permutation of the qubits labels followed by random two-qubit gates. After calling the transpiler, the circuit $U_A$ is decomposed with respect to the basic gate set $\Gamma = \{\text{Rz}, \sqrt{\text{X}}, \text{X}, \text{CNOT}\}$ and the coupling map which indicates the available qubit pairs on which CNOT can act. (b) Layouts of coupling maps. (c) Color bar of the heatmap. (d) Each heatmap displays the benchmarking result of a specific quantum device, with the title showing the name of the device, its quantum volume, and its coupling map. Each QUES is estimated from $50$ circuit instances. Each circuit instance is measured with $1,000$ measurement shots. The number displayed in each heatmap is the QUES value and its $95\%$ confidence interval.}
    \label{fig:ques-ibmq}
\end{figure*}

\vspace{1em}
\noindent\textbf{Circuit fidelity and system linear cross-entropy score (sXES)}\\

The quality of a noisy implementation of a quantum circuit is often characterized by the circuit fidelity.
Loosely speaking, the output quantum state of a noisy circuit can be characterized as a convex combination of the correct result and the result obtained under noise, i.e., $\text{`output'} = \alpha \times \text{`correct result'} + (1-\alpha) \times \text{`noise'}$, where $0\le \alpha\le 1$ is the circuit fidelity.
Let $p(U_A,x)$ be the noiseless bitstring probability of measuring the mQSVT circuit with outcome $0$ in the ancilla qubit and an $n$-bit binary string $x$ in the $n$ system qubits. Let $p_\expl(U_A, x)$ be the corresponding experimental bitstring probability, which can be estimated from the frequency of occurrence of the bitstring $0x$ in the measurement outcomes. 
Since the random circuit $U_A$ is approximately drawn from the Haar measure, we make analogous assumptions to those in~\cite{BoixoIsakovSmelyanskiyEtAl2018,AruteAryaBabbushEtAl2019}, and assume the following global depolarized error model:
\begin{equation}\label{eqn:succ-prob-fidelity}
    p_\expl(U_A, x) = \alpha p(U_A,x) + \frac{1-\alpha}{2^{n+1}}.
\end{equation}
We discuss the justification and potential generalization of such an error model in \cref{sec:error_model}.

Under the global depolarized error model, we now analyze the effect of noise on the circuit and show how measuring the QUES allows the circuit fidelity to be extracted. 
Prior work has made use of a combination of quantum and classical computation to obtain the circuit fidelity $\alpha$.  Such analysis relies on the possibility of evaluating the noiseless bitstring probability $p(U_A,x)$ classically, given $U_A$ and $x$,  e.g., via tensor network contraction~\cite{VillalongaLyakhBoixoEtAl2020}. This enabled the estimation of $\alpha$ from measurements of cross-entropy, referred to as XEB in this setting~\cite{BoixoIsakovSmelyanskiyEtAl2018,AruteAryaBabbushEtAl2019}. 
We adapt this approach to the Hamiltonian simulation problem by defining a system linear cross-entropy score (sXES):
\begin{equation}
\text{sXES}(U_A) := \sum_{x\ne 0^n} p(U_A, x) p_\expl(U_A,x).
\label{eqn:sXES}
\end{equation}
The prefix `system' is added because the ancilla qubit is fixed to be the $\ket{0}$ state in the definition of $p(U_A, x),p_\expl(U_A,x)$, and the $\ket{x}$ state belongs to the system register. 
In order to connect to the problem of generating heavy weight samples later, our definition of sXES excludes the bitstring $0^n$. 
This is necessary also since the statistical properties of the bitstring $0^n$ are different from those of the bitstrings in the system register.
Taking the expectation with respect to the distribution of $U_A$, and rearranging \cref{eqn:succ-prob-fidelity} then gives an expression for the circuit fidelity:
\begin{equation}\label{eqn:circuit-fidelity-XES}
    \alpha = \frac{\expt{\text{sXES}(U_A)} - \frac{1}{2^{n+1}}\expt{\sum_{x\ne 0^n} p(U_A, x)}}{\expt{\sum_{x\ne 0^n} p(U_A,x)^2} - \frac{1}{2^{n+1}}\expt{\sum_{x\ne 0^n} p(U_A, x)}}.
\end{equation}
This expression holds for any ensemble of random matrices, and relies only on the assumption that the noise model is depolarizing.  

        Once the probability distribution of $U_A$ is specified (e.g., the Haar measure~\cite{MehtaRM2004}), the only term in $\alpha$ that requires a quantum computation is $\text{sXES}(U_A)$, and all other terms in Eq.~\eqref{eqn:circuit-fidelity-XES} can be evaluated classically.
 However, evaluation of the right-hand side of \cref{eqn:circuit-fidelity-XES} often requires a significant amount of classical computation when $n$ becomes large~\cite{BoixoIsakovSmelyanskiyEtAl2018}.

\vspace{1em}
\noindent\textbf{Inferring circuit fidelity from QUES}\\

Based on the discussion so far, it might seem surprising that an alternative, very good approximation to the circuit fidelity can readily be obtained  from the QUES metric in \cref{eqn:ques}. This is arrived at by first defining $P_\expl(U_A) = \sum_x p_\expl(U_A, x)$, i.e., the average over all possible output bit strings $x$ of the probability of measuring a given bit string as outcome of the action of $U_A$ on the input state $\ket{0^{n+1}}$. Then summing both sides of \cref{eqn:succ-prob-fidelity} with respect to all bit strings $x$, further taking the expectation value of both sides over all possible $U_A$ yields a fidelity estimate $\alpha_\text{QUES}$ that can be obtained directly from the measured QUES value, namely
\begin{equation}\label{eqn:circuit-fidelity-QUES}
    \alpha_\text{QUES} = 2 \times \text{QUES} - 1.
\end{equation}
The approximation error $\epsilon$ is defined as the maximal error for simulating a bounded Hamiltonian using the mQSVT circuit, namely $\epsilon := \max_{\norm{\mf{H}}_2 \le 1} \norm{f_t(\mf{H}) - e^{-\I t \mf{H}}}_2$. It determines the extent of deviation of $\alpha_\text{QUES}$ from $\alpha$. Specifically, under the globally depolarized noise model, we have the following bound (\cref{sec:ques_fidelity})
\begin{equation}
\abs{\alpha_\text{QUES} - \alpha} \le 16 \epsilon+\Or(\epsilon^2).
\label{eqn:alpha_relation}
\end{equation}
Here the error bound is derived without including the Monte Carlo measurement error due to the finite number of measurement shots. 
The analysis of the resulting statistical error is given in \cref{sec:meas_error}.

It is evident that, unlike \cref{eqn:circuit-fidelity-XES}, there is no classical overhead for evaluating $\alpha_\text{QUES}$ for any $n$. 
Since the circuit fidelity $\alpha$ should be non-negative, combining \cref{eqn:circuit-fidelity-QUES} and \cref{eqn:alpha_relation} also indicates that under the assumption of the depolarizing noise model, we have $\text{QUES}\ge 0.5-8\epsilon+\Or(\epsilon^2)$.

To numerically verify the relation between QUES and circuit fidelity, we make use of the digital error model of \cite{BoixoIsakovSmelyanskiyEtAl2018} in which each quantum gate in the circuit is subject to a depolarizing error channel with a certain error rate.  We test the resulting noisy quantum circuit with different two-qubit gate error rates $r_2$ and set the one-qubit gate error rate to $r_1 = r_2/10$. We also discard the rotation gate with phase factor $\varphi_{2d}$, since this just adds a global phase to the exact Hamiltonian simulation. Then, given $U_A$ with a total of $g_1$ one-qubit gates and $g_2$ two-qubit gates, the reference value of the circuit fidelity can be set to $\alpha_\text{ref} := (1-r_1)^{2d(g_1+1)} (1-r_2)^{2d g_2}$ \cite{AruteAryaBabbushEtAl2019,BoixoIsakovSmelyanskiyEtAl2018}. We assume $U_A$ is Haar-distributed (numerically verified in \cref{sec:converge-to-Haar}) to simplify the computation of classical expectations.

\begin{figure*}[htbp]
    \centering
    \includegraphics[width=0.975\textwidth]{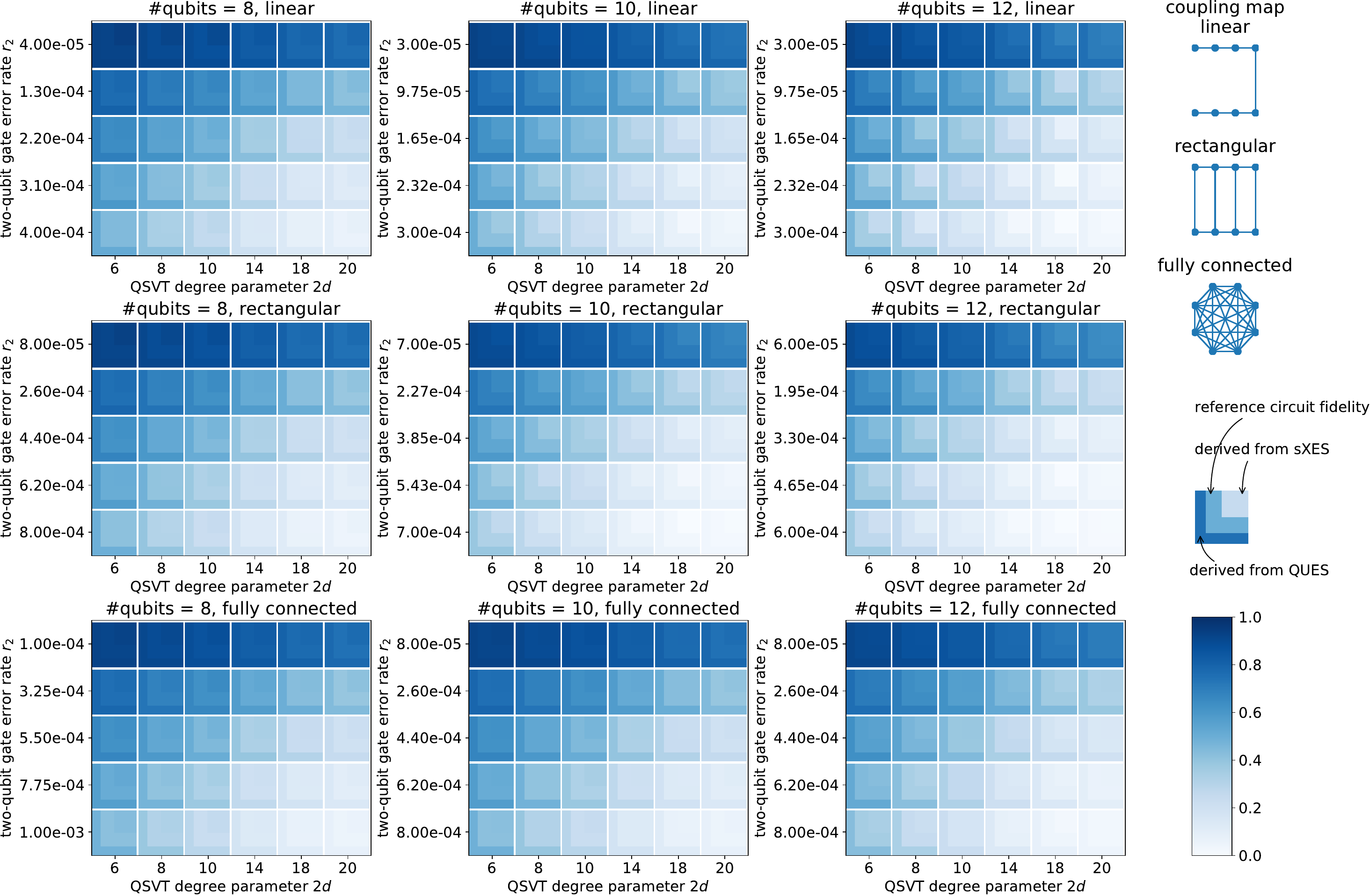}
    \caption{Circuit fidelity estimated from the quantum Hamiltonian simulation benchmark. Colored grids represent the circuit fidelity estimated from $\sim 100$ circuit repetitions. The benchmarking is performed for circuits with a range of number of system qubits, having also variable types of couplings and a range of error parameters. The depth of the random circuit instances is set to the convergent depth deduced from the convergence to Haar measure (see~\cref{sec:converge-to-Haar}).  The right column contains graphical depictions of the coupling maps, the layout of each grid, and the color bar.}
    \label{fig:hsbenchmark}
\end{figure*}

\cref{fig:hsbenchmark} summarizes the estimated circuit fidelity for random quantum circuits with different depth parameter $d$, variable coupling maps, and a range of error parameters. In all cases, we find that the derived circuit fidelity from QUES ($\alpha_\text{QUES}$), the circuit fidelity $\alpha$ obtained from sXES, and the reference value $\alpha_\text{ref}$ are generally consistent with each other. Numerical results also show that $\alpha_\text{QUES}$ exhibits a trend that slightly overestimates the value of the fidelity $\alpha$ (see \cref{tab:HSBenchmark} for numerical values of the fidelities). We also see that for a given set of error rates $r_1, r_2$, the circuits with highest connectivity show the best performance. This is because random circuits on these architectures converge faster to the Haar measure, which reduces the circuit depth  (see \cref{sec:converge-to-Haar}).

In the next two subsections we show how to assess and evaluate whether the Hamiltonian simulation with the mQSVT circuit can be a classically hard task.  We first define the analog of XHOG for Hamiltonian simulation, which we refer to as sXHOG, and give conditions for the hardness of this.  We then show that potential classical hardness can be inferred directly from the value of the circuit fidelity obtained from the QUES, i.e. from $\alpha_\text{QUES}$.

\vspace{1em}
\noindent\textbf{Classical hardness and system linear cross-entropy heavy output generation (sXHOG) }\\

The complexity-theoretic foundation of the Google claim of `quantum supremacy' in~\cite{AruteAryaBabbushEtAl2019} is based on a computational task called linear cross-entropy heavy output generation (XHOG) with Haar-distributed unitaries~\cite{AaronsonGunn2019,AaronsonChen2016,BoixoIsakovSmelyanskiyEtAl2018,AruteAryaBabbushEtAl2019}. Specifically, given a number $b>1$ and a random $n$-qubit unitary $U$, the task is to generate $k$ nonzero bitstrings $x_1, x_2, \cdots, x_k \in \{0,1\}^n$ such that $\frac{1}{k} \sum_{j=1}^k q(U, x_j) \ge b \times 2^{-n}$, where $q(U,x)=|\braket{x|U|0^n}|^2$. Here we use $U$ without the subscript to distinguish the XHOG problem and the sXHOG problem which will be defined later.
For $k$ randomly generated bitstrings, we expect that $\frac{1}{k} \sum_{j=1}^k q(U, x_j)\approx 2^{-n}$.
Therefore any value $b>1$ will correspond to a `heavy weight' output. 
When $k$ is large enough, successful solution of the XHOG problem is considered to be classically hard for every value $b>1$~\cite{AaronsonChen2016,AaronsonGunn2019}.
This holds for every circuit fidelity estimate $\alpha>0$ obtained from the XEB metric, leading to the claim of supremacy in \cite{AruteAryaBabbushEtAl2019} based on extraction of a value $\alpha\approx 0.002$ from the experiments. 

For the Hamiltonian simulation benchmark, we can define an analogous linear cross-entropy heavy output generation problem for the $n$ system qubits.  Note that the heavy weight samples are now defined only for the system qubits. We shall refer to this heavy output generation problem for Hamiltonian simulation as the sXHOG problem, to emphasize this important feature and the difference from the standard XHOG problem.  
Specifically, given a number $b > 1$, a Hamiltonian simulation benchmark circuit with sufficiently small approximation error $\epsilon$, and a random $(n+1)$-qubit unitary $U_A$ defining a random Hamiltonian on the $n$ qubits, the task is to generate $k$ nonzero bitstrings $x_1, x_2, \cdots, x_k \in \{0,1\}^n\backslash\{0^n\}$ such that $\frac{1}{k} \sum_{j=1}^k p(U_A, x_j) \ge b \times 2^{-n}$.  
Now for the case of Hamiltonian simulation, $p(U_A,x)=\Or(2^{-n})$ for any $x\ne 0^n$ at all $t$, but $p(U_A,0^n)$ can be much larger (for more details see \cref{fig:mc-justify-topt}(b) in \cref{sec:topt_derive}). The state $0^n$ is then by definition `heavy' and we must therefore exclude this from the measure in order to avoid a trivial outcome. This is what distinguishes the sXHOG problem from the original XHOG problem.

The potential classical hardness of the XHOG problem is justified by a reduction to a complexity-theoretic conjecture, called linear cross-entropy quantum threshold assumption (XQUATH) \cite{AaronsonGunn2019}. 
For completeness, we give a similar variant of the reduction of sHOG problem to a conjecture named system linear cross-entropy quantum threshold assumption (sXQUATH) in \cref{thm:reduction-XHOG-XQUATH} of \cref{sec:hardness_sxhog}. The concept of sXQUATH directly parallelizes that of XQUATH, with a similar restriction as above to exclude the output bit string $0^n$ (for more details see \cref{sec:hardness_sxhog}). Similar to the construction in Ref. \cite{AaronsonGunn2019}, the classically efficient solution to sXHOG problem yields a violation to sXQUATH, which assumes that $p(U_A,x)$ for $x\ne 0^n$ cannot be efficiently estimated on classical computers to sufficient precision.

\vspace{1em}
\noindent\textbf{Inferring classical hardness from QUES}\\

In order to decide whether a noisy implementation of the Hamiltonian simulation benchmark is potentially in the classically hard regime, we need to establish whether or not the sXHOG problem can be solved for $b>1$.  

Under the assumption that $U_A$ is drawn from the Haar measure, and that the approximation error $\epsilon$ of the mQSVT circuit is sufficiently small, we derive the following relation between $b$ and the circuit fidelity $\alpha$:
\begin{equation}
b=1+\frac{\gamma(\alpha-\alpha^*)}{\alpha+1}.
\label{eqn:b_fidelity_simplify}
\end{equation}
Here $\alpha^*$ is a fidelity threshold (not the complex conjugation of $\alpha$) and $\gamma$ a constant.  Explicit expressions for the threshold value $\alpha^*$ and  the constant $\gamma$ are given in \cref{sec:fidelity_sxhog}.  Both quantities are independent of the circuit fidelity $\alpha$ and depend only on the number of system qubits $n$ and the simulation time $t$.
\cref{eqn:b_fidelity_simplify} thus shows that when $\gamma>0$ and $\alpha>\alpha^*$, we will have $b>1$ so that the sXHOG problem solved by the mQSVT circuit might be classically hard. This is qualitatively different from the situation for XEB experiments, for which every $\alpha>0$ implies $b>1$~\cite{AruteAryaBabbushEtAl2019}.

Using the relation between QUES and $\alpha$ in \cref{eqn:circuit-fidelity-QUES,eqn:alpha_relation}, and assuming that $\epsilon$ is sufficiently small, we immediately arrive at the conclusion that when 
\begin{equation}
\text{QUES}\ge (1+\alpha^*)/2, \quad \gamma>0,
\label{eqn:ques_condition}
\end{equation}
the corresponding sXHOG problem might be classically hard for a sufficiently large value of $n$. This is a surprising result, since as noted above, the estimation of QUES does not require  intensive classical computation.  In fact it is not even necessary to actually generate any heavy weight samples - instead we just need to measure the value of QUES, Eq.~\eqref{eqn:ques}, which is readily done by repeatedly running the circuit in~\cref{fig:qsvt_circuit} with random circuit parameters as described above. Of course, should one wish to actually solve the sXHOG problem itself, the heavy weight samples would need to be generated using a quantum computer and intensive classical computation for computation of $\frac{1}{k} \sum_{j=1}^k p(U_A, x_j)$ would then also be required. But in order to demonstrate the potential regime of classical hardness for Hamiltonian simulation, i.e., the minimal values of $n$ and $d$ to reach this regime, this is not required.

To further investigate  the implications of \cref{eqn:b_fidelity_simplify}, we now explicitly indicate the time dependence of all quantities (i.e., we employ the notation $\gamma\to \gamma_t,\alpha^*\to \alpha^*_t$). In \cref{fig:supremacy-region-HS} we plot the values of $\gamma_t,\alpha^*_t$ according to the expressions given in \cref{sec:fidelity_sxhog} as a function of the simulation time $t$, for $n=4, 8, 12$ qubits. \cref{fig:supremacy-region-HS} shows that $\gamma_t>0$ for all $t$, so then we only need to determine whether it is possible to have fidelity $\alpha\ge \alpha^*_t$.  
It is evident from \cref{fig:supremacy-region-HS} that both $\alpha^*_t$ (panel (a)) and $\gamma_t$ (panel (b)) show oscillatory behavior. We now analyze this behavior to identify an optimal time at which the potential classical hardness of Hamiltonian simulation in this random Hamiltonian setting can be demonstrated for a sufficiently large number of qubits $n$.

For very short times, i.e., when $t\approx 0$, we have $\alpha^*_t>1$. This means that we cannot have $b>1$ for any value of the circuit fidelity $0\le \alpha\le 1$.
To see why this is the case, consider the limit $t=0$. Here $p_t(U_A,0^n)=1$, and $p_t(U_A,x)=0$ for any $x\ne 0^n$. 
By continuity, when $t$ is very small, the magnitude of $p_t(U_A,x)$ for most bitstrings $x\ne 0^n$ is still very small and cannot reach the heavy output regime. \cref{fig:supremacy-region-HS}(a) also shows that there is a critical simulation time $t^{\text{thr}}\approx 2.26$, for which $\alpha^*_t<1$ for any $t>t^{\text{thr}}$.

When $t>t^{\text{thr}}$, ideally we would like to have $\alpha^*_t \approx 0$, so that a very low experimental circuit fidelity $\alpha$ is sufficient to reach the heavy output regime. 
To this end we investigate what happens at the vanishing circuit fidelity, i.e., $\alpha=0$.
Detailed analysis shows that in the large $n$ limit, we have $\gamma_t\alpha^*_t=\expt{p_t(U_A,0^n)}$, and
\cref{eqn:b_fidelity_simplify} can be simplified as (see \cref{sec:fidelity_sxhog})
\begin{equation}
b|_{\alpha = 0}=1-\expt{p_t(U_A,0^n)},
\label{eqn:b_zeroalpha}
\end{equation} 
where the expectation value is taken with respect to the random unitaries $U_A$ as before.
Thus when the expectation value is positive, i.e., $\expt{p_t(U_A,0^n)} > 0$, in the large $n$ limit we have $b|_{\alpha = 0}<1$ and the task should not be classically hard.
Moreover, since $b$ is a continuous function of $\alpha$, even if we now have finite circuit fidelity $\alpha$, when this is small enough we can still find $b<1$. This provides an alternative explanation of \cref{eqn:b_fidelity_simplify}, namely, that the circuit fidelity $\alpha$ needs to be larger than the finite positive threshold value $\alpha^*_t>0$ for \emph{most} values of $t>t^{\text{thr}}$.

As a result of these considerations, when $n$ is large enough, it is important to focus  on the regimes where the expectation value $\expt{p_t(U_A,0^n)} \approx 0$, which from \cref{eqn:b_fidelity_simplify} implies that the threshold fidelity $\alpha^*_t\approx 0$. The numerical results shown in \cref{fig:supremacy-region-HS} indicate that this can happen in two different scenarios.
The first is when the simulation time $t\to \infty$ (see the analytic justification of this statement in \cref{sec:asymptotic_larget}). Of course this requires a very large circuit depth and is a physically `trivial' limit that is impractical on near-term quantum devices. The second scenario, which is much more relevant in practice, is when $\alpha^*_t$ reaches its first minimum, which defines an optimal time $t=t^{\text{opt}}$. In the large $n$ limit, the value of $t^{\text{opt}}$ can be rationalized as the first node of the Bessel function $J_0(t/2)$ (see \cref{sec:topt_derive}). 
\cref{fig:supremacy-region-HS} (a) shows that for $t^{\text{opt}}\approx 4.81$, we already have $\expt{p_t(U_A,0^n)}\approx 0$ and $\alpha^*_t\approx 0$. 
Therefore simulating to the time $t=t^{\text{opt}}$ is highly desirable, since this is a relatively short time at which the Hamiltonian simulation benchmark is nevertheless now guaranteed to solve the sXHOG problem even for a very small circuit fidelity. 
Our numerical results indicate that the values of $t^*$ and $t^{\text{opt}}$ depend only weakly on $n$, and their values are nearly converged for $n$ as small as $12$. 
Therefore this value of $t^{\text{opt}}$ can be used in a future quantum simulation in the heavy output regime.

\begin{figure*}[htbp]
    \centering
    \includegraphics[width=.8\textwidth]{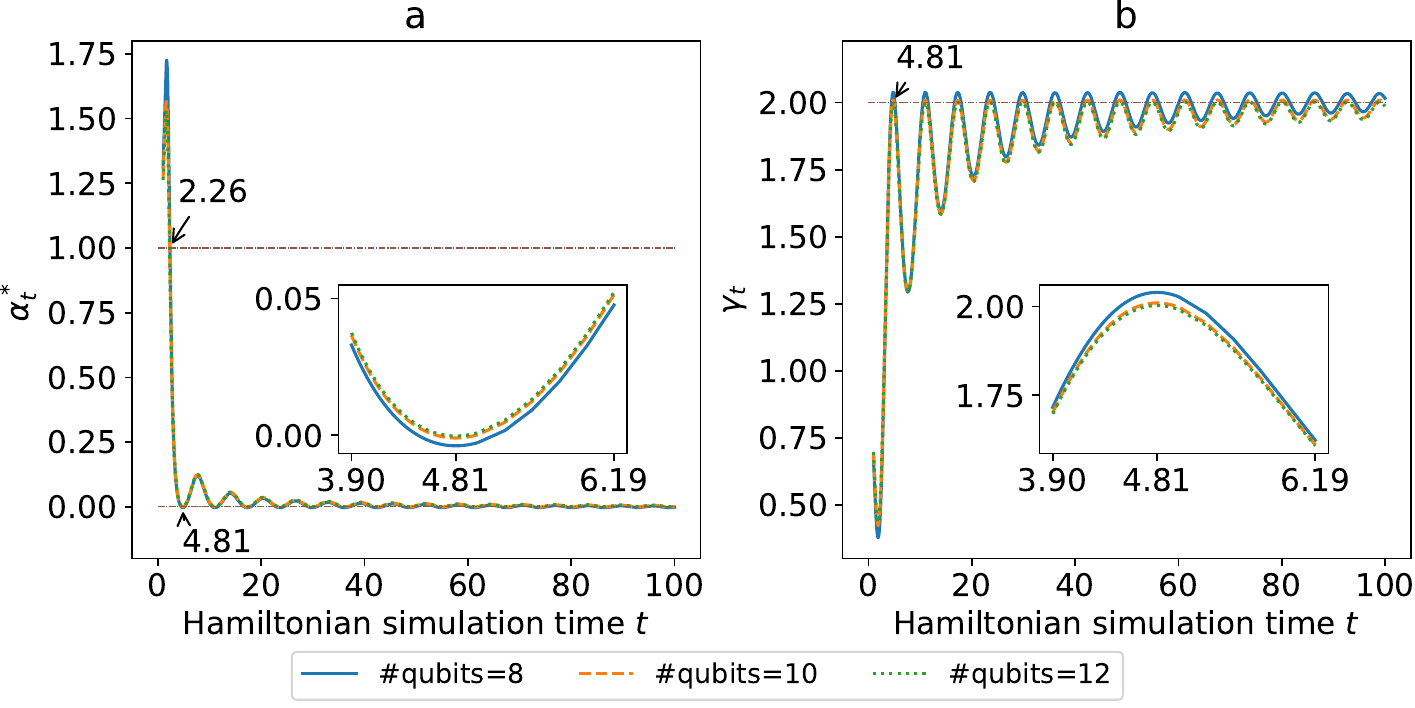}
    \caption{Quantities relevant to the system linear cross-entropy heavy output generation (sXHOG) problem, evaluated using the explicit expressions given in \cref{sec:fidelity_sxhog}.  (a) The threshold fidelity $\alpha^*_t$ as a function of Hamiltonian simulation time $t$. The upper value noted on the plot indicates the time value $t^{\text{thr}} \approx 2.26$ where $\alpha^*(t^{\text{thr}}) = 1$. The lower value noted on the plot indicates the regime at finite time $t^{\text{opt}} \approx 4.81$ with the first minimal value of threshold fidelity. (b) The parameter $\gamma_t$ as a function of Hamiltonian simulation time $t$. The value noted on the plot indicates the value $\gamma_t \approx 2$ at the optimal time $t^{\text{opt}} \approx 4.81$. The averages in (a) and (b) are estimated numerically from $\sim 100$ instances of the mQSVT circuit encoding random Hamiltonians drawn from the Haar measure. Insets in each panel show the behavior of $\alpha^*_t$ and $\gamma_t$ near the optimal time $t^{\text{opt}} \approx 4.81$.}
    \label{fig:supremacy-region-HS}
\end{figure*}

\vspace{1em}
\noindent {\large \textbf{Discussion}}\\

We have presented a quantum benchmark for Hamiltonian simulation on quantum computers.
The Hamiltonian simulation problem is solved using a minimal quantum singular value transformation (mQSVT) circuit. The primary output of the Hamiltonian simulation benchmark is a single number called QUES, which can be verified without any classical computation, even in the regime that is potentially hard for classical computation. 
Therefore the Hamiltonian simulation benchmark provides a scalable benchmark of the circuit fidelity under the global depolarized error model, and can be executed and verified on future quantum devices with a large number of qubits.

As the current quantum computing technologies advance, the possibility of implementing some error correction is improving~\cite{chen2021exponential}.  Here the highly structured mQSVT circuit provides useful indications of where best to implement error correction under limited resources for this.  Recall that the mQSVT circuit consists of a series of repetitions of a random circuit $U_A$ and its conjugate $U_A^\dagger$, interleaved with single-qubit Z rotation operators characterized by carefully selected phase factors. 
Thus given a specific random Hamiltonian block encoded in $U_A$, the time dependent evolution operator for this Hamiltonian is defined entirely in terms of the phase angles for the single-qubit Z rotation operators.
Since these phases should moreover be precisely determined, this suggests that on near-term quantum devices that may allow for some error correction but have overall limited resources, quantum error correction for these single-qubit rotations should be prioritized.

It is also useful to consider here the applicability of this Hamiltonian simulation approach to general Hamiltonians, i.e., not restricted to random Hamiltonians, on near-term quantum computers.  Unfortunately it appears that for current quantum technologies there is potentially a large gap between the feasible simulation of a H-RACBEM given in this work and that of a general Hamiltonian relevant to e.g., molecular or solid-state physics. 
The main reason is that the block encoding of most Hamiltonians of practical interest will involve significant numbers of ancilla qubits, as well as multi-qubit control gates, all of which are extremely expensive on near-term quantum devices.
In contrast to this general situation, the construction of H-RACBEM uses only whatever one-qubit and two-qubit gates are available for a given quantum device and is thus considerably easier.
Nevertheless, it is possible that undertaking Hamiltonian simulation with H-RACBEM may also yield interesting physical applications to the various settings in which quantum chaotic dynamics are relevant.
One immediate possibility in this direction is to use H-RACBEM to simulate the dynamics of quantum scrambling or quantum chaos in strongly interacting quantum systems.  Scrambling dynamics can be studied by simulating  out-of-time-order correlators (OTOCs) for effective Hamiltonians that can be defined implicitly in terms of a random circuit for time $t$ (see e.g.,~\cite{MiRoushanQuintanaEtAl2021}).
We note that one can easily perform a Hamiltonian simulation backward in time, merely by reversing the sign of $t$, so the mQSVT circuit for an OTOC at any time $t$ of a random Hamiltonian encoded in H-RACBEM can be readily constructed by adding local operators between forward and backward implementations of the mQSVT.  Evaluation of the circuit at different times $t$ can be implemented either by reevaluating the phase factors (which may required building a longer circuit depending on the accuracy required). The circuits can also be adapted to Hamiltonian simulation at finite temperatures and hence also to scrambling at finite temperatures.  From a theoretical perspective it would also be useful to explore to what extent the structure of the H-RACBEM influences the speed of scrambling~\cite{brown2012scrambling}.

Our theoretical analysis of the sensitivity of the Hamiltonian benchmarking scheme in this work was based on a fully depolarized noise model, which is often assumed to be a good model for superconducting qubits~\cite{AruteAryaBabbushEtAl2019}.  
In general the Pauli stochastic noise model on which is based may be biased or non-uniform across qubits.  In addition, thermal noise and coherent errors are important for some qubit architectures.  It will be useful to extend the current analysis to more general noise models, and some of these aspects are discussed in \cref{sec:error_model}.

Finally, we note that while this Hamiltonian simulation benchmark is restricted to the specific class of random Hamiltonians, it might also provide information relevant to more general Hamiltonian simulations. Efforts to analyze the complexity of analog Hamiltonian simulations have often focused on the relation of such simulations to classical sampling tasks~\cite{AaronsonArkhipov2011,BremnerMontanaroShepherd2016,haferkamp2020closing}, and are closely related to the cross-entropy analysis for sampling of random quantum circuits of~\cite{BoixoIsakovSmelyanskiyEtAl2018,AruteAryaBabbushEtAl2019}. As noted recently~\cite{haferkamp2020closing}, the classical hardness can be shown for certain classes of analog quantum Hamiltonian simulation~\cite{gao2017quantum,bermejo2018architectures}. Note that the potential classical hardness of the original XHOG problem corresponding to Google's supremacy experiment is justified by a reduction to a complexity-theoretic conjecture called XQUATH \cite{AaronsonGunn2019}.
However, a recent paper \cite{GaoKalinowskiChouEtAl2021} that appeared after submission of the current work has provided evidence that can refute XQUATH, at least for some classes of quantum circuits. Therefore it is possible that our sXQUATH assumption can be refuted on the same basis. 
It could be useful to explore generalizations of other classical sampling tasks to the QSVT setting, as was done here for the cross-entropy heavy output generation, to help guide the search for Hamiltonians whose simulation by QSVT can exhibit quantum advantages. Finally, the current approach of analysis  of alternative fidelity measures under Hamiltonian simulation using mQSVT may provide useful for analysis of recent fidelity based experimental studies of analog Hamiltonian simulations that followed the emergent random nature of a projected ensemble of states~\cite{choi2021emergent}.

\vspace{1em}\noindent {\large \textbf{Methods}}\\
\noindent {\textbf{Details of numerical simulations}}

All numerical tests are implemented in \textsf{python3.7} and \textsf{Qiskit} \cite{Qiskit}. Quantum circuits in \cref{fig:ques-ibmq} are optimized by the \textsf{transpiler} provided by \textsf{Qiskit} before being executed on a real quantum device. The number of measurements (shots) is fixed to be $1,000$ for the experiments on real quantum devices in \cref{fig:ques-ibmq}, and it is set to $1,000,000$ for those on classical simulators in \cref{fig:hsbenchmark}. The classical generation of Haar random unitaries in \cref{fig:supremacy-region-HS} is performed by QR factorization to random complex matrices with i.i.d. Gaussian entries according to the recipe in \cite{Mezzadri2006}.

\bigskip

\noindent{\large \textbf{Data availability}}\\
The experimental data that support the finding are available from the authors upon request.\\

\noindent{\large \textbf{Code availability}}\\
The codes that support the finding are available from the authors upon request.\\

\noindent {\large \textbf{Acknowledgments}}\\
This work was partially supported by 
the U.S. Department of Energy, Office of Science, National Quantum Information Science Research Centers, Quantum Systems Accelerator (BW, LL)
and a Google Quantum Research Award (YD,BW,LL),  by the Department of Energy under grant DE-SC0017867,  and by the Department of Energy under the Center for Advanced Mathematics for Energy Research Applications (CAMERA) program (LL). We thank Andr\'as Gily\'en, Xun Gao, Yunchao Liu, Murphy Niu, and Jiahao Yao for helpful discussions, and in particular thank Timothy Proctor for extended discussions on noise models.
\\

\noindent {\large \textbf{Author contributions}}\\
YD and LL designed the Hamiltonian simulation benchmark and proved its theoretical properties. YD, BW and LL designed the experiments. YD carried out classical simulations and IBM-Q experiments. All authors contributed to the discussion of results and writing of the manuscript.\\

\noindent{\large \textbf{Competing interests}}\\
\noindent The authors declare no competing interests.

\bibliographystyle{abbrvnat}
\bibliographystyle{unsrt}

\widetext
\clearpage

\setcounter{equation}{0}
\setcounter{figure}{0}
\setcounter{table}{0}
\renewcommand{\figurename}{Supplementary Figure}
\renewcommand{\tablename}{Supplementary Table}
\renewcommand{\thetable}{\arabic{table}}
\renewcommand{\thesection}{\SP\ \arabic{section}}

\begin{center}
    {\Large \bf \SP s}
\end{center}

\section{Notations}

We first introduce the definition of block encoding. 
Let $A\in \CC^{N\times N}$ be an $n$-qubit Hermitian matrix ($N=2^n$). If we can find an $(n+1)$-qubit unitary matrix $U_A$ such that ($*$ stands for a matrix block whose entries are not of interest)
\begin{equation}
U_A=\left(\begin{array}{cc}
{A} & {*} \\
{*} & {*}
\end{array}\right)
\label{eqn:block_encode_exact_matrix}
\end{equation}
holds, i.e. $A$ is the upper-left matrix block of $U_A$, then we may get access to the action of $A$ on an $n$-qubit state $\ket{\psi}$ via the unitary matrix $U_A$ by  
\[
U_A\ket{0}\ket{\psi}=\ket{0}(A\ket{\psi})+\ket{\perp},
\]
where $\ket{\perp}$ is an unnormalized $(n+1)$-qubit state not of interest and  satisfies $(\ket{0}\bra{0}\otimes I_n)\ket{\perp}=0$. 
Here we follow the row-major order convention. For instance, 
\[
\ket{0}\ket{\psi}\equiv \begin{pmatrix}
\psi\\
0^n
\end{pmatrix}, \quad 
\ket{1}\ket{\psi}\equiv \begin{pmatrix}
0^n\\
\psi
\end{pmatrix},
\]
and \cref{eqn:block_encode_exact_matrix} can also be written as $A=\left(\langle 0 | \otimes I_n\right) U_A \left( | 0 \rangle \otimes I_n \right)$. 

Clearly when the operator norm $\norm{A}_2$ is larger than $1$, $A$ cannot be encoded by any unitary $U_A$ as in \cref{eqn:block_encode_exact_matrix}. Generally if we can find $\alpha, \epsilon' \in \mathbb{R}_+$, and an $(m+n)$-qubit matrix $U_A$ such that
\begin{equation}
\norm{ A - \alpha \left(\langle 0^m | \otimes I_n\right) U_A \left( | 0^m \rangle \otimes I_n \right) }_2 \leq \epsilon',
\label{eqn:block_encoding}
\end{equation}
then $U_A$ is called an $(\alpha, m, \epsilon')$-block-encoding of $A$. Here $m$ is called the number of ancilla qubits for block encoding. 
We refer to \cite{GilyenSuLowEtAl2019} for more details on block encoding.
When the block encoding is exact with $\epsilon'=0$, $U_A$ is called an $(\alpha, m)$-block-encoding of $A$. The special case of the $(1,1)$-block-encoding may also be called a $1$-block-encoding.

In the \SP, for notational simplicity, we may use $U$ without a subscript to represent a $(n+1)$-qubit quantum circuit drawn at random from a certain probability distribution. Unless otherwise noted, $A$ denotes the upper-left $n$-qubit submatrix of $U$, i.e. $U$ is the $1$-block-encoding of $A$.  This matrix $A$ is also called a random circuit block encoded matrix (RACBEM), and $\mf{H} = A^\dagger A$ is corresponding Hermitian random circuit block encoded matrix (H-RACBEM)~\cite{DongLin2021}. 

We use $N = 2^n$ to represent the dimension of the Hilbert space of the system qubits, and $I_n$ to denote the $n$-qubit identity matrix. For a complex square matrix $A$ with singular value decomposition (SVD) $A = W \Sigma V^\dagger$, its singular value transformation through an even function $g$ is defined as $g^\svt(A) = V g(\Sigma) V^\dagger$. Here, the right triangle in the notation means only the right singular vectors $V$ are kept in the transformation. If we consider $\abs{A} := \sqrt{A^\dagger A} = V \Sigma V^\dagger$, then the singular value transformation of $A$ is equal to the eigenvalue transformation of the Hermitian matrix $\abs{A}$, namely, $g^\svt(A) = g(\abs{A})$. Furthermore, due to the even parity of $g$, there is a function $f$ so that $g(x) = f(x^2)$ and $g^\svt(A) = f(\abs{A}^2)=f(\mf{H})$.
In particular, when $g_t(x)$ is an even polynomial approximation to $s_t(x)=e^{-\I t x^2}$, we can define $g_t(x)=f_t(x^2)$. Hence $f_t(x)$ approximates $e^{-\I t x}$, and $g_t^\svt(A) = f_t(\mf{H})$ approximates the Hamiltonian evolution $e^{-\I t \mf{H}}$.

We use $\circqsvt_{f,U}$ to represent the minimal quantum singular value transformation (mQSVT) circuit in \cref{fig:qsvt_circuit}, which has only a single ancilla qubit, $m=1$. For any $n$-qubit input state $\ket{\psi}$, the mQSVT circuit performs the following transformation of the input quantum state,
\begin{equation*}
    \circqsvt_{f,U} \ket{0} \otimes \ket{\psi} = \ket{0} \otimes \left( g^\svt(A) \ket{\psi} \right) + \ket{1} \otimes \ket{\bot},
\end{equation*}
where $\ket{\bot} \in \CC^N$ is an unnormalized quantum state. 
In other words, $\circqsvt_{f,U}$ is the $1$-block-encoding of $f(\mf{H})\equiv g^\svt(A)$.
$\norm{A}_2 := \sigma_\maxl(A)$ is the operator norm of a matrix which is equal to its maximal singular value. $\norm{f}_\infty := \max_{x \in [-1,1]} \abs{f(x)}$ is the infinity norm of continuous functions on $[-1,1]$. $\expt{\cdot}$ stands for the average over the random matrix ensemble (most commonly, the ensemble of $U$). Both $\overline{z}$ and $z^*$ stands for the complex conjugate of a complex number $z$. For a complex polynomial $P(x) = \sum_i c_i x^i \in \CC[x]$, its complex conjugate as $P^*(x) = \sum_i c_i^* x^i$. For a matrix $A$, the transpose, Hermitian conjugate and complex conjugate are denoted by $A^{\top}$, $A^{\dag}$, $A^*$, respectively. Without otherwise noted, an $n$-bit binary string $x \in \{0,1\}^n$ is identified to its decimal representation. Specifically, when an $n$-bit binary string appears in the subscript of a matrix or a vector, it is identified to be its decimal representation  (we use a zero-based indexing). For example,  $A_{0^n, 1^n} := A_{0, 2^n-1}$.

\cref{tab:prob-symbol} summarizes the main notations used in the \SP.
In the context of Hamiltonian simulation, many quantities depend on the value of the simulation time $t$.
Such a $t$-dependence is usually added as a subscript such as $p_t(U,x)$.
Most of the discussion focuses on the simulation at a fixed time $t$.
Therefore when the context is clear, for simplicity we may drop the $t$ dependence.

\begin{table}[htbp]
    \centering
    \begin{tabular}{ll}
    \hline\hline
    Symbol & Definition\\\hline
    $\circqsvt_{f, U}$ & mQSVT circuit in \cref{fig:qsvt_circuit} implementing a $1$-block-encoding of $f(\mf{H})$\\\hline
    $A$ & Upper-left $n$-qubit submatrix of a $(n+1)$-qubit random unitary matrix $U$ \\\hline
    $\mf{H}$ & $A^{\dag}A$, also called a H-RACBEM\\\hline
    $s_t(x)$           & $e^{-\I t x^2}$\\\hline
    $g_t(x)$           & an even polynomial approximation to $s_t(x)$, also denoted by $P(x,\Phi)$ with phase factor $\Phi$\\\hline
    $f_t(x)$           & $g_t(x^2)$, which is a polynomial approximation to $e^{-\I t x}$\\\hline
    $\bP(\cdot)$     &  probability density function of random quantum circuits
    \\\hline
    $\bP_\expl(\cdot)$     &  probability density function associated with the noisy implementation of random quantum circuits\\\hline
    $p_j$     & \begin{tabular}{@{}l@{}} probability associated with the matrix element at \\the $0$-th column and the $j$-th row of a unitary matrix $V$, i.e., $p_j := \abs{V_{j0}}^2$\end{tabular}\\\hline
    $p_{ij}$    & \begin{tabular}{@{}l@{}} probability associated with the matrix element at\\ the $i$-th column and the $j$-th row of a unitary matrix $V$, i.e., $p_j := \abs{V_{ij}}^2$\end{tabular}\\\hline
    $p(U,x)$ & \begin{tabular}{@{}l@{}}noiseless bitstring probability of measuring $\circqsvt_{f, U}$ with outcome $0$ in the ancilla qubit\\and an $n$-bit binary string $x$ in the $n$ system qubits (dependence on $f$ is omitted)\end{tabular}\\\hline
    $P(U)$ & \begin{tabular}{@{}l@{}}noiseless probability of measuring $\circqsvt_{f, U}$ with outcome $0$ in the ancilla qubit,\\ satisfying $P(U) = \sum_x p(U,x)$ (dependence on $f$ is omitted)\end{tabular}\\\hline
    $p_\expl(U,x)$ & \begin{tabular}{@{}l@{}}bitstring probability of measuring the noisy implementation of $\circqsvt_{f, U}$ with outcome $0$ \\in the ancilla qubit and an $n$-bit binary string $x$ in the $n$ system qubits  (dependence on $f$ is omitted)\end{tabular}\\\hline
    $P_\expl(U)$ & \begin{tabular}{@{}l@{}}probability of measuring the noisy implementation of $\circqsvt_{f, U}$ with outcome $0$  \\with $0$ in the ancilla qubit, satisfying $P_\expl(U) = \sum_x p_\expl(U,x)$ (dependence on $f$ is omitted)\end{tabular}\\\hline    
    \end{tabular}
    \caption{Summary of notations used in the \SP.}
    \label{tab:prob-symbol}
\end{table}

\section{Equivalence between minimal and standard QSVT circuits}\label{sec:simplify_qsvt}

The standard implementation of the QSVT circuit \cite{GilyenSuLowEtAl2019} uses one extra ancilla qubit, called the signal ancilla qubit. In this section, we show that when the system matrix $A$ is block encoded with only one ancilla qubit, the signal ancilla qubit is no longer needed.
Therefore the only ancilla qubit is due to the block encoding of $A$, and the circuit is called the minimal QSVT (mQSVT) circuit in \cref{fig:qsvt_circuit}. Furthermore, \cref{fig:qsvt_circuit} removes all two-qubit and multi-qubit gates outside of the unitary $U$, which greatly simplifies the implementation for a given quantum device. We prove the equivalence between the mQSVT and the standard QSVT circuits in this section for completeness.

For any $(n+1)$-qubit unitary $U$, let the singular value decomposition of its upper-left $n$-qubit submatrix $A$ be $A = W_1 \Sigma V_1^\dagger$. Following the cosine-sine decomposition (CSD), there exists $n$-qubit unitaries $W_2, V_2$ so that $U$ can be decomposed as follows,
\begin{equation*}
    U =\left(\begin{array}{cc}
        A & * \\
        * & B
    \end{array}\right) = \left(\begin{array}{cc}
        W_1 & 0 \\
        0 & W_2
    \end{array}\right) \left(\begin{array}{cc}
        \Sigma & S \\
        -S & \Sigma
    \end{array}\right) \left(\begin{array}{cc}
        V_1 & 0 \\
        0 & V_2
    \end{array}\right)^\dagger,
\end{equation*}
where $S = \sqrt{I - \Sigma^2}$. This decomposition also implies any $n$-qubit non-unitary matrix $A$, up to a scaling factor, can in principle be block encoded using only one ancilla qubit.

Then, the unitary matrix representation of the quantum circuit in \cref{fig:qsvt_circuit} is 
\begin{equation*}
\begin{split}
    \text{Mat. Rep.} = \left(\begin{array}{cc}
        V_1 & 0 \\
        0 & V_2
    \end{array}\right) \left(\begin{array}{cc}
        e^{\I \varphi_0} I_n & 0 \\
        0 & e^{-\I \varphi_0} I_n
    \end{array}\right) \prod_{j=1}^d & \left[ \left(\begin{array}{cc}
        \Sigma & -S \\
        S & \Sigma
    \end{array}\right) \left(\begin{array}{cc}
        e^{\I \varphi_{2j-1}} I_n & 0 \\
        0 & e^{-\I \varphi_{2j-1}} I_n
    \end{array}\right)\right.\\
    &\left.\left(\begin{array}{cc}
        \Sigma & S \\
        -S & \Sigma
    \end{array}\right) \left(\begin{array}{cc}
        e^{\I \varphi_{2j}} I_n & 0 \\
        0 & e^{-\I \varphi_{2j}} I_n
    \end{array}\right) \right]\left(\begin{array}{cc}
        V_1 & 0 \\
        0 & V_2
    \end{array}\right)^\dagger.
\end{split}
\end{equation*}
Let $K$ be the permutation matrix permuting the $j$-th and the $(N+j)$-th columns, and $V = \diag\{V_1,\ V_2\}$. The multiplicand is simplified as a direct sum of $N$ $2$-by-$2$ blocks upon conjugating $\wt{V} := VK$, i.e.
\begin{equation*}
    \wt{V}^\dagger \left(\text{Mat. Rep.}\right) \wt{V} = \bigoplus_{q=0}^{N-1} e^{\I \varphi_0 Z} \prod_{j=1}^d R_q e^{\I \varphi_{2j-1}Z} R_q^\top e^{\I \varphi_{2j} Z},
\end{equation*}
where 
$$R_q = \left(\begin{array}{cc}
    \sigma_q & -\sqrt{1-\sigma_q^2} \\
    \sqrt{1-\sigma_q^2} & \sigma_q
\end{array}\right) = e^{\I \frac{\pi}{4} Z} e^{\I \arccos \left(\sigma_q \right) X} e^{-\I \frac{\pi}{4} Z}.$$
Let $W(x) := e^{\I \arccos(x) X}$, and 
\begin{equation*}
\wt{\varphi}_i=\begin{cases}
\varphi_i+\frac{\pi}{4}, & i=0 \text{ or } 2d,\\
\varphi_i+\frac{\pi}{2}, & i = 2, 4, \cdots, 2d-2,\\
\varphi_i-\frac{\pi}{2}, & i = 1, 3, \cdots, 2d-1.\\
\end{cases}
\end{equation*}
The matrix representation is then
\begin{equation*}
    \wt{V}^\dagger \left(\text{Mat. Rep.}\right) \wt{V} = \bigoplus_{q=0}^{N-1} e^{\I \wt{\varphi}_0 Z} \prod_{j=1}^{2d} W(\sigma_q) e^{\I \wt{\varphi}_j Z}.
\end{equation*}
It is straightforward to show that the following mapping from $[-1,1]$ to $\text{SU}(2)$
\begin{equation*}
x \mapsto e^{\I \wt{\varphi}_0 Z} \prod_{j=1}^{2d} W(x) e^{\I \wt{\varphi}_j Z} = \left(\begin{array}{cc}
    P(x) & \I \sqrt{1-x^2} Q(x) \\
    \I \sqrt{1-x^2} Q^*(x) & P^*(x)
\end{array}\right)
\end{equation*}
defines an even polynomial $P(x)$ of degree at most $2d$, and an odd polynomial $Q(x)$ of degree at most $2d-1$, so that $\abs{P(x)}^2 + (1-x^2) \abs{Q(x)}^2 = 1$ holds for any $x \in [-1,1]$. 

Then, the matrix representation of the quantum circuit is
\begin{equation*}
\begin{split}
    \text{Mat. Rep.} &= V \left(\begin{array}{cc}
        P(\Sigma) & \I \sqrt{I_n - \Sigma^2} Q(\Sigma) \\
        \I \sqrt{I_n - \Sigma^2} Q^*(\Sigma) & P^*(\Sigma)
    \end{array}\right) V^\dagger = \left(\begin{array}{cc}
        P^\svt(A) & * \\
        * & (P^\svt(A))^{\dag}
    \end{array}\right).
\end{split}
\end{equation*}
For example, when $g_t(x)$ is an even polynomial approximation to $s_t(x)=e^{-\I t x^2}$, we can define $g_t(x)=f_t(x^2)$, and the diagonal  $n$-qubit submatrices are $g_t^\svt(A)=f_t(A^\dagger A)$ and $(g_t^\svt(A))^{\dag}=(f_t(A^\dagger A))^{\dag}$ respectively.

This remarkably simple structure of the QSVT circuit is due to the use of $1$-block-encoding. In general, if an $n$-qubit matrix $A$ is block encoded in an $(n+m)$-qubit unitary $U$, the standard QSVT circuit has the structure in \cref{fig:qsvt_circuit_general}.
In particular, even when $m=1$, two CNOT gates are needed to implement each phase rotation.
This introduces additional errors and can be practically cumbersome on near term devices the default two-qubit gate is not CNOT (e.g. $\sqrt{\text{iSWAP}}$).

\begin{figure*}[htbp]
    \centering
        \begin{center}
        \[\scalebox{1}{
                \Qcircuit @C=0.8em @R=1.em {
                        \lstick{\ket{0}}& \targ & \gate{e^{-\I \varphi_{2d} \mathrm{Z}}} & \targ & \qw & \targ & \gate{e^{-\I \varphi_{2d-1} \mathrm{Z}}} & \targ & \qw & \qw &\raisebox{0em}{$\cdots$}&&\qw   &\targ & \gate{e^{-\I \varphi_0 \mathrm{Z}}} & \targ & \qw&\qw & \meter \\
                        \lstick{\ket{0^m}}&\ctrlo{-1} & \qw  & \ctrlo{-1} & \multigate{1}{U} & \ctrlo{-1} & \qw & \ctrlo{-1} & \multigate{1}{U^{\dag}} &\qw &\raisebox{0em}{$\cdots$} &&\qw    &\ctrlo{-1} & \qw & \ctrlo{-1} & \qw& \qw & \meter \\
                        \lstick{\ket{0^n}}&\qw &\qw&\qw &\ghost{U} &\qw&\qw&\qw&\ghost{U^{\dag}}&\qw &\raisebox{0em}{$\cdots$} &&\qw&\qw&\qw   & \qw&\qw & \rstick{P^\svt(A) \ket{0^n}} \gategroup{1}{2}{2}{4}{.7em}{--} \gategroup{1}{6}{2}{8}{.7em}{--} \gategroup{1}{14}{2}{16}{.7em}{--}
        }}
        \]
        \end{center}
    \caption{Quantum circuit for quantum singular value transformation (QSVT) of an even complex matrix polynomial $P$ of degree $2d$. The dashed boxes represent the controlled rotation with phase factor $\varphi_j$ where two $m$-qubit Toffoli gates controlled at $\ket{0^m}$ are used. The QSVT circuit queries the $(n+m)$-qubit quantum circuit $U$ and its inverse recursively for $d$ times. For the given QSVT circuit, by measuring ancilla qubits with outcome $00^m$, the action on the system qubits is the matrix polynomial.}
    \label{fig:qsvt_circuit_general}
\end{figure*}
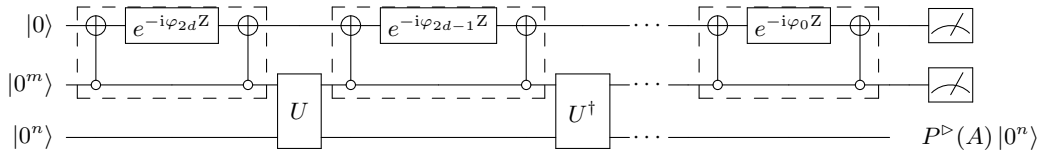

In the mQSVT circuit in \cref{fig:qsvt_circuit}, the phase factors $(\varphi_0, \cdots, \varphi_{2d})$ are determined by an optimization procedure that provides an even polynomial $g_t(x)$ satisfying  $\norm{g_t(x) - s_t(x)}_\infty \le \epsilon$
for some given precision parameter $\epsilon$ (see \cref{sec:opt_phase}), and the evolution time $t$ is encoded in the choice of phase factors. 
We then measure the top ancilla qubit and post-select on the $0$ outcome of this measurement.  This then ensures that the action on the lower $n$ system qubits approximates the Hamiltonian evolution $e^{-\I t \mf{H}}\ket{0^n}\approx f_t(\mf{H})\ket{0^n}$, where $\mf{H} = A^\dagger A$.  Here $f_t(\mf{H})$ is a matrix polynomial, and the approximation error in the operator norm is upper bounded by $\norm{f_t(\mf{H}) - e^{-\I t \mf{H}}}_2 \le \epsilon$.

In the absence of quantum errors the probability of measuring the top ancilla qubit with outcome $0$, i.e. the $P_t(U) := \norm{f_t(\mf{H})\ket{\psi}}^2$, will be close to $1$.  Specifically, 
  by the triangle inequality, the probability of measuring the top ancilla qubit with outcome $0$ is lower bounded:
\begin{equation}
\begin{split}
P_t(U) =& \norm{g_t(\Sigma) V^\dagger\ket{0^n}}_2^2 = \sum_{j=0}^{\Nsys-1} \abs{g_{t}(\sigma_j)}^2 p_j = \left|1 + \sum_{j=0}^{\Nsys-1} \left( \left(g_t(\sigma_j) - s_t(\sigma_j)\right) \overline{g_t(\sigma_j)} + s_t(\sigma_j) \overline{\left( g_t(\sigma_j) - s_t(\sigma_j) \right)} \right) p_j\right|\\
\ge& 1 - 2 \epsilon \sum_{j=0}^{\Nsys-1} p_j = 1 - 2\epsilon,
\end{split}
\label{eqn:PU_bound}
\end{equation}
where $p_j = \abs{V_{0,j}}^2$. 

We also find that the probability $p_t(U, x) = \abs{\braket{0x | \circqsvt_{f_t,U} | 00^{n}}}^2 \approx \abs{\braket{x | \exp(-\I t \mf{H}) | 0^n}}^2$ will characterize the dynamics of the propagation from $0^n$ to $x$ for any $n$-bit string $x \in \{0, 1\}^n$. 
If the simulation time $t$ is short, then $\exp(-\I t \mf{H})\approx I$, and $p_{t}(U,0^n)$ can be much larger than $p_t(U,x)$ for any bitstring $x\ne 0^n$. 
This issue will be particularly important when defining the `heavy weight samples' in later discussions.  
Therefore we shall primarily focus on the case when $x\ne 0^n$.

\begin{figure*}[htbp]
\centering
\includegraphics[width=0.9\textwidth]{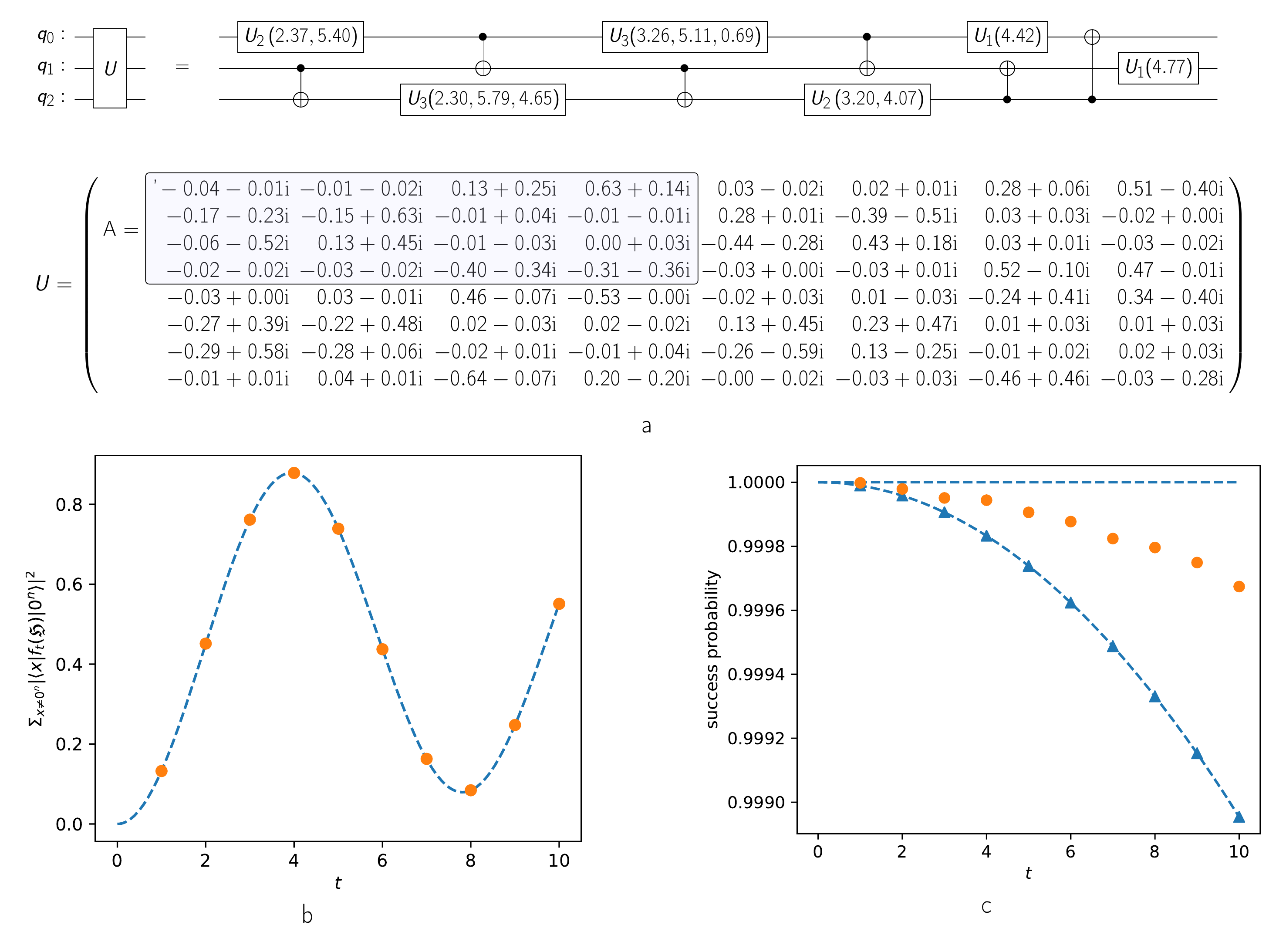}
\caption{An illustrative implementation from the quantum Hamiltonian simulation benchmark. (a) Top: A $3$-qubit random quantum circuit constructed from the basic gate set $\{ \text{U}_1, \text{U}_2, \text{U}_3, \text{CNOT} \}$. Bottom: The $3$-qubit unitary matrix representation $U$ of the quantum circuit and its upper-left $2$-qubit submatrix $A$ (in the shaded area). The top qubit $q_0$ is the ancilla qubit for block encoding. (b) Dynamics of the evolution away from the initial condition $\sum_{x\ne 0^n}p_t(U,x)$ implemented using the mQSVT circuit, compared to the reference solution $\sum_{x\ne 0^n}|\braket{x|e^{-\I t \mf{H}}|0^n}|^2$.   (c) The probability $P_t(U)$ obtained from $10^6$ noiseless measurements (orange dots) and theoretical bounds (blue dashed). }
\label{fig:prop_qsvt_circuit}
\end{figure*}

As an illustrative example, \cref{fig:prop_qsvt_circuit} shows a quantum circuit implementing a $2$-qubit matrix $A$ encoded by a $3$-qubit unitary matrix $U$. The construction uses only the basic gate set $\{ \text{U}_1, \text{U}_2, \text{U}_3, \text{CNOT} \}$.
In \cref{sec:opt_phase} we describe an optimization based method to 
obtain the phase factors for a relatively short time $t$ to a small approximation error $\epsilon$. 
In this example we set $t=1$.
To obtain a theoretical error bound at large $t$, we can use the phase factor concatenation technique in \cref{sec:long_phase} to obtain the simulation at $t$ from $2$ to $10$, and the error bound $\epsilon_t = \epsilon t^2$ is given by \cref{thm:phase_concatenate}.
Using these circuits, we may measure the outcome of the system qubits in the computational basis, and follow the dynamics of the probability $\sum_{x\ne 0^n} p_t(U,x)$, i.e. that of the quantum state moving away from the initial state $\ket{0^n}$.  
\cref{fig:prop_qsvt_circuit}(b) shows that this agrees very well with the result using the exact dynamics $\sum_{x\ne 0^n}|\braket{x|e^{-\I t \mf{H}}|0^n}|^2$.
Furthermore, according to \cref{eqn:PU_bound}, the probability $P_t(U)$ in \cref{fig:prop_qsvt_circuit}(c) satisfies the theoretical bounds $1-2\epsilon_t \le P_t(U) \le 1$ and is very close to $1$.

\section{Optimization based method for finding phase factors in the Hamiltonian simulation benchmark}\label{sec:opt_phase}

In order to implement the Hamiltonian simulation benchmark at time $t$, we need to find the phase factors $\Phi$ that generates an even polynomial $P(x,\Phi)=g_t(x)$ so that $\norm{g_t(x) - s_t(x)}_\infty\le \epsilon$ for a sufficiently small $\epsilon$. 
For a large class of polynomials, the existence of such phase factors is established in \cite[Theorem 4]{GilyenSuLowEtAl2019}, and summarized in \cref{thm:qsp}.
\begin{theorem}[\textbf{Quantum signal processing in SU(2)}]\label{thm:qsp}
    For any $P, Q \in \mathbb{C}[x]$ and a positive integer $d$ such that (1) $\deg(P) \leq d, \deg(Q) \leq d-1$, (2) $P$ has parity $(d\mod2)$ and $Q$ has parity $(d-1 \mod 2)$, (3) $|P(x)|^2 + (1-x^2) |Q(x)|^2 = 1, \forall x \in [-1, 1]$. Then, there exists a set of phase factors $\Phi := (\phi_0, \cdots, \phi_d) \in \RR^{d+1}$ such that
\begin{equation}
\label{eq:qsp-gslw}
\begin{split}
        U(x, \Phi) &= e^{\I \phi_0 Z} \prod_{j=1}^{d} \left[ W(x) e^{\I \phi_j Z} \right] = \left( \begin{array}{cc}
        P(x) & \I Q(x) \sqrt{1 - x^2}\\
        \I Q^*(x) \sqrt{1 - x^2} & P^*(x)
        \end{array} \right)
\end{split}
\end{equation}
where 
\begin{displaymath}
W(x) = e^{\I \arccos(x) X}=\left(\begin{array}{cc}{x} & {\I \sqrt{1-x^{2}}} \\ {\I \sqrt{1-x^{2}}} & {x}\end{array}\right).
\end{displaymath}
\end{theorem}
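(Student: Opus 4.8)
The plan is to prove this by induction on the degree parameter $d$, using the standard ``layer-stripping'' argument: given a valid pair $(P,Q)$ of degrees $(d,d-1)$ satisfying (1)--(3), I would exhibit a single phase $\phi_d\in\RR$ such that right-multiplying the target matrix
\[
M_d:=\left(\begin{array}{cc} P(x) & \I\sqrt{1-x^2}\,Q(x) \\ \I\sqrt{1-x^2}\,Q^*(x) & P^*(x)\end{array}\right)
\]
by $e^{-\I\phi_d Z}W(x)^{-1}$ produces a matrix of exactly the same shape whose entries are polynomials $(\wt P,\wt Q)$ satisfying hypotheses (1)--(3) with $d$ replaced by $d-1$. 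The induction hypothesis then supplies $(\phi_0,\dots,\phi_{d-1})$ realizing $(\wt P,\wt Q)$, and re-appending the stripped factor $W(x)e^{\I\phi_d Z}$ on the right recovers $M_d=e^{\I\phi_0 Z}\prod_{j=1}^d[W(x)e^{\I\phi_j Z}]$. The base case $d=0$ is immediate: (1)--(2) force $Q\equiv 0$ and $P$ constant, and (3) gives $|P|=1$, so $P=e^{\I\phi_0}$ and $M_0=e^{\I\phi_0 Z}$.

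For the inductive step I would first extract an algebraic consequence of (3). Since $|P(x)|^2+(1-x^2)|Q(x)|^2$ is a polynomial identically equal to $1$ on $[-1,1]$ (hence identically as a polynomial), comparing the coefficient of $x^{2d}$ --- which receives contributions only from the top terms of $PP^*$ and of $-x^2QQ^*$ --- yields $|p_d|^2=|q_{d-1}|^2$, where $p_d,q_{d-1}$ are the coefficients of $x^d$ in $P$ and of $x^{d-1}$ in $Q$. When $q_{d-1}\ne 0$ I would choose $\phi_d$ with $e^{2\I\phi_d}=p_d/q_{d-1}$ (possible, as the right side has modulus $1$). A direct computation of $M_d\,e^{-\I\phi_d Z}W(x)^{-1}$ using $W(x)^{-1}=e^{-\I\arccos(x)X}$ shows the result is again of the form of $M_d$ with
\[
\wt P(x)=e^{-\I\phi_d}xP(x)+e^{\I\phi_d}(1-x^2)Q(x),\qquad \wt Q(x)=-e^{-\I\phi_d}P(x)+e^{\I\phi_d}xQ(x).
\]
The parity hypotheses make ``every other'' coefficient of $\wt P$ and $\wt Q$ vanish automatically, and the potential top coefficients (of $x^{d+1}$ and $x^d$ in $\wt P$, of $x^d$ in $\wt Q$) cancel precisely because of the choice of $\phi_d$ together with $|p_d|=|q_{d-1}|$; this gives $\deg\wt P\le d-1$, $\deg\wt Q\le d-2$ with the required parities. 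Condition (3) for $(\wt P,\wt Q)$ then follows because $M_d\,e^{-\I\phi_d Z}W(x)^{-1}$ is a product of determinant-one unitaries on $[-1,1]$, hence lies in $\mathrm{SU}(2)$, and the $\mathrm{SU}(2)$ form forces $|\wt P|^2+(1-x^2)|\wt Q|^2\equiv 1$.

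The one situation needing separate handling is the degenerate case $q_{d-1}=0$ (hence also $p_d=0$): here the parities force $\deg P\le d-2$ and $\deg Q\le d-3$, so \emph{any} $\phi_d$ (e.g.\ $\phi_d=0$) already makes $\deg\wt P\le d-1$ and $\deg\wt Q\le d-2$ with the right parities, and (3) is preserved by the same $\mathrm{SU}(2)$ argument. I expect the main obstacle to be exactly this degree bookkeeping in the inductive step: one must be careful that stripping a layer lowers $\deg P$ and $\deg Q$ by exactly one, which relies on the leading-coefficient cancellation and hence on the identity $|p_d|=|q_{d-1}|$ pulled out of (3), and on correctly isolating the degenerate subcase where the stripped phase is unconstrained. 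Tracking parities throughout is what makes all these cancellations automatic; everything else is routine $2\times2$ matrix algebra together with the usual observation that a polynomial identity on an interval is a polynomial identity.
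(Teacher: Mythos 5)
Your proof is correct, and it is precisely the standard ``layer-stripping'' (or carving) induction used to prove Theorem~4 of Gily\'en, Su, Low, and Wiebe. The paper itself does not prove this theorem --- it merely cites it and records the statement --- so there is no distinct ``paper proof'' to compare against; your argument would serve as a self-contained proof. The one detail worth double-checking in a polished write-up is the degenerate case $q_{d-1}=0$: you correctly observe it forces $p_d=0$ via the top-coefficient identity, but you should also note that the choice $e^{2\I\phi_d}=p_d/q_{d-1}$ in the generic case determines $\phi_d$ only modulo $\pi$, and either choice works since both top-coefficient cancellations use the same equation --- otherwise the degree bookkeeping, parity tracking, and the $\mathrm{SU}(2)$ argument for preserving condition~(3) are all sound.
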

The phase factors in the theorem and those used in the quantum circuit in the main text is related by the following relation
\begin{equation}
\varphi_i=\begin{cases}
\phi_0+\frac{\pi}{4}, & i=0,\\
\phi_i+\frac{\pi}{2}, & 1\le i\le d-1,\\
\phi_d+\frac{\pi}{4}, & i=d.\\
\end{cases}
\label{eqn:phi_varphi_relation}
\end{equation}

In order to find the phase factors, the standard practice follows a two-step procedure. 
We first identify the approximate polynomial $P(x)$.
Then the phase factors for $P(x)$ are computed following a recursive relation described in \cite[Theorem 4]{GilyenSuLowEtAl2019}.
In the case of the Hamiltonian simulation benchmark, it is highly nontrivial to find an approximate polynomial $P(x)$ satisfying the conditions in \cref{thm:qsp} while approximating the function $e^{-\I t x^2}$ sufficiently well.
Therefore we cannot follow the standard procedure to evaluate the phase factors.

On the other hand, the recently developed optimization based method \cite{DongMengWhaleyEtAl2021} provides an alternative route to streamline this process. 
Instead of following a two-step procedure, the optimization based method allows one to find both the approximate polynomial and the phase factor sequence in a single step.
Note that the optimization procedure in \cite{DongMengWhaleyEtAl2021} only addresses the case when the target function is real. 
Here the target function $s_t(x)$ is complex.
Below we present a modified optimization procedure to find the phase factor sequence for complex polynomials.

Specifically, given an arbitrary set of phase factors $\Phi \in \RR^{d+1}$, \cref{thm:qsp} defines a mapping $\RR^{d+1} \rightarrow \CC[x]$ giving a complex polynomial of degree at most $d$ via $P(x,\Phi) := \braket{0 | U(x, \Phi) | 0}$. Note that given the parity constraint, the number of (complex) degrees of freedom is $\wt{d}$ where $\wt{d} := \lceil \frac{d+1}{2} \rceil$. Therefore, to fix the polynomial, one needs to sample $\wt{d}$ points given the polynomial is complex valued. In practice, to ensure numerical stability, we sample the function on $x_k=\cos\left(\frac{2k-1}{4\wt{d}}\pi\right), k = 1, \cdots, \wt{d}$, which are the positive Chebyshev nodes of $T_{2\wt{d}}(x)$. The optimization based method view $P$ as a nonlinear approximation ansatz. Define the objective function as
\begin{equation}
    F(\Phi) := \frac{1}{\wt{d}} \sum_{k=1}^{\wt{d}} \abs{P(x_k, \Phi) - s_t(x_k)}^2.
\end{equation}
Taking the $(2\pi)$-periodicity into account, the optimization problem is
\begin{equation}\label{eqn:optprob}
    \Phi^* = \argminl_{\Phi \in [-\pi,\pi)^{d+1}} F(\Phi).
\end{equation}

The optimization problem is numerically solved by a quasi-Newton method.
\cref{tab:numerical-precision} describes the approximation error for polynomials  measured by $\norm{P(x,\Phi^*) - s_t(x)}_\infty$ at simulation time $t=1$, for polynomial degrees between $6$ and $20$. 
When the polynomial degree is $20$, the approximation error is as small as $10^{-8}$, which demonstrates the effectiveness of the optimization based method.

\begin{table}[htbp]
    \centering
    \begin{tabular}{@{} *{2}{c} @{}}\midrule
    degree $d$ &        approximation error\\\midrule
    $6$ &       $5.543\times 10^{-03}$\\
    $8$ &       $5.805\times 10^{-04}$\\
    $10$&       $5.230\times 10^{-06}$\\
    $14$&       $3.332\times 10^{-06}$\\
    $18$&       $9.535\times 10^{-08}$\\
    $20$&       $1.107\times 10^{-08}$\\\midrule
    \end{tabular}
    \caption{Approximation error $\norm{P(x,\Phi^*) - e^{-\I tx^2}}_\infty$ at time $t=1$ with different polynomial degrees $d$.}
    \label{tab:numerical-precision}
\end{table}

According to \cref{fig:supremacy-region-HS}, there exists an `optimal' simulation time $t^\text{opt} = 4.8096$, for which the threshold fidelity $\alpha^*(t^\text{opt})\approx 0$ (the derivation is in  \cref{sec:topt_derive}). \cref{tab:phase-factor-t-opt} describes the phase factor sequences that can be directly used in \cref{fig:qsvt_circuit} to perform Hamiltonian simulation at time $t^\text{opt}$.
In order to reach low ($3.0\times 10^{-2}$), medium ($9.4\times 10^{-5}$), and high ($1.6\times 10^{-6}$) accuracy, the degrees of the polynomial found by the optimization procedure are $10,18,26$, respectively.
\cref{fig:app-err-t-top} further shows the pointwise approximate error on the interval $[0,1]$ (the error on $[-1,0]$ is the same due to the even parity).
Compared to \cref{tab:numerical-precision}, in order to reach precision $\epsilon=3.3\times 10^{-6}$ at simulation time $t=1$, the polynomial degree still needs to be $14$.
So even though $t^\text{opt}$ is nearly $5$ times larger,  the polynomial degree only increases by less than twofold to reach similar accuracy. Since $t^\text{opt}$ is still relatively small, this does not violate the `no-fast-forwarding' theorem of Hamiltonian simulation \cite{BerryAhokasCleveEtAl2007}.

\begin{table}[htbp]
\centering
\begin{tabular}{@{} *{7}{c} @{}}\midrule
\multicolumn{7}{c@{}}{approximation error $\norm{P(x,\Phi) - e^{-i t x^2}}_\infty = 3.027\times 10^{-2}$}\\\midrule
$\varphi_{0}$ & $\varphi_{1}$ & $\varphi_{2}$ & $\varphi_{3}$ & $\varphi_{4}$ & $\varphi_{5}$ & $\varphi_{6}$ \\
-2.7731963 & 2.7942520 & -1.5707963 & 2.5930970 & -1.5707963 & -0.6434012 & -1.5707963 \\\midrule
$\varphi_{7}$ & $\varphi_{8}$ & $\varphi_{9}$ & $\varphi_{10}$ &  &  &  \\
2.5930970 & -1.5707963 & 2.7942520 & -2.7731963 &  &  &  \\\midrule
\end{tabular}

\bigskip
\begin{tabular}{@{} *{7}{c} @{}}\midrule
\multicolumn{7}{c@{}}{approximation error $\norm{P(x,\Phi) - e^{-i t x^2}}_\infty = 9.406\times 10^{-5}$}\\\midrule
$\varphi_{0}$ & $\varphi_{1}$ & $\varphi_{2}$ & $\varphi_{3}$ & $\varphi_{4}$ & $\varphi_{5}$ & $\varphi_{6}$ \\
-2.7731963 & 2.8229351 & -1.5707963 & -2.5716144 & -1.5707963 & -3.1056796 & -1.5707963 \\\midrule
$\varphi_{7}$ & $\varphi_{8}$ & $\varphi_{9}$ & $\varphi_{10}$ & $\varphi_{11}$ & $\varphi_{12}$ & $\varphi_{13}$ \\
-1.1677625 & 1.5707963 & -0.6437954 & 1.5707963 & -1.1677625 & -1.5707963 & -3.1056796 \\\midrule
$\varphi_{14}$ & $\varphi_{15}$ & $\varphi_{16}$ & $\varphi_{17}$ & $\varphi_{18}$ &  &  \\
-1.5707963 & -2.5716144 & -1.5707963 & 2.8229351 & -2.7731963 &  &  \\\midrule
\end{tabular}

\bigskip
\begin{tabular}{@{} *{7}{c} @{}}\midrule
\multicolumn{7}{c@{}}{approximation error $\norm{P(x,\Phi) - e^{-i t x^2}}_\infty = 1.644\times 10^{-6}$}\\\midrule
$\varphi_{0}$ & $\varphi_{1}$ & $\varphi_{2}$ & $\varphi_{3}$ & $\varphi_{4}$ & $\varphi_{5}$ & $\varphi_{6}$ \\
-1.5893341 & -0.3207550 & 2.8668325 & -2.9662972 & -1.1921175 & -0.4528806 & 1.5270366 \\\midrule
$\varphi_{7}$ & $\varphi_{8}$ & $\varphi_{9}$ & $\varphi_{10}$ & $\varphi_{11}$ & $\varphi_{12}$ & $\varphi_{13}$ \\
1.6658052 & -0.2379487 & -2.9130657 & 0.3245889 & 0.7863552 & -1.3306612 & -0.2863103 \\\midrule
$\varphi_{14}$ & $\varphi_{15}$ & $\varphi_{16}$ & $\varphi_{17}$ & $\varphi_{18}$ & $\varphi_{19}$ & $\varphi_{20}$ \\
-1.3306612 & 0.7863552 & 0.3245889 & -2.9130657 & -0.2379487 & 1.6658052 & 1.5270366 \\\midrule
$\varphi_{21}$ & $\varphi_{22}$ & $\varphi_{23}$ & $\varphi_{24}$ & $\varphi_{25}$ & $\varphi_{26}$ &  \\
-0.4528806 & -1.1921175 & -2.9662972 & 2.8668325 & -0.3207550 & -1.5893341 &  \\\midrule
\end{tabular}
\caption{Phase factors for Hamiltonian simulation at time $t^\text{opt} = 4.8096$. The table lists three sets of phase factors with different approximation errors.}
\label{tab:phase-factor-t-opt}
\end{table}

\begin{figure*}[htbp]
    \centering
    \vspace{-10pt}
    \subfigure[]{
        \includegraphics[width=.3\textwidth]{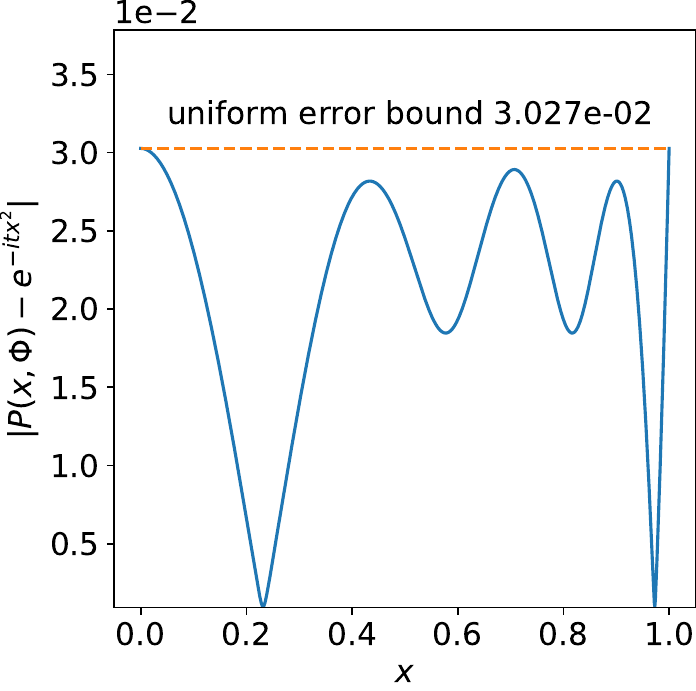}
    }
    \subfigure[]{
        \includegraphics[width=0.3\textwidth]{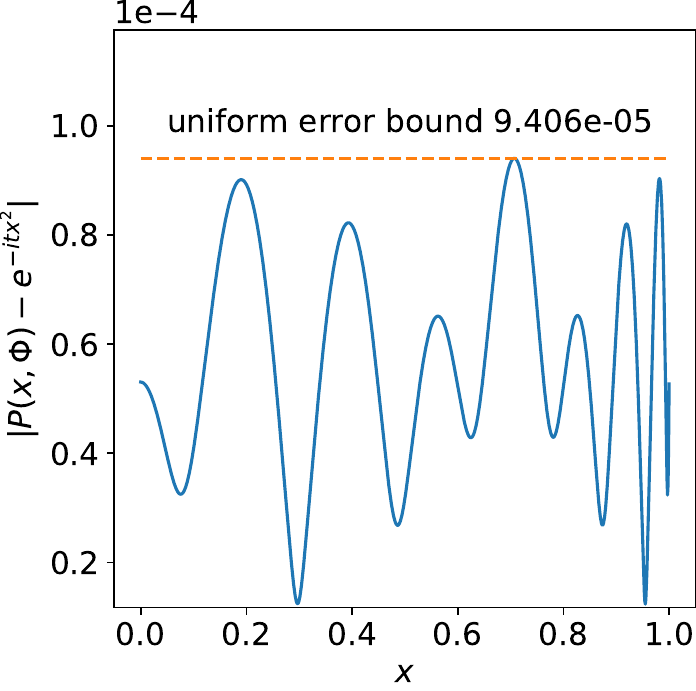}
    }
    \subfigure[]{
        \includegraphics[width=0.3\textwidth]{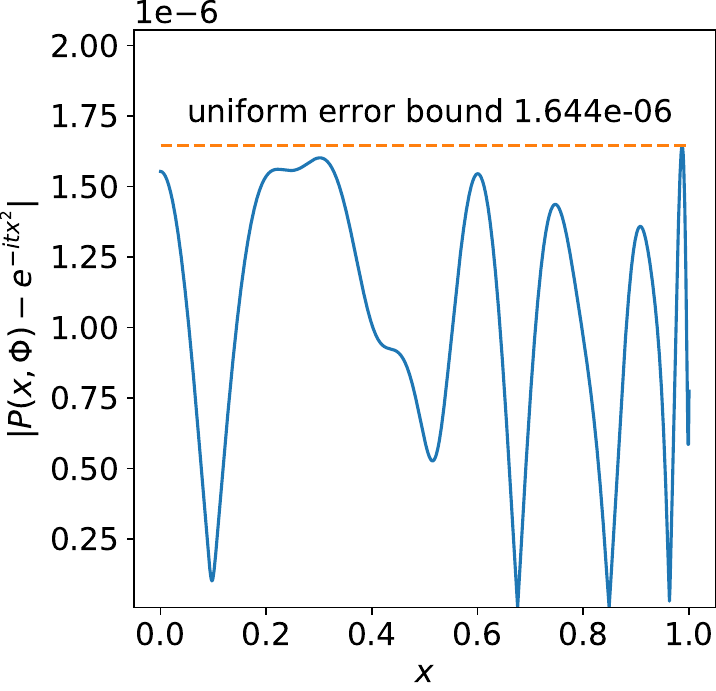}
    }
    \vspace{-10pt}
    \caption{Point-wise approximation error of phase factors for Hamiltonian simulation at time $t^\text{opt} = 4.8096$. The corresponding sets of phase factors are listed in \cref{tab:phase-factor-t-opt}. }
    \label{fig:app-err-t-top}
\end{figure*}

\section{Noise model}\label{sec:error_model}
In the experimental setting, the density matrix after the application of the quantum circuit $\circqsvt_{f, U}$ can be written as
\begin{equation}\label{eqn:rho_channel}
    \varrho_\expl = \alpha \ket{\psi_{f, U}}\bra{\psi_{f, U}} + (1-\alpha) \chi_{f, U}.
\end{equation}
Here, $\ket{\psi_{f,U}} := \circqsvt_{f, U} \ket{0} \otimes \ket{\psi}$ is the ideal quantum state generated by the exact implementation of the quantum circuit, and the operator $\chi_{f, U}$ is due to the noise channel. 
Under the global depolarizing noise model for the mQSVT circuit, we have $\chi_{f, U}=I/2^{n+1}$ and the diagonal entries of~\cref{eqn:rho_channel} then yield the probabilities of
~\cref{eqn:succ-prob-fidelity}. 

However, in practice $\chi_{f, U}$ may not be a scaled identity matrix, or even a diagonal matrix.
If so, the system linear cross-entropy score (sXES) in \cref{eqn:sXES} should be expressed more generally as
\begin{equation}
\text{sXES}(U) = \sum_{x\ne 0^n} p(U, x) p_\expl(U,x)=
\alpha \sum_{x\ne 0^n} p^2(U, x) + (1-\alpha) \sum_{x\ne 0^n} p(U, x)\braket{x|\chi_{f, U}|x}.
\end{equation}
Now we can write 
\begin{equation}\label{eqn:sxes_noise}
\sum_{x\ne 0^n}p(U, x)\braket{x|\chi_{f, U}|x}=\frac{1}{2^{n+1}}\sum_{x\ne 0^n}p(U, x)+\epsilon_{\chi},
\end{equation}
where $\epsilon_\chi$ represents the effects of correlations between noise channels. Under the global depolarized noise channel we have $\epsilon_{\chi}=0$.
Refs.~\cite{BoixoIsakovSmelyanskiyEtAl2018} and~\cite[Supplementary information IV.B]{AruteAryaBabbushEtAl2019} argue that when the noise $\braket{x|\chi_{f, U}|x}$ is uncorrelated with the signal $p(U, x)$, the statistical fluctuation error $\epsilon_{\chi}$ can be of higher order, as a result of the concentration of measure phenomenon and Levy's lemma in high dimensional spaces~\cite{Ledoux2001}.

On the other hand, even though each component $U$ in the mQSVT circuit can be taken to be the same as the random circuit in the supremacy experiment of Ref.~\cite{AruteAryaBabbushEtAl2019}, the overall mQSVT circuit exhibits additional structure due to the presence of the $R_z$ (phase) gates and the multiple repetitions of each $U$ and $U^{\dag}$ pair. Thus 
the global depolarized error model may not hold in practice.
To overcome this difficulty, the randomized compilation method in Ref.~\cite{WallmanEmerson2016} can be applied to the mQSVT circuits to convert all noise in the circuit into stochastic Pauli errors. The theory of error randomization can then be used to guarantee that the effect of these Pauli errors can be accurately modeled by global depolarization~\cite{BoixoIsakovSmelyanskiyEtAl2018,ProctorCarignanDugas2019}.
Specifically, the method of~\cite{WallmanEmerson2016} divides a universal gate set into a set of `easy' gates and a set of `hard' gates and then uses randomization applied to the easy gates to twirl the errors on the hard gates. Ref.~\cite{WallmanEmerson2016} shows that for a set of  elementary gates consisting of Clifford plus T gates, making the easy set equal to one-qubit Clifford operations allows one to tailor general multi-qubit noise into Pauli noise on the hard gates. In the mQSVT circuit, thanks to the full flexibility in generating the random circuit $U$, we may explicitly construct $U$ so that most gates are taken directly from an easy gate set. This enables the twirling operations to be applied independently within all of the $U,U^{\dag}$  blocks and thereby reduces the noise in each layer of gates in an mQSVT circuit into stochastic Pauli channels. Since the $U$ circuits consist of random layers that generate a 2-design, multiple layers of this circuit implement an approximate 2-design. A 2-design twirls any errors into global depolarization (see also~\cite{MagesanGambettaEmerson2011,GellerZhou2013,CaiXuBenjamin2020}), and so the overall error on a single $U$ or $U^{\dag}$ can be approximated by a global depolarizing channel~\cite{BoixoIsakovSmelyanskiyEtAl2018}. The repeated structure of $U$ and $U^{\dag}$ subroutines in a mQSVT means that an error that is randomized by $U$ is then `de-randomized' by $U^{\dag}$. To mitigate this effect, it is possible to apply the randomized compilation method of Ref.~\cite{WallmanEmerson2016} to each $U,U^{\dag}$ independently. The impact of this type of effects has been studied in the setting of randomized mirror circuit benchmarks~\cite{ProctorCarignanDugas2019,ProctorSeritanRudingerEtAl2021,ProctorRudingerYoungEtAl2022}, which also use circuits with a $U,U^{\dag}$ structure and employ a form of randomized compilation. These studies have explicitly shown that with this approach the overall error can still be modeled by a global depolarizing channel.

\section{Structure of the probability space of measuring noisy random quantum circuits and sXES}\label{sec:structure_prob}

There are two sources of randomness when measuring noisy random quantum circuits. The first is due to the random choice of $U$ with probability density $\bP(U)$. The second is due to the Monte Carlo nature of the quantum measurement.
Specifically, given the choice of $U$ and a noisy implementation of the quantum circuit, the probability to obtain $0x$ as the measurement outcome is $p_\expl(U,x) := \braket{0x|\varrho_\expl|0x}$. 
The joint probability of $U$ and the measurement outcome $0x$ is
\begin{equation*}
    \bP_\expl(U,x) = p_\expl(U,x) \bP(U).
\end{equation*}
Here for simplicity, we only focus on the measurement result whose ancilla qubit is measured with outcome $0$. 
When $\bP(U)$ is given by the Haar measure, the noise channel is depolarized, and the noisy (experimental) bitstring probability is
\begin{equation}
p_\expl(U,x) = \alpha p(U,x) + \frac{1-\alpha}{2N},
\end{equation}
which is the convex combination of the exact bitstring probability and the uniform distribution \cite{BoixoIsakovSmelyanskiyEtAl2018}. 

The information of the circuit fidelity can then be encoded in the experimental average of various quantities. For example, the bitstring probability for nonzero bitstrings is given by
\begin{equation}
    \exptwrt{\expl}{p(U,x) ; x\ne 0^n} = \expt{\sum_{x\ne 0^n} p(U,x) p_\expl(U,x)} = \sum_{x\ne 0^n} \alpha \expt{p(U,x)^2} + \frac{1-\alpha}{2N} \expt{p(U,x)}.
\label{eqn:exp_pu}
\end{equation}
\cref{eqn:exp_pu} connects quantities evaluated from quantum experiments and classical computation on the left hand side and those from classical computations on the right hand side. 
The left-hand side is given by the system linear cross-entropy score (sXES) in \cref{eqn:sXES}, which can be evaluated from multiplying the bitstring frequency $p_\expl(U,x)$ obtained from the quantum experiment and the bitstring probability $p(U, x)$ computed classically. 
The quantities on the right-hand side can be evaluated fully classically. The circuit fidelity $\alpha$ is then the only unknown and can be solved for by substituting the quantum experimental and classically computed quantities into \cref{eqn:circuit-fidelity-XES} of the main text.

\section{Estimating circuit fidelity from quantum unitary evolution score}\label{sec:ques_fidelity}

The experimental average of the probability of measuring the ancilla qubit with outcome $0$ is 
\begin{equation}
    \exptwrt{\expl}{P(U)} = \expt{\sum_x P(U) p_\expl(U,x)} = \expt{P(U) P_\expl(U)} = \alpha \expt{P(U)^2} + \frac{1-\alpha}{2} \expt{P(U)}.
\label{eqn:PU_alpha}
\end{equation}
Here $P(U) := \sum_x p(U,x)$, and $P_\expl(U)$ is the probability which can be approximately determined by the bit frequency of the measurement outcome in the experiment. 
Rearranging the terms in \cref{eqn:PU_alpha}, the circuit fidelity can be alternatively estimated via 
\begin{equation}
        \alpha = \frac{\expt{P(U) P_\expl(U)} - \frac{1}{2} \expt{P(U)}}{\expt{P(U)^2} - \frac{1}{2} \expt{P(U)}}.
\end{equation}

From \cref{eqn:PU_bound}, we have $P(U) \in [1-2\epsilon, 1]$. Hence a lower and upper bound on the fidelity follows:
\begin{equation}\label{eqn:est-fid-HS-ideal}
\begin{split}
    \underline{\alpha} &:= \frac{2 \left( 1 - 2\epsilon \right) \expt{P_\expl(U)} - 1}{1+2\epsilon} \le \alpha \le \frac{2 \expt{P_\expl(U)} - \left(1-2\epsilon\right)}{1 - 8 \epsilon} =: \overline{\alpha}.
\end{split}
\end{equation}
The difference between the upper and lower bound is 
\begin{equation}
\overline{\alpha} - \underline{\alpha} \le 16 \epsilon + \Or(\epsilon^2).
\end{equation}
Therefore, $\lim_{\epsilon\to 0}(\overline{\alpha} - \underline{\alpha})=0$ and the derived bounds are tight. Let us choose the form of the estimate as $\epsilon$-independent
\begin{equation}\label{eqn:app:fid-estimate-param-indep}
    \alpha_\text{QUES} := 2 \expt{P_\expl(U)} - 1 \in [\underline{\alpha}, \overline{\alpha}].
\end{equation}
Then, $\abs{\alpha_\text{QUES} - \alpha} \le \overline{\alpha} - \underline{\alpha} \le 16 \epsilon + \Or(\epsilon^2)$. Furthermore, the estimate can be determined using only the experimentally measurable quantity $P_\expl(U)$ and is independent of the classical computation of $P(U)$, which may be hard to evaluate for large $n$. This remarkable fact, namely the evaluation of circuit fidelity without any classical computation, arises from the approximate implementation of Hamiltonian simulation of the overall circuit. Since only one ancilla qubit is measured, the QUES defined in \cref{eqn:app:fid-estimate-param-indep} cannot entirely capture whether the circuit is implemented correctly. However, when the assumption that the noise channel is depolarized and when the polynomial approximation to $s_t(x)$ is sufficiently accurate, $\alpha_\text{QUES}$ provides a very good estimate to the circuit fidelity.

\section{Algorithm for constructing random quantum circuits and  numerical convergence to Haar measure}\label{sec:converge-to-Haar}

In order to theoretically analyze the circuit fidelity, we need the additional assumption that $\bP(U)$ is the Haar measure.
This has the advantage that several terms in \cref{eqn:circuit-fidelity-XES} can be evaluate analytically. Using the Haar measure, the statistics of an ensemble of random Hamiltonians is much simplified and can be computed by the statistics of the truncation matrix of Haar unitaries \cite{ZyczkowskiSommers2000,PetzReffy2004,MastrodonatoTumulka2007,Collins2003,CollinsSniady2006}.
Details of the statistics are given in \cref{sec:stat-inherit-from-Haar}.
Furthermore, if $U$ is Haar-distributed, then the noise effect of directly sampling $U$ is well captured by a fully depolarized error channel, due to the nearly maximal entanglement in the output state~\cite{BoixoIsakovSmelyanskiyEtAl2018,AruteAryaBabbushEtAl2019}. The need to choose an appropriate circuit depth $\ell$ such that the circuit statistics approximate those of Haar unitaries motivates an investigation of the statistics of Haar random quantum circuits of finite number of qubits.

\begin{algorithm}[htbp]
\label{alg:rqc}
\hrule
\caption{Constructing random quantum circuits}
\begin{algorithmic}[0]
\STATE \textbf{Input:} Coupling map $G = \langle V, E\rangle$ where $V$ is the set of $n$ qubits, $E$ is the set of qubit pairs on which CNOT is available, basic gates $\Gamma = \{ \mathrm{U1}, \mathrm{U2}, \mathrm{U3}, \mathrm{CNOT} \}$, the number of total one-qubit gates $g_1$, and the density of one-qubit gates $p_1 \in (0, 1)$.
\vspace{1em}
\STATE Set the number of two-qubit gates to $g_2 = \lceil \frac{1-p_1}{2p_1} g_1 \rceil$.
\STATE Set the maximal number of two-qubit gates in each layer to $y_2 = \lceil \frac{1-p_1}{2} n \rceil$.
\STATE Set $m_1 = m_2 = 0$, initialize an empty quantum circuit $\mc{C}$.
\WHILE{$m_1 \le g_1$ \textbf{and} $m_2 \le g_2$}
\STATE Draw $x_2 \le y_2$ pairs of qubits from $E$ so that each pair $(q_1, q_2)$ and its permutation $(q_2, q_1)$ are not selected in the previous layer. The choice of $x_2$ also satisfies $m_2 + x_2 \le g_2$.
\STATE Draw $x_1 = \mathrm{min}\{n - 2x_2, g_1 - m_1\}$ one-qubit gates uniformly at random from $\Gamma\backslash\{\mathrm{CNOT}\}$ and act them on the rest of qubits in this layer.
\STATE Update the numbers of one- and two-qubit gates, $m_1 \leftarrow m_1 + x_1$ and $m_2 \leftarrow m_2 + x_2$.
\ENDWHILE

\IF{$m_1 < g_1$}
\STATE Append layers of random $g_1 - m_1$ one-qubit gates sampled uniformly at random from $\Gamma\backslash\{\mathrm{CNOT}\}$.
\ELSIF{$m_2 < g_2$}
\STATE Append layers of $g_2 - m_2$ CNOT gates acting on random operands.
\ENDIF
\STATE \textbf{Return:} A random quantum circuit $\mc{C}$ with $g_1$ one-qubit gates and $g_2$ two-qubit gates.
\end{algorithmic}
\end{algorithm}
We first construct random quantum circuits by using the algorithm given in \cref{alg:rqc}. It follows a similar recipe in Ref. \cite{DongLin2021}. We set the basic one-qubit gates to U1, U2 and U3 gates. Up to a global phase factor, the U3 gate is
\[
\mathrm{U}_3(\theta,\phi,\lambda) = R_z(\phi+3\pi)R_x(\pi/2)R_z(\theta+\pi)R_x(\pi/2)R_z(\lambda),
\]
which is a generic single-qubit operation parameterized by three Euler angles. The U1 and U2 gates are defined by restricting to one or two Z-rotation angles respectively, i.e. 
\[
\mathrm{U}_1(\lambda) = R_z(\lambda), \quad \mathrm{U}_2(\phi, \lambda) = R_z(\phi + \pi/2) R_x(\pi/2) R_z(\lambda-\pi/2).
\]
Taken together with the CNOT gate, these form a continuously parameterized gate set that is universal.

Although we specify the choice of one-qubit gates and the use of the CNOT gate here, \cref{alg:rqc} can be directly generalized to an arbitrary basic gate set. The random quantum circuit generated by the algorithm respects the architecture of a quantum computer. In practice, we set the density of one-qubit gates to $p_1 = 0.5$. Then, for an $n$-qubit random quantum circuit with $\ell$ layers, the number of one-qubit gates is $g_1 = \frac{\ell n}{2}$ and that of two-qubit gates is $g_2 = \frac{\ell n}{4}$. 

To measure the numerical convergence of random circuits to the Haar measure, we first summarize some of the statistical properties of the Haar measure.
Given an $n$-qubit Haar-distributed unitary $U$, we denote $p_{ij} := \abs{U_{ij}}^2$. As a special case of the more general \cref{thm:pdf-matrix} (to be presented in \cref{sec:stat-inherit-from-Haar}), the $p_{ij}$'s are identically Beta-distributed.
\begin{theorem}\label{thm:Haar-succ-prob}
    The probability density of $p_{ij}$ is $\mathrm{Beta}(1,N-1)$,
    \begin{equation*}
        \bP(p_{ij}) = (N-1) (1-p_{ij})^{N-2} \II_{0 \le p_{ij} \le 1}.
    \end{equation*}
\end{theorem}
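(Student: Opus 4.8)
The plan is to reduce the statement to a classical fact about a uniformly random complex unit vector. By the bi-invariance of the Haar measure, if $P,Q$ are permutation matrices chosen so that $(PUQ)_{00}=U_{ij}$, then $PUQ$ has the same law as $U$; hence the distribution of $p_{ij}=\abs{U_{ij}}^2$ does not depend on $i$ or $j$, and it suffices to treat $\abs{U_{00}}^2$. Right-invariance moreover shows that the first column $v:=Ue_0\in\CC^N$ is uniformly (rotation-invariantly) distributed on the complex unit sphere, and $U_{00}=v_0$. I would then invoke the Gaussian model of the uniform sphere measure: taking i.i.d.\ standard complex Gaussians $g_0,\dots,g_{N-1}$, so that the $\abs{g_k}^2$ are i.i.d.\ with the $\mathrm{Gamma}(1,1)$ (exponential) law, one has $v=g/\norm{g}_2$ distributed exactly uniformly on the sphere, so that
\begin{equation*}
p_{00}=\abs{v_0}^2=\frac{\abs{g_0}^2}{\sum_{k=0}^{N-1}\abs{g_k}^2}=\frac{X}{X+Y},
\end{equation*}
with $X\sim\mathrm{Gamma}(1,1)$ independent of $Y:=\sum_{k=1}^{N-1}\abs{g_k}^2\sim\mathrm{Gamma}(N-1,1)$. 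The standard gamma--beta calculus then gives $p_{00}\sim\mathrm{Beta}(1,N-1)$, whose density is $B(1,N-1)^{-1}(1-p)^{N-2}=(N-1)(1-p)^{N-2}$ on $[0,1]$, using $B(1,N-1)=\Gamma(1)\Gamma(N-1)/\Gamma(N)=1/(N-1)$; this is the asserted formula. Equivalently, the squared moduli $(\abs{v_0}^2,\dots,\abs{v_{N-1}}^2)$ are uniform on the probability simplex, i.e.\ $\mathrm{Dirichlet}(1,\dots,1)$, and one reads off the one-coordinate marginal.

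A self-contained alternative avoids the Gaussian representation entirely. Writing $v_0=\rho e^{\I\phi}$, the level set $\{\abs{v_0}=\rho\}$ inside $S^{2N-1}$ is the product of the circle of radius $\rho$ with the sphere of radius $\sqrt{1-\rho^2}$ in $\CC^{N-1}$, and pushing the uniform surface measure forward through $v\mapsto\abs{v_0}$ (via the coarea formula or a direct Jacobian computation) produces a density proportional to $\rho(1-\rho^2)^{N-2}$: the factor $\rho$ comes from the circle swept by the phase, and the exponent $N-2$ from the transverse sphere's area together with the radial Jacobian. Substituting $p=\rho^2$ (so $dp=2\rho\,d\rho$) cancels the factor $\rho$ and leaves a density proportional to $(1-p)^{N-2}$ on $[0,1]$; the normalization $\int_0^1(1-p)^{N-2}\,dp=1/(N-1)$ forces the constant $N-1$.

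The only genuinely delicate point in either route is the change-of-variables bookkeeping: in the Gaussian route, verifying that $g/\norm{g}_2$ realizes the uniform sphere law and that the gamma--beta identity applies with shape parameters $1$ and $N-1$; in the elementary route, correctly tracking the Jacobian factor $\rho$ from integrating out the phase and the surface-area factor $(1-\rho^2)^{(2N-3)/2}$. Everything else is routine. I would in fact expect the paper's proof simply to obtain this as the $i=j$ (single-column) specialization of the general density formula for submatrices of Haar unitaries, recorded as \cref{thm:pdf-matrix}.
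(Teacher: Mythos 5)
Your proposal is correct, and it supplies a self-contained derivation where the paper instead leans on a cited result. Your main route realizes the first column as $g/\norm{g}_2$ for a complex Gaussian vector $g$, reducing $p_{00}$ to $X/(X+Y)$ with $X\sim\mathrm{Gamma}(1,1)$ and $Y\sim\mathrm{Gamma}(N-1,1)$, and the gamma--beta identity then gives $\mathrm{Beta}(1,N-1)$; your coarea alternative gets the density in $\rho=\abs{v_0}$ proportional to $\rho(1-\rho^2)^{N-2}$ directly from the geometry of $S^{2N-1}$. Both steps (including the $\abs{\nabla_T\rho}=\sqrt{1-\rho^2}$ coarea factor and the $dp=2\rho\,d\rho$ cancellation) are right. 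The paper's proof is shorter because it invokes \cref{thm:pdf-matrix} (the Collins density formula for a submatrix of a Haar unitary), specialized to the $1\times 1$ block $a=U_{00}$, to obtain $\bP(a)\propto(1-\abs{a}^2)^{N-2}$ immediately; it then does the same polar-coordinate change of variables $a=re^{\I\theta}$ with $r\,\rd r\propto\rd p_{00}$ and integrates out $\theta$, and extends to general $(i,j)$ by bi-invariance exactly as you do. Your closing guess about the paper's route is therefore essentially right, though the wording ``$i=j$ (single-column) specialization'' is a bit off --- the specialization is to the $1\times 1$ upper-left matrix element, not to a column and not to $i=j$. The trade-off is clear: your Gaussian argument is elementary and independent of the random-matrix machinery, while the paper's version is economical given that \cref{thm:pdf-matrix} is needed anyway for the joint singular-value distribution later in the appendix.
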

\begin{proof}
    Let the submatrix of interest be the upper left $1$-by-$1$ block, namely, a single matrix element $a := U_{00}$. Note that $p_{00} := \abs{a}^2$. Then, \cref{eqn:matrix-distribution-general} indicates that the probability density of $a$ is
    \begin{equation*}
        \bP(a) \propto \left(1-p_{00}\right)^{N-2}\II_{0 \le p_{00} \le 1}.
    \end{equation*}
    The polar decomposition of the complex number $a = r e^{\I \theta}$ yields the Jacobian $\rd a = r \rd r \rd \theta \propto \rd p_{00} \rd \theta$. Then, integrating with respect to $\rd \theta$, the marginal distribution of $p_{00}$ is
    \begin{equation*}
        \bP(p_{00}) = (N-1) \left(1-p_{00}\right)^{N-2}\II_{0 \le p_{00} \le 1}.
    \end{equation*}
    This is the $\mathrm{Beta}(1,N-1)$ distribution. When $i \ne 0$ or $j \ne 0$, let $K_1$ be the matrix permuting the $i$-th row and the $0$-th row by left multiplication, and let $K_2$ be the matrix permuting the $j$-th column and the $0$-th column by right multiplication. Then, $\wt{U} := K_1 U K_2$ is Haar distributed by the bi-invariance of the Haar measure. Furthermore, $\wt{U}_{00} = U_{ij}$. Therefore, the previous proof shows that $p_{ij}$ is also $\mathrm{Beta}(1,N-1)$ distributed.
\end{proof}
 
Note that in the limit $N \gg 1$, the distribution of $p_{ij}$ is well approximated by the exponential distribution $\text{Exp}(N)$, a.k.a. the Porter-Thomas distribution derived in \cite{BoixoIsakovSmelyanskiyEtAl2018}. The statistics follows straightforward computation by integrating with respect to the probability density.
\begin{theorem}\label{thm:stat-Haar}
    Let $M_k := \sum_{i=0}^{N-1} p_{ij}^k$ be the $k$-th moment,  $S := \sum_{i=0}^{N-1} - p_{ij} \ln(p_{ij})$ be the entropy. Their averages with respect to the Haar distribution take the form
    \begin{equation*}
        M_k^\mathrm{Haar} := \expt{M_k} = \prod_{i=1}^{k-1} \frac{1+i}{N+i},\quad S^\mathrm{Haar} := \expt{S} = \sum_{i=2}^N i^{-1}.
    \end{equation*}
    The variance of the $k$-th moment $V_k^\mathrm{Haar} := \sum_{i=0}^{N-1} \mathrm{Var}(p_{ij}^k)$ is
    \begin{equation*}
        V_k^\mathrm{Haar} = \left(\frac{1}{N}\binom{2k}{k} + \frac{N-1}{N}\right) \prod_{i=k}^{2k-1} \frac{N-k+i}{N+i} - 1.
    \end{equation*}
\end{theorem}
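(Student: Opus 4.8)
The plan is to reduce all three identities to one- and two-coordinate moments of the squared entries in a single column of $U$. Fixing a column index $j$, left-invariance of the Haar measure makes $(U_{0j},\dots,U_{N-1,j})$ a uniformly random unit vector in $\CC^N$, so $(p_{0j},\dots,p_{N-1,j})$ is distributed as $\mathrm{Dirichlet}(1,1,\dots,1)$, the uniform law on the probability simplex. In particular each $p_{ij}$ is $\mathrm{Beta}(1,N-1)$ (this recovers \cref{thm:Haar-succ-prob}), any pair $(p_{0j},p_{1j})$ has density proportional to $(1-p_{0j}-p_{1j})^{N-3}$ on its triangle (i.e.\ $\mathrm{Dirichlet}(1,1,N-2)$), and every joint monomial moment is given by the Dirichlet formula $\expt{\prod_i p_{ij}^{m_i}}=\Gamma(N)\prod_i\Gamma(1+m_i)/\Gamma(N+\sum_i m_i)$. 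The pair density can alternatively be read off from the general block-density result invoked for \cref{thm:pdf-matrix}, using the same bi-invariance trick that appears in the proof of \cref{thm:Haar-succ-prob}.

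Given this distributional input, I would compute each formula as a short integral. \emph{Moments:} by linearity and exchangeability $\expt{M_k}=N\,\expt{p_{0j}^k}$, and $\expt{p_{0j}^k}=(N-1)B(k+1,N-1)=k!\,(N-1)!/(N+k-1)!$ is a one-line Beta integral; simplifying and writing $k!=\prod_{i=1}^{k-1}(1+i)$ gives $\expt{M_k}=\prod_{i=1}^{k-1}\frac{1+i}{N+i}$. \emph{Entropy:} $\expt{S}=N\,\expt{-p_{0j}\ln p_{0j}}$, and differentiating $I(s):=\int_0^1 p^{s}(1-p)^{N-2}\ud p=B(s+1,N-1)$ in $s$ at $s=1$ yields $\int_0^1 p\ln p\,(1-p)^{N-2}\ud p=B(2,N-1)\big(\psi(2)-\psi(N+1)\big)$; after the density's factor $N-1$ cancels and one substitutes $B(2,N-1)=\tfrac1{N(N-1)}$, $\psi(2)=1-\gamma$, $\psi(N+1)=\sum_{i=1}^{N}i^{-1}-\gamma$, this gives $\expt{S}=\sum_{i=1}^{N}i^{-1}-1=\sum_{i=2}^{N}i^{-1}$. \emph{Variance:} expand $\expt{M_k^2}=N\,\expt{p_{0j}^{2k}}+N(N-1)\,\expt{p_{0j}^{k}p_{1j}^{k}}$, where the first term is Beta and the cross term is the Dirichlet two-coordinate moment $\expt{p_{0j}^{k}p_{1j}^{k}}=(k!)^2(N-1)!/(N+2k-1)!$; using $(2k)!=\binom{2k}{k}(k!)^2$, collecting terms, dividing by $(M_k^{\mathrm{Haar}})^2$, and subtracting $1$ produces the stated closed form once the rising factorials are organized via $\tfrac{(N+k-1)!}{(N-1)!}=\prod_{i=0}^{k-1}(N+i)$, $\tfrac{(N+2k-1)!}{(N+k-1)!}=\prod_{i=k}^{2k-1}(N+i)$, and $\prod_{i=k}^{2k-1}\frac{N-k+i}{N+i}=\prod_{i=0}^{k-1}(N+i)/\prod_{i=k}^{2k-1}(N+i)$. (Here $V_k^{\mathrm{Haar}}$ is the variance of $M_k$ normalized by $(M_k^{\mathrm{Haar}})^2$; as a check, at $k=1$ the product equals $\tfrac{N}{N+1}$ and the prefactor $\tfrac{N+1}{N}$, so $V_1^{\mathrm{Haar}}=0$, consistent with $M_1=\sum_i p_{ij}=1$.)

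The computation is essentially mechanical once the distribution is in hand, so the real content is twofold: (i) identifying a column of a Haar unitary with a uniform point on the complex unit sphere and hence with $\mathrm{Dirichlet}(1^N)$ — this supplies the correlated cross moment $\expt{p_{0j}^{k}p_{1j}^{k}}$ needed for the variance, which the marginals alone do not determine; and (ii) the algebraic collapse of $\expt{M_k^2}-(M_k^{\mathrm{Haar}})^2$ into a single product, where the pitfall is careful bookkeeping of the index ranges of the rising factorials. I expect step (ii) to be the main obstacle; the entropy case additionally needs only the standard digamma values $\psi(2)=1-\gamma$ and $\psi(N+1)=H_N-\gamma$.
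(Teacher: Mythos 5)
Your proposal is correct, and since the paper offers no explicit proof of this theorem — it simply says the statistics "follow straightforward computation by integrating with respect to the probability density" — the Dirichlet/Beta route you take is exactly the computation the authors gloss over. The first-moment and entropy calculations are standard Beta integrals (the digamma trick for $\expt{-p\ln p}$ is the right tool), and the variance computation correctly uses the Dirichlet moment formula $\expt{p_{0j}^k p_{1j}^k}=(k!)^2(N-1)!/(N+2k-1)!$ to supply the cross term that marginals alone cannot give.

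More importantly, you caught a genuine inconsistency in the paper's statement. The displayed closed form
\begin{equation*}
\left(\frac{1}{N}\binom{2k}{k} + \frac{N-1}{N}\right) \prod_{i=k}^{2k-1} \frac{N-k+i}{N+i} - 1
\end{equation*}
equals $\mathrm{Var}(M_k)/\bigl(\expt{M_k}\bigr)^2$, \emph{not} the literally defined quantity $\sum_{i=0}^{N-1}\mathrm{Var}(p_{ij}^k)$. Your $k=1$ sanity check is decisive: the closed form evaluates to $0$ (since $M_1\equiv1$ has zero variance), whereas $\sum_i\mathrm{Var}(p_{ij})=\frac{N-1}{N(N+1)}\neq 0$. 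A direct check at $k=2$ also confirms your reading: $\expt{M_2^2}/\expt{M_2}^2-1 = \frac{(N+5)(N+1)}{(N+2)(N+3)}-1$, matching the displayed formula. (The paper's follow-up asymptotic remark $V_k^{\mathrm{Haar}}\approx\frac{1}{N}\bigl(\binom{2k}{k}-1\bigr)$ is also slightly off under this correct reading — expanding the product to first order in $1/N$ contributes an extra $-k^2/N$, giving $\frac{1}{N}\bigl(\binom{2k}{k}-1-k^2\bigr)$ — but that is an issue with the surrounding commentary, not with the theorem itself once the definition of $V_k^{\mathrm{Haar}}$ is corrected.)
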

We remark that in the limit $N \gg 1$, $M_k^\mathrm{Haar} \approx \frac{k!}{N^{k-1}}$ and $S^\mathrm{Haar} \approx \ln(N) + \gamma - 1$ where $\gamma$ is Euler's constant. The asymptotic results are the same as those derived in \cite{BoixoIsakovSmelyanskiyEtAl2018}. The variance is asymptotically $V_k^\mathrm{Haar} \approx \frac{1}{N} \left(\binom{2k}{k} - 1\right)$. From the variance, we conclude two important features about the statistics. Given $N$,  the variance (i.e. fluctuation) increases with respect to the order of the moment. For each moment, the statistics becomes concentrated as $N$ increases, namely the variance $V_k^\mathrm{Haar}$ vanishes as $N \rightarrow \infty$. By Taylor expansion, the entropy has the same concentrated behavior which can be numerically observed in \cref{fig:converge-to-Haar}. 

\cref{fig:converge-to-Haar} presents the statistics of random quantum circuits for several different structures of the circuit coupling map and shows that for all three coupling maps studied in the main text, the distribution of circuits of sufficient depth converges to the Haar measure. In \cref{fig:converge-to-Haar}(a), we plot the normalized entropy $S/S^\text{Haar}$. The figure shows that the entropy converges to that of Haar measure, i.e., $S/S^\text{Haar} \rightarrow 1$ after the circuit depth of $U$ increases beyond specific values that depend only weakly on the number of qubits $n$. Since the random quantum circuit is constructed by combining layers of  random one- and two-qubit gates, we test the convergence of random quantum circuits for different coupling maps, thereby varying the qubit pairs on which the two-qubit gates can act. The numerical results in \cref{fig:converge-to-Haar}(a) show that coupling maps with greater connectivity converge significantly faster to the Haar measure. We attribute this to the larger number of possible allocations of two-qubit gates enhancing state entanglement within the system and thereby leading to faster mixing of information. 

In addition to showing the convergence in terms of circuit entropy, we also quantify the convergence to the Haar measure for the first five moments in  \cref{fig:converge-to-Haar}(b). The minimal depth to achieve approximate Haar random circuits deduced from the convergence in moments is highly consistent with that derived from the convergence in entropy. We list the depth used in the computation of the sXES in \cref{tab:convergence-depth}. 

\begin{table}[htbp]
    \centering
    \begin{tabular}{@{} *{4}{c} @{}}\midrule
    \headercell{coupling map} & \multicolumn{3}{c@{}}{$n$ (number of system qubits)}\\\cmidrule(l){2-4}
    & 7 & 9 & 11\\\midrule
    linear & 140 & 160 & 160\\
    rectangular & 76 & 94 & 100\\
    fully connected & 60 & 60 & 60\\\midrule
    \end{tabular}
    \caption{Depth for random quantum circuits used in the computation of the system linear cross-entropy score. Each depth is chosen so that both entropy and moments are close to these derived from the Haar measure.}
    \label{tab:convergence-depth}
\end{table}

\begin{figure*}[htbp]
    \centering
    \vspace{-10pt}
    \subfigure[\label{fig:converge-to-Haar-entropy}]{
        \includegraphics[width=0.75\textwidth]{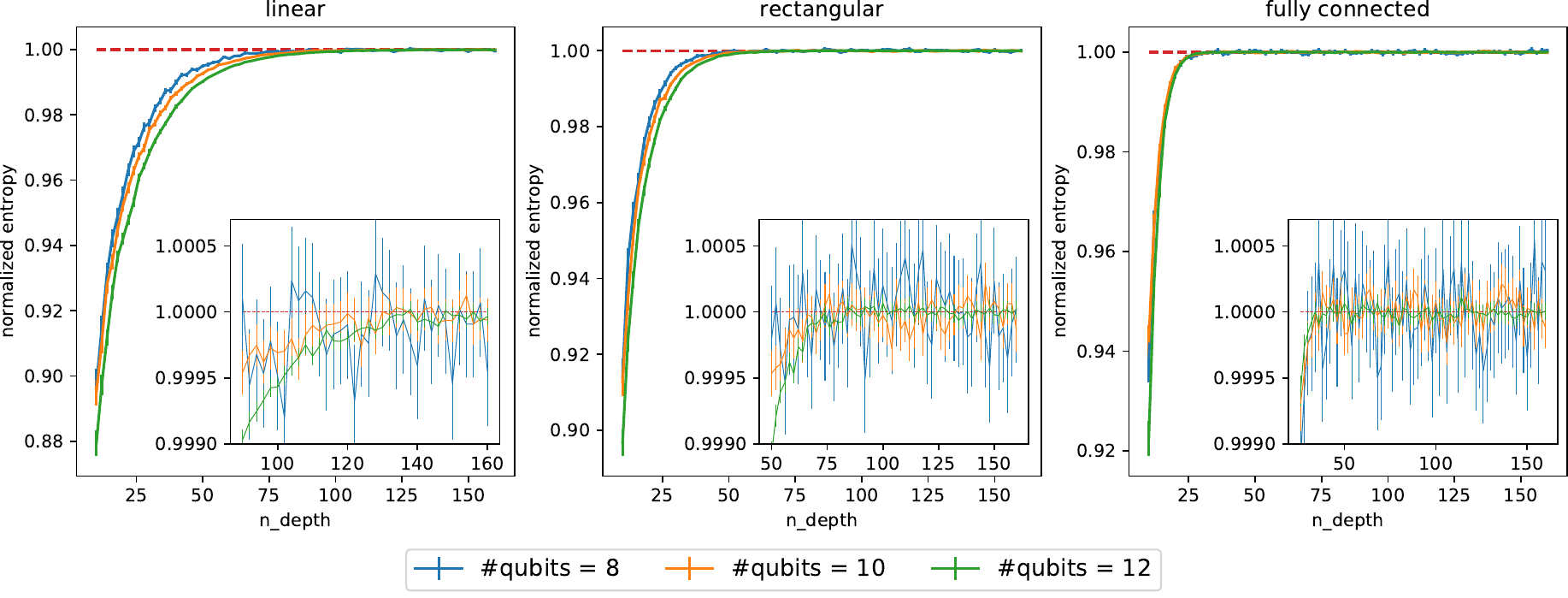}
    }
    
    \subfigure[\label{fig:converge-to-Haar-moment}]{
        \includegraphics[width=0.75\textwidth]{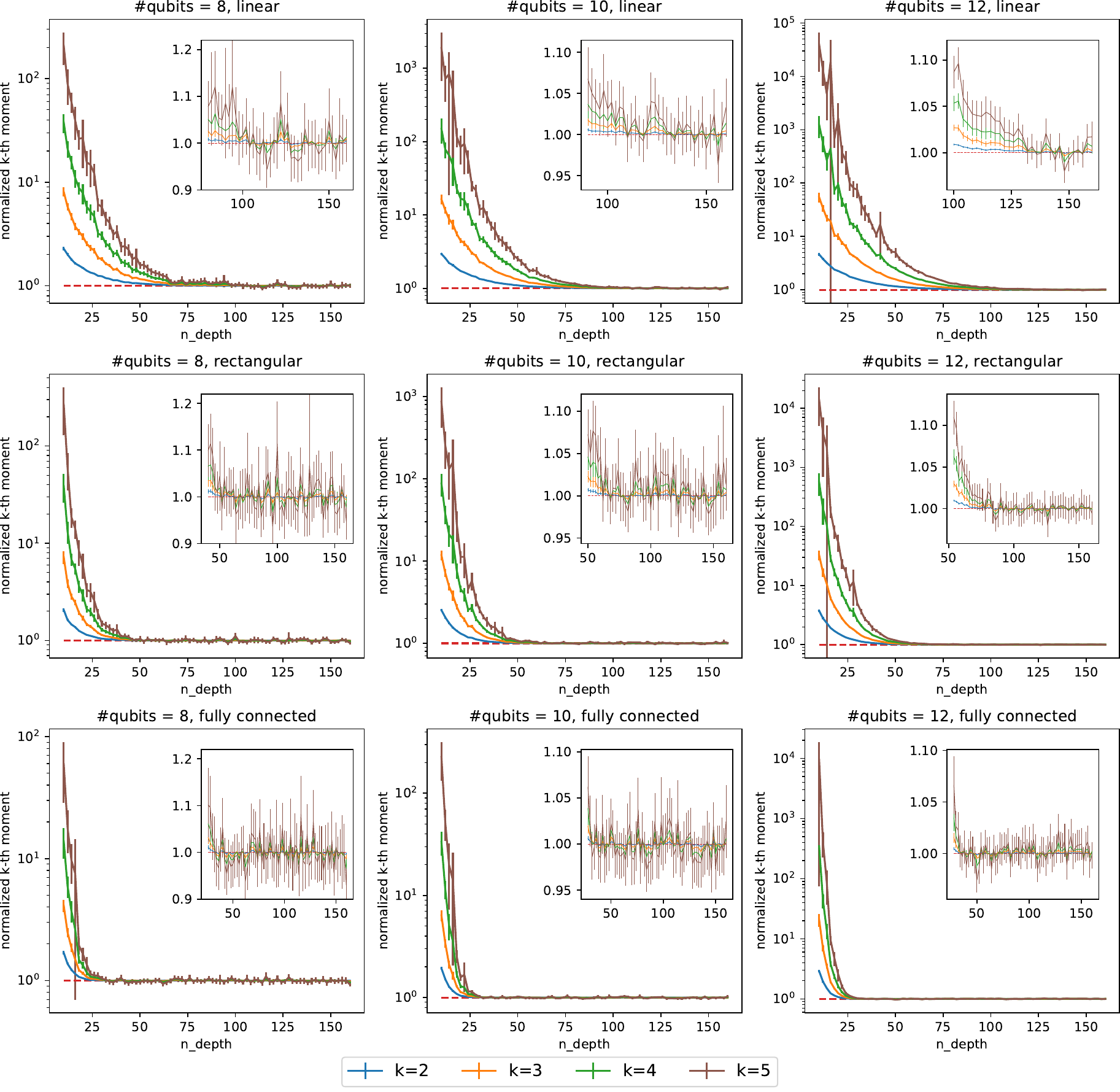}
    }
    \vspace{-10pt}
    \caption{Convergence to the Haar measure. (a) Convergence in terms of entropy. Normalized entropy $S / S^\mathrm{Haar}$ as a function of the depth for random quantum circuits with different number of system qubits and coupling map. (b) Convergence in terms of moments. Each panel is the first five normalized moments $M_k / M_k^\mathrm{Haar}$ as a function of the depth for random quantum circuits with different number of system qubits and coupling map. The convergence of curves to the dashed line at $1$ shows that the random quantum circuit with a modest circuit depth can well approximate the Haar measure. Error bars correspond to the $95\%$ confidence interval estimated from $\sim 1000$ circuit instances.}
    \label{fig:converge-to-Haar}
\end{figure*}

\section{Circuit fidelity from the system linear cross-entropy score}\label{sec:xses}
The system linear cross-entropy score is based on Hamiltonian simulation. Note that in the circuit fidelity of \cref{eqn:circuit-fidelity-XES}, only the terms involved the system linear cross entropy sXES in the numerator contain quantities that must be experimentally evaluated. All other terms can be simplified by using the statistical property of an ensemble of random matrices inherited from the Haar measure of $U$ in \cref{sec:stat-inherit-from-Haar} (in particular \cref{thm:HS-avg}). 

The procedure of computing the system linear cross-entropy score can be summarized as follows.
\begin{enumerate}
    \item Draw quantum circuits $U_i$'s approximately from the Haar measure at random.
    \item For each $U_i$, build the mQSVT circuit for the Hamiltonian simulation benchmark, and measure all qubits to count the bitstring frequencies of $0x$ where $x \in \{0, 1\}^n$. The bitstring frequency is an estimate to $p_\expl(U_i, x)$. Furthermore, the sum of bitstring frequencies for all $x$'s is an estimate to $P_\expl(U_i) = \sum_{x \in \{0,1\}^n} p_\expl(U_i, x)$.
    \item For each $U_i$ and bitstring $0x$, compute the noiseless bitstring probability $p(U_i, x)$ on classical computers.
    \item Compute estimates of the fidelity according to \cref{eqn:circuit-fidelity-QUES,eqn:circuit-fidelity-XES}.
\end{enumerate}

We list the circuit fidelity estimated by different methods in \cref{tab:HSBenchmark}. The agreement shows the consistency of the quantum Hamiltonian simulation benchmark. Here, the theoretical reference value is estimated from the depolarization noise model. Given $U$ with a total of $g_1$ one-qubit gates and $g_2$ two-qubit gates, the value $\alpha_\text{ref} := (1-r_1)^{2d(g_1+1)} (1-r_2)^{2d g_2}$ follows approximately assuming each quantum error fully mixes the quantum state.

\begin{table}[htbp]
\centering
\begin{tabular}{@{} *{7}{c} @{}}\midrule
\headercell{two-qubit gate\\error rate $r_2$} & \multicolumn{6}{c@{}}{QSVT degree parameter $2d$}\\\cmidrule(l){2-7}
& 6 & 8 & 10 & 14 & 18 & 20\\\midrule
\multirow{3}{*}{$4.00\times 10^{-5}$} & \multicolumn{1}{r}{0.95} & \multicolumn{1}{r}{0.92} & \multicolumn{1}{r}{0.90} & \multicolumn{1}{r}{0.85} & \multicolumn{1}{r}{0.80} & \multicolumn{1}{r}{0.79}\\ 
 & \multicolumn{1}{r}{0.92} & \multicolumn{1}{r}{0.90} & \multicolumn{1}{r}{0.87} & \multicolumn{1}{r}{0.83} & \multicolumn{1}{r}{0.79} & \multicolumn{1}{r}{0.76}\\ 
 & \multicolumn{1}{r}{0.93} & \multicolumn{1}{r}{0.92} & \multicolumn{1}{r}{0.89} & \multicolumn{1}{r}{0.86} & \multicolumn{1}{r}{0.82} & \multicolumn{1}{r}{0.80}\\ \midrule
\multirow{3}{*}{$1.30\times 10^{-4}$} & \multicolumn{1}{r}{0.78} & \multicolumn{1}{r}{0.71} & \multicolumn{1}{r}{0.67} & \multicolumn{1}{r}{0.55} & \multicolumn{1}{r}{0.45} & \multicolumn{1}{r}{0.44}\\ 
 & \multicolumn{1}{r}{0.77} & \multicolumn{1}{r}{0.70} & \multicolumn{1}{r}{0.65} & \multicolumn{1}{r}{0.54} & \multicolumn{1}{r}{0.46} & \multicolumn{1}{r}{0.42}\\ 
 & \multicolumn{1}{r}{0.81} & \multicolumn{1}{r}{0.76} & \multicolumn{1}{r}{0.69} & \multicolumn{1}{r}{0.61} & \multicolumn{1}{r}{0.53} & \multicolumn{1}{r}{0.48}\\ \midrule
\multirow{3}{*}{$2.20\times 10^{-4}$} & \multicolumn{1}{r}{0.65} & \multicolumn{1}{r}{0.56} & \multicolumn{1}{r}{0.50} & \multicolumn{1}{r}{0.36} & \multicolumn{1}{r}{0.26} & \multicolumn{1}{r}{0.25}\\ 
 & \multicolumn{1}{r}{0.64} & \multicolumn{1}{r}{0.55} & \multicolumn{1}{r}{0.48} & \multicolumn{1}{r}{0.36} & \multicolumn{1}{r}{0.26} & \multicolumn{1}{r}{0.23}\\ 
 & \multicolumn{1}{r}{0.70} & \multicolumn{1}{r}{0.62} & \multicolumn{1}{r}{0.54} & \multicolumn{1}{r}{0.43} & \multicolumn{1}{r}{0.34} & \multicolumn{1}{r}{0.28}\\ \midrule
\multirow{3}{*}{$3.10\times 10^{-4}$} & \multicolumn{1}{r}{0.54} & \multicolumn{1}{r}{0.43} & \multicolumn{1}{r}{0.38} & \multicolumn{1}{r}{0.24} & \multicolumn{1}{r}{0.15} & \multicolumn{1}{r}{0.15}\\ 
 & \multicolumn{1}{r}{0.54} & \multicolumn{1}{r}{0.43} & \multicolumn{1}{r}{0.35} & \multicolumn{1}{r}{0.23} & \multicolumn{1}{r}{0.15} & \multicolumn{1}{r}{0.12}\\ 
 & \multicolumn{1}{r}{0.60} & \multicolumn{1}{r}{0.51} & \multicolumn{1}{r}{0.41} & \multicolumn{1}{r}{0.30} & \multicolumn{1}{r}{0.22} & \multicolumn{1}{r}{0.17}\\ \midrule
\multirow{3}{*}{$4.00\times 10^{-4}$} & \multicolumn{1}{r}{0.45} & \multicolumn{1}{r}{0.34} & \multicolumn{1}{r}{0.28} & \multicolumn{1}{r}{0.16} & \multicolumn{1}{r}{0.087} & \multicolumn{1}{r}{0.090}\\ 
 & \multicolumn{1}{r}{0.45} & \multicolumn{1}{r}{0.34} & \multicolumn{1}{r}{0.26} & \multicolumn{1}{r}{0.15} & \multicolumn{1}{r}{0.089} & \multicolumn{1}{r}{0.068}\\ 
 & \multicolumn{1}{r}{0.52} & \multicolumn{1}{r}{0.42} & \multicolumn{1}{r}{0.32} & \multicolumn{1}{r}{0.21} & \multicolumn{1}{r}{0.14} & \multicolumn{1}{r}{0.10}\\ \midrule
\end{tabular}
\caption{Circuit fidelity estimated from quantum Hamiltonian simulation benchmark. The total number of qubits is $8$, namely, there are $7$ system qubits and $1$ ancilla qubit. The coupling map is linear. In each cell of the table, the top data is estimated from sXES, the middle data is the theoretical reference value,  and the bottom data is estimated from QUES.
}
\label{tab:HSBenchmark}
\end{table}

\section{Classical hardness of sXHOG}\label{sec:hardness_sxhog}

\begin{definition}[sXHOG, or System Linear Cross-entropy Heavy Output Generation]\label{def:XHOG}
    Given as input a number $b > 1$, a random $(n+1)$-qubit unitary $U$,
    and the mQSVT circuit for the Hamiltonian simulation benchmark with sufficiently small approximation error $\epsilon$,  output nonzero bitstrings $x_1, x_2, \cdots, x_k \in \{0,1\}^n\backslash\{0^n\}$ so that 
\begin{equation}
\label{eqn:def_b_sXHOG}
\frac{1}{k} \sum_{j=1}^k p(U, x) \ge b \times 2^{-n}.
\end{equation}
\end{definition}

The classical hardness of the XEB experiment is justified by reducing the XHOG problem to a complexity assumption referred to as Linear Cross-entropy Quantum Threshold Assumption (XQUATH)\cite{AaronsonGunn2019}. Similarly, the hardness of the sXHOG problem can be reduced to an assumption that we refer to by analogy as sXQUATH.

\begin{definition}[sXQUATH, or System Linear Cross-entropy Quantum Threshold Assumption]\label{def:XQUATH}
        Given a random $(n+1)$-qubit unitary $U$,
    and the mQSVT circuit for the Hamiltonian simulation benchmark with sufficiently small approximation error $\epsilon$, for a uniformly random $x \in \{0,1\}^n\backslash\{0^n\}$, there is no polynomial-time classical algorithm that produces an estimate $p$ of $p_x := p(U, x)$ so that 
        \begin{equation*}
        \expt{(p_x-p)^2} = \expt{(p_x-2^{-n})^2} - \Omega(2^{-3n}).
        \end{equation*}
        Here, the expectation is taken over random circuits $U$, the internal randomness of the algorithm, and the random bitstring $x$.
\end{definition}

The reduction of the XHOG problem is given in the following theorem, which is directly parallel to that in \cite[Theorem 1]{AaronsonGunn2019}.

\begin{theorem}[Classical hardness of sXHOG]
\label{thm:reduction-XHOG-XQUATH}
Assuming sXQUATH, no polynomial-time classical algorithm can solve the XHOG problem in \cref{def:XHOG} with probability $s > \frac{1}{2} + \frac{1}{2b}$, and 
        \begin{equation*}
        k \ge \frac{1}{\left( (2s-1) b - 1 \right) (b - 1)}.
        \end{equation*}
\end{theorem}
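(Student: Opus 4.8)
\emph{Plan of proof.} The plan is to adapt the Aaronson--Gunn reduction for ordinary XHOG~\cite{AaronsonGunn2019} to the ``system'' setting, using the mQSVT circuit only through the elementary facts that $p(U,x)\ge 0$ and $\sum_x p(U,x)\le 1$ (both immediate since $\circqsvt_{f,U}$ is a $1$-block-encoding). Suppose, toward a contradiction, that some polynomial-time classical algorithm $\mathcal{A}$ solves the sXHOG problem of \cref{def:XHOG} with success probability $s>\tfrac12+\tfrac1{2b}$ using $k\ge\frac{1}{((2s-1)b-1)(b-1)}$ output strings. I will convert $\mathcal{A}$ into a polynomial-time classical estimator of $p_x:=p(U,x)$ whose mean-square error beats that of the trivial estimate $2^{-n}$ by $\Omega(2^{-3n})$, contradicting sXQUATH (\cref{def:XQUATH}).

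\emph{The estimator.} On input $U$, run $\mathcal{A}(U)$ to obtain the (without loss of generality distinct) nonzero strings $S=\{x_1,\dots,x_k\}\subseteq\{0,1\}^n\setminus\{0^n\}$; this runs in polynomial time. Given a uniformly random $x\in\{0,1\}^n\setminus\{0^n\}$, output $\hat p:=b\cdot 2^{-n}$ if $x\in S$ and $\hat p:=2^{-n}$ otherwise. Writing $q=2^{-n}$, $N=2^n$, $\delta=b-1$, and noting $\hat p-q=\delta q\,\II[x\in S]$, a one-line expansion gives, for each fixed $U$,
\[
\mathbb{E}_x\bigl[(p_x - q)^2\bigr] - \mathbb{E}_x\bigl[(p_x - \hat{p})^2\bigr] = \frac{2q\delta}{N-1}\sum_{x\in S}(p_x - q) - \frac{q^2\delta^2 k}{N-1}.
\]

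\emph{The two-regime bound.} I would then lower-bound $\sum_{x\in S}(p_x-q)$ in two cases: whenever $\mathcal{A}$ succeeds, the sXHOG promise $\tfrac1k\sum_j p_{x_j}\ge bq$ gives $\sum_{x\in S}(p_x-q)\ge k(b-1)q$; in all cases $p_x\ge 0$ gives $\sum_{x\in S}(p_x-q)\ge -kq$. Averaging over the internal randomness of $\mathcal{A}$ (which succeeds with probability $s(U)$), substituting, and then averaging over $U$ with $\mathbb{E}_U[s(U)]=s$ --- the bound is affine in $s(U)$, so this is legitimate --- the displayed quantity is at least
\[
\frac{q^2 k}{N-1}\Bigl[s\,\delta\bigl(2(b-1)-\delta\bigr) - (1-s)\,\delta\bigl(2+\delta\bigr)\Bigr] = \frac{q^2 k (b-1)}{N-1}\bigl[(2s-1)b - 1\bigr],
\]
the equality being the specialization $\delta=b-1$. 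Since $\tfrac1{N-1}>2^{-n}$, the right side is at least $2^{-3n}\,k(b-1)\bigl((2s-1)b-1\bigr)$, which is $\Omega(2^{-3n})$ precisely under the hypotheses $s>\tfrac12+\tfrac1{2b}$ (so that $(2s-1)b-1>0$) and $k\ge\frac{1}{((2s-1)b-1)(b-1)}$. Hence $\mathbb{E}_{U,x,\mathcal A}[(p_x-\hat p)^2]\le\mathbb{E}_{U,x}[(p_x-2^{-n})^2]-\Omega(2^{-3n})$, contradicting sXQUATH; the contrapositive is the theorem, and the displayed lower bound on $k$ simply records how small $k$ may be while keeping the gain $\Omega(2^{-3n})$.

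\emph{Main obstacle.} There is no deep difficulty here --- it is a black-box reduction --- but the delicate point is the choice $\delta=b-1$ (equivalently, betting that each reported heavy string carries probability exactly $b\cdot 2^{-n}$): this depends only on the given input $b$, whereas the naively optimal choice $\delta=sb-1$ would require the \emph{unknown} success probability $s$. It is exactly this robust choice that converts the ``success'' factor $sb-1$ into the weaker threshold $(2s-1)b-1$ and forces the hypothesis $s>\tfrac12+\tfrac1{2b}$. The only other care needed is that the failure branch contributes $-(1-s)\delta(2+\delta)$, which must be absorbed cleanly, and that distinctness of the output strings (their nonzeroness is already guaranteed by \cref{def:XHOG}) may be assumed without loss of generality since the estimator uses only the membership test $x\in S$. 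Note that the approximation error $\epsilon$ of the mQSVT circuit plays no role in this reduction.
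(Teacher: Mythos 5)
Your proof is correct and follows essentially the same route as the paper's: the same membership-based estimator ($\hat p = b\cdot 2^{-n}$ on $S$, else $2^{-n}$), the same mean-square-error gain random variable, the same two bounds ($\sum_{x\in S}(p_x-q)\ge k(b-1)q$ on success via the sXHOG promise, and $\sum_{x\in S}(p_x-q)\ge -kq$ in general via $p_x\ge0$), the same $\tfrac{1}{2^n-1}\ge 2^{-n}$ cleanup, and the same final algebra yielding $2^{-3n}k\bigl((2s-1)b-1\bigr)(b-1)$. Your intermediate expression $ks(b-1)^2+k(1-s)(1-b^2)$ (written in $\delta$-form) is literally what appears in the paper before simplification, and you correctly handle the averaging over the algorithm's internal randomness by noting the bound is affine in the per-$U$ success probability. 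The only presentational differences are that you expand $\mathbb{E}_x[\cdot]$ into a pointwise sum first rather than applying the law of total expectation at the outset, and you add a nice remark explaining why the estimator's offset must be $b-1$ (rather than the $s$-dependent optimum) because $s$ is unknown; neither changes the substance.
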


\begin{proof}
        Suppose that $\mathsf{A}$ is a classical algorithm solving sXHOG in \cref{def:XHOG} with a success probability $s$ as stated in the theorem. Given $U$ and the mQSVT circuit as the input of $\mathsf{A}$, it outputs $\mathsf{S} := \{x_i \neq 0^n : i = 1, \cdots, k\}$.  
        When $\mathsf{A}$ successfully solves the sXHOG problem, the set $\mathsf{S}$ satisfies \cref{eqn:def_b_sXHOG}. 
        Specifically, let $x \in \{0,1\}^n\backslash \{0^n\}$ be a bitstring sampled uniformly at random.
        We now construct an algorithm to produce an estimate $p$ of $p(U, x)$. 
        Given such a bitstring $x$, the algorithm outputs an estimate $p = b 2^{-n}$ if $x \in \mathsf{S}$ and $p = 2^{-n}$ if $x \notin \mathsf{S}$.
        
        Consider a random variable
        $$X(U, x) := \left(p(U,x)-2^{-n}\right)^2 - \left(p(U,x)-p\right)^2 = \left( 2 p(U,x) - (p + 2^{-n}) \right)\left( p - 2^{-n} \right).$$
        Here, the randomness comes from the uniformly random bitstring $x$, the random unitary $U$ and its corresponding mQSVT circuit, and whether the classical algorithm $\mathsf{A}$ succeeds. We write them explicitly as the subscript of the expectation. Furthermore, we denote by $\mathsf{S}_U^{(s)}:= \mathsf{S}$ when $\mathsf{A}$ succeeds, and $\mathsf{S}_U^{(f)}:= \mathsf{S}$ when $\mathsf{A}$ fails. Let $\II_E$ be the indicator function which gives $1$ if the condition $E$ is satisfied and gives $0$ otherwise. According to the algorithm,
        \begin{equation}
        \begin{split}
        \exptwrt{x,U}{X(U,x) \II_{x \in \mathsf{S}_U^{(s)}} | \mathsf{A}\text{ succeeded}} =& 2 \cdot 2^{-n}(b-1) \cdot \exptwrt{x,U}{p(U,x) \II_{x \in \mathsf{S}_U^{(s)}} \bigg| \mathsf{A}\text{ succeeded}}\\
        &+ 2^{-2n}(1-b^2) \exptwrt{x,U}{\II_{x \in \mathsf{S}_U^{(s)}}},\\
        \exptwrt{x,U}{X(U,x) \II_{x \in \mathsf{S}_U^{(f)}} \bigg| \mathsf{A}\text{ failed}} =& 2 \cdot 2^{-n}(b-1) \cdot \exptwrt{x,U}{p(U,x) \II_{x \in \mathsf{S}_U^{(f)}} \bigg| \mathsf{A}\text{ failed}}\\
        & + 2^{-2n}(1-b^2) \exptwrt{x,U}{\II_{x \in \mathsf{S}_U^{(f)}}}\\
        \ge& 2^{-2n}(1-b^2) \exptwrt{x,U}{\II_{x \in \mathsf{S}_U^{(f)}}}.
        \end{split}
        \end{equation}
        Furthermore, $X(U,x) \equiv 0$ if $x \notin \mathsf{S}$, regardless of  whether $\mathsf{A}$ succeeds or not. By the law of total expectation,
        \begin{equation}
        \begin{split}
                &\exptwrt{x, U, \mathsf{A}}{X(U,x)} = s \cdot \exptwrt{x,U}{X(U,x) | \mathsf{A}\text{ succeeded}} + (1-s) \cdot \exptwrt{x,U}{X(U,x) | \mathsf{A}\text{ failed}}\\
                =& s \cdot \exptwrt{x,U}{X(U,x) \II_{x \in \mathsf{S}^{(s)}_U} \bigg| \mathsf{A}\text{ succeeded}} + (1-s) \cdot \exptwrt{x,U}{X(U,x) \II_{x \in \mathsf{S}^{(f)}_U} \bigg| \mathsf{A}\text{ failed}}\\
                \ge& s \cdot \left(2 \cdot 2^{-n}(b-1) \exptwrt{x,U}{p(U,x) \II_{x \in \mathsf{S}^{(s)}_U} | \mathsf{A}\text{ succeeded}} + 2^{-2n} (1-b^2) \exptwrt{x,U}{\II_{x \in \mathsf{S}_U^{(s)}}}\right)\\
                & + (1-s) \cdot 2^{-2n}(1-b^2) \exptwrt{x,U}{\II_{x \in \mathsf{S}_U^{(f)}}}.
        \end{split}
        \end{equation}
        Here
        \begin{equation*}
                \exptwrt{x,U}{p(U,x) \II_{x \in \mathsf{S}^{(s)}_U} | \mathsf{A}\text{ succeeded}} =\frac{k}{2^n-1}  \exptwrt{U}{\frac{1}{k} \sum_{x \in \mathsf{S}_U^{(s)}} p(U,x)} \ge \frac{bk 2^{-n}}{2^n-1}.
        \end{equation*}
        Note that $\mathsf{S}_U^{(s)}$ and $\mathsf{S}_U^{(f)}$ are sets of $k$ distinct bitstrings. Following that $x$ is uniformly distributed, we have
        \begin{equation*}
                \exptwrt{x, U}{\II_{x \in \mathsf{S}_U^{(s)}}} = \exptwrt{U}{\exptwrt{x}{\II_{x \in \mathsf{S}_U^{(s)}}}} = \frac{k}{2^n-1} \text{ and } \exptwrt{x, U}{\II_{x \in \mathsf{S}_U^{(f)}}} = \exptwrt{U}{\exptwrt{x}{\II_{x \in \mathsf{S}_U^{(f)}}}} = \frac{k}{2^n-1}.
        \end{equation*}
        Then
        \begin{equation}
        \begin{split}
                \exptwrt{x, U, \mathsf{A}}{X(U,x)}& \ge \frac{2^{-2n}}{2^{n}-1} \left( k s \cdot (b-1)^2 + k (1-s) \cdot (1-b^2) \right)\\
                & \ge 2^{-3n} k \left( (2s-1)b - 1 \right) ( b - 1) = \Omega(2^{-3n})
        \end{split}
        \end{equation}
        when $k \ge \frac{1}{\left( (2s-1) b - 1 \right) (b - 1)}$.
        This violates sXQUATH and thereby proves the classical hardness of sHOG.
\end{proof}

\section{Circuit fidelity and sXHOG}\label{sec:fidelity_sxhog}

In this section we demonstrate that the success of sXHOG can be verified by experimental evaluation of the circuit fidelity. Due to the relation between the circuit fidelity and QUES in \cref{sec:xses}, it means that the success of sXHOG can be verified by QUES, which does not involve any classical computation.

First, since we are only interested in the measurement outcome whose ancilla qubit returns $0$, we normalize the bitstring probability as a conditional probability 
\begin{equation}
p_\expl(U,x | \text{ancilla} = 0) := p_\expl(U, x) / P_\expl(U).
\end{equation}
The physical interpretation of the normalization of the bitstring probability is to discard the measurement result whose ancilla is measured with $1$. We also remark that when the Hamiltonian simulation benchmark circuit is sufficiently accurate, we have $P(U) \approx 1$, and it is not necessary to normalize the noiseless bitstring probability in \cref{eqn:def_b_sXHOG}. 

The probability that the experimental measurement on the ancilla qubit outputs $0$ is
\begin{equation*}
\begin{split}
        \mathds{P}_{0} &:= \sum_{x \in \{0,1\}^n} \int p_\expl(U,x) \rd U = \sum_{x \in \{0,1\}^n} \int \alpha p(U,x) + \frac{1-\alpha}{2N} \rd U = \frac{1+\alpha}{2} \approx P_\expl(U).
\end{split}
\end{equation*}
Given the circuit fidelity $\alpha$, we denote the conditional probability density as
\begin{equation*}
 \mathds{Q}_\alpha(U, x) := \frac{1}{\mathds{P}_{0}} p_\expl(U,x),
\end{equation*}
and the corresponding expectation is denoted as $\exptwrt{\mathds{Q}_\alpha}{\cdot}$. Then, the conditional average bitstring probability, which is directly related to the parameter $b$ in determining the sXHOG problem as
\begin{equation}
   b(\alpha) := N\exptwrt{\mathds{Q}_\alpha}{\sum_{x\ne 0^n} p(U,x)} = \frac{\expt{\mathrm{sXES}(U)}}{ \mathds{P}_{0}} = \left( 1 + \frac{\alpha(2-5\mc{H}_1+4\mc{H}_2) - \mc{H}_1}{\alpha + 1} \right) + \Or\left( \frac{1}{N} \right).
\end{equation}
The last equality is derived using results in \cref{sec:stat-inherit-from-Haar}.
Here
\begin{equation}
\mc{H}_1 = \int \bP_\eig^{(2)}(\lambda_1, \lambda_2) \cos\left(t\left(\lambda_1-\lambda_2\right)\right) \rd \lambda_1\rd\lambda_2,
\end{equation}
and
\begin{equation}
\mc{H}_2 = \int \bP_\eig^{(4)}(\lambda_1, \lambda_2, \lambda_3, \lambda_4) \cos\left(t\left(\lambda_1-\lambda_2+\lambda_3-\lambda_4\right)\right) \rd \lambda_1\rd\lambda_2\rd\lambda_3\rd\lambda_4
\end{equation}
are cosine transformations of $\bP_\eig^{(2)}$ and $\bP_\eig^{(4)}$, which are the $2$-marginal and the $4$-marginal distribution of eigenvalues corresponding to the ensemble of random Hermitian matrices, respectively.  The values of $\mc{H}_1$ and $\mc{H}_2$ can be evaluated on classical computers according to \cref{thm:Hl-expansion} and \cref{alg:Fijk}.

Thus for large $n$, we have
\begin{equation}
b(\alpha) \approx 1 + \frac{\alpha(2-5\mc{H}_1+4\mc{H}_2) - \mc{H}_1}{\alpha + 1}=:1+\frac{\gamma(\alpha-\alpha^*)}{\alpha+1}.
\label{eqn:b_fidelity}
\end{equation}
Here $\gamma=2-5\mc{H}_1+4\mc{H}_2$ and $\alpha^*=\mc{H}_1/\gamma$. 
When $\mc{H}_1, \mc{H}_2$ are sufficiently small, $b(\alpha)$ is monotonically increasing. 

The hardness of classical spoofing also requires $b(\alpha) \ge 1$ (see \cref{thm:reduction-XHOG-XQUATH}). Thus, we define the threshold $\alpha^* := \frac{\mc{H}_1}{2-5\mc{H}_1+4\mc{H}_2}$ be the fidelity so that $b(\alpha^*) = 1$. To achieve supremacy, the fidelity is required to satisfy $\alpha \ge \alpha^*$. To see the existence of the threshold fidelity, let us consider a fully contaminated noise where $\alpha = 0$. Then, the average bitstring probability is
\begin{equation}
    \exptwrt{\mathds{Q}_0}{\sum_{x\ne 0^n} p(U,x)} = \frac{1}{N} \left( 1 - \expt{p(U,0^n)} \right) \Rightarrow b(\alpha)|_{\alpha = 0} = 1 - \expt{p(U,0^n)} \le 1.
\end{equation}
By continuity, a threshold fidelity $\alpha^*$ exists to ensure $b(\alpha) > 1$. It also indicates that the threshold is very close to zero when the diagonal elements of the time evolution vanish simultaneously in the ensemble, namely $\expt{p(U,0^n)} \approx 0$. The threshold can be suppressed by choosing a larger simulation time $t$ since $\alpha^* \rightarrow 0$ as $t \rightarrow \infty$. Furthermore, when $\alpha^* \ll 1$, the conditional average bitstring probability is
\begin{equation*}
    \exptwrt{\mathds{Q}_\alpha}{\sum_{x\ne 0^n} p(U,x)} = \frac{1}{N}\left(1 + \frac{2(\alpha-\alpha^*)}{1+\alpha}\right) + \Or\left(\frac{1}{N^2}\right).
\end{equation*}
Note that at $t^\text{opt}$, the threshold $\alpha^*|_{t^\text{opt}} \approx \mc{H}_1 / 2$. \cref{eqn:HS-avg-1ord} implies that $\mc{H}_1 \ge - \frac{2}{N-1}$. Therefore, when the number of qubits $n$ is not sufficiently large, $\alpha^*|_{t^\text{opt}}$ can possibly be negative. However, as $n$ increases, $\alpha^*|_{t^\text{opt}}$ converges to $0$ exponentially fast because the lower bound $- \frac{2}{N-1} \to 0$ in the large $n$-limit. This agrees with the numerical behavior of the threshold $\alpha^*$ shown in \cref{fig:supremacy-region-HS}. 

\section{Analytic estimation of $t^{\text{opt}}$ for large $n$}\label{sec:topt_derive}

According to the result in \cref{fig:supremacy-region-HS}, at $t=t^{\text{opt}}$, we have
\[
\expt{p_t(U,0^n)}=\mathbb{E}|\braket{0^n|e^{-\I \mf{H} t}|0^n}|^2\approx 0.
\]
Jensen's inequality gives
\[
\abs{\mathbb{E}\braket{0^n|e^{-\I \mf{H} t}|0^n}}^2\le \mathbb{E}|\braket{0^n|e^{-\I \mf{H} t}|0^n}|^2\approx 0.
\]
If $\mf{H}$ is a H-RACBEM and the corresponding $U$ is drawn from the Haar measure, then
\[
\mathbb{E}\braket{0^n|e^{-\I \mf{H} t}|0^n}=\int_{0}^1 e^{-\I \lambda t} \bP_\eig^{(1)}(\lambda)\ud \lambda.
\]
Here $\bP_\eig^{(1)}(\lambda)$ is defined in defined in \cref{eqn:Peig_ell} with $\ell=1$.
It is also the $1$-marginal (a.k.a. the level density) of the joint probability distribution of all eigenvalues in \cref{eqn:P_joint}, which is called a $\beta$-Jacobi ensemble with $\beta=2$~\cite{DumitriuEdelman2002}. 
With the block encoding of one ancilla qubit (i.e. $M=2$), the level density follows the $\operatorname{Beta}(0.5,0.5)$ distribution in the large $n$-limit
\cite{Leff1964}, i.e. in the sense of weak convergence, we have
\[
\lim_{n\to \infty} \bP_\eig^{(1)}(\lambda)=
\frac{1}{\pi}\lambda^{-\frac12}(1-\lambda)^{-\frac12}.
\]
Therefore for any $t$, 
\begin{equation}
\begin{aligned}
\lim_{n\to \infty} \int_{0}^1 e^{-\I \lambda t} \bP_\eig^{(1)}(\lambda)\ud \lambda
=&\int_{0}^1 e^{-\I \lambda t} \frac{1}{\pi}\lambda^{-\frac12}(1-\lambda)^{-\frac12}\ud \lambda\\
\overset{\lambda=\sin^2 \left(\frac{\theta}{2}\right)}{=\mathrel{\mkern-5mu}=\mathrel{\mkern-5mu}=\mathrel{\mkern-5mu}=}& \quad\frac{1}{\pi}\int_{0}^{\pi}
\exp\left(-\I t \sin^2 \frac{\theta}{2} \right) \ud \theta\\
=& e^{-\frac{\I t}{2}} \frac{1}{2\pi} \int_{-\pi}^{\pi} e^{\I \frac{t}{2} \cos \theta} \ud \theta\\
=& e^{-\frac{\I t}{2}} J_0(t/2).
\end{aligned}
\label{eqn:topt_derive}
\end{equation}
Here we have used the integral representation of the Bessel function of the first kind
\[
J_0(t/2)=\frac{1}{2\pi}\int_{-\pi}^{\pi} e^{\I \frac{t}{2} \sin \theta} \ud \theta=\frac{1}{2\pi}\int_{-\pi}^{\pi} e^{\I \frac{t}{2} \cos \theta} \ud \theta.
\]
Therefore in the large $n$ limit, $\mathbb{E}\braket{0^n|e^{-\I \mf{H} t}|0^n}$ approximately vanishes at the first node of $J_0(t/2)$, which gives
\[
t^{\text{opt}}\approx 4.81.
\]
This agrees very well with the numerical results in \cref{fig:supremacy-region-HS,fig:mc-justify-topt}.

\begin{figure*}[htbp]
    \centering
    \vspace{-10pt}
    \subfigure[]{
        \includegraphics[width=.3\textwidth]{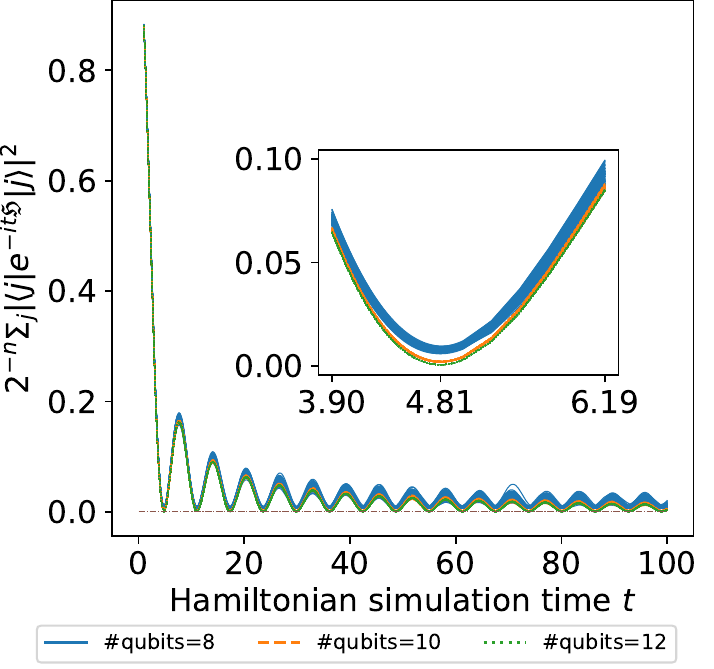}
    }
    \subfigure[]{
        \includegraphics[width=0.46\textwidth]{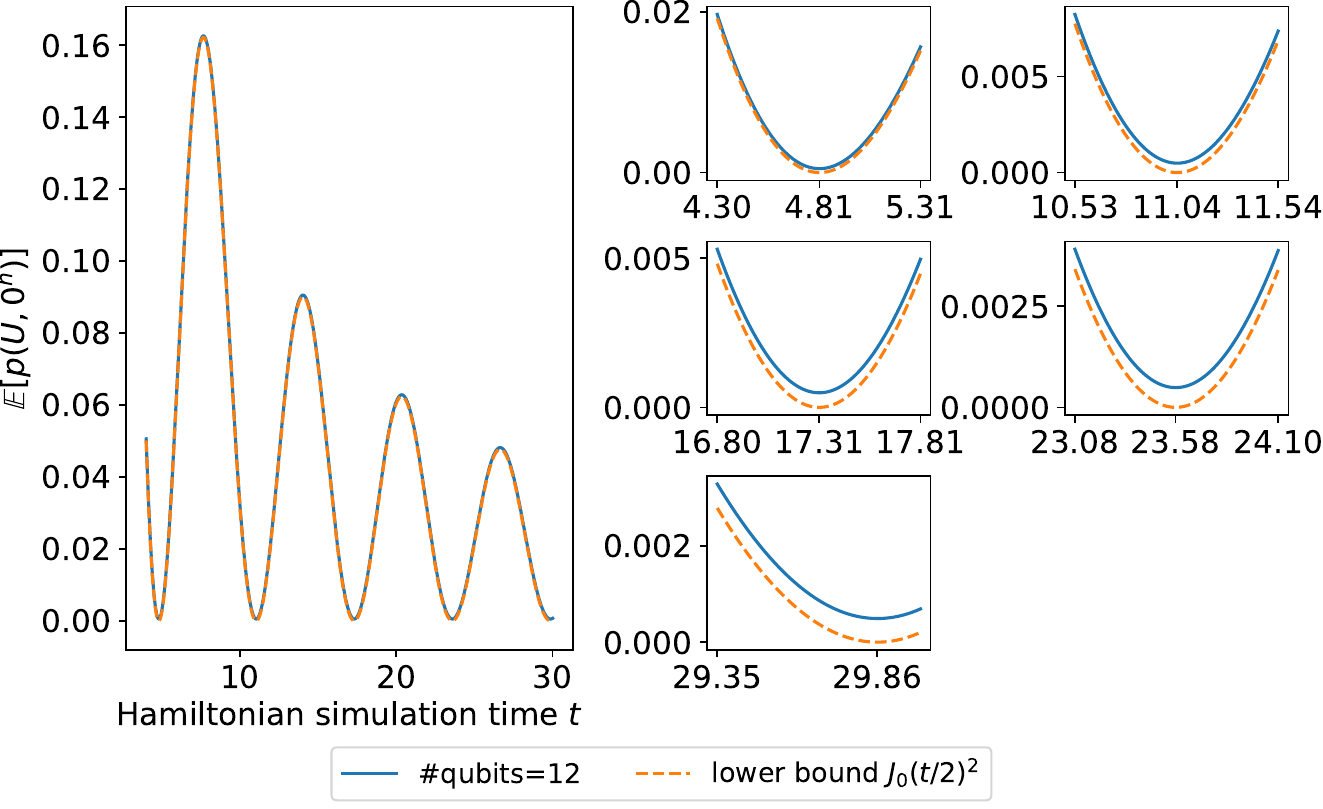}
    }
    \vspace{-10pt}
    \caption{Numerical justification of $t^\mathrm{opt}$. (a) The trajectory of the average diagonal probability $2^{-n} \sum_j \abs{\braket{j | e^{-\I t \mf{H}} | j}}^2$ as a function of Hamiltonian simulation time $t$. The broadening is due to plotting $\sim 100$ instances individually. The subfigure in the box shows the behavior near $t^{\text{opt}} \approx 4.81$. (b) The average probability by measuring all qubits with $0$ and the analytical lower bound on it. Zooming into the first five zeros of the Bessel function, the minima of the average probability well agree these zeros.}
    \label{fig:mc-justify-topt}
\end{figure*}

\section{Additional analytical computations and proofs}
\subsection{Statistical property of the random-matrix ensemble inherited from Haar measure}\label{sec:stat-inherit-from-Haar}
The solution of the system heavy output generation problem and analytic evaluation of the system linear cross-entropy score require the use of statistical properties of the ensemble of random matrices obtained from the Haar measure. In this section, we  derive the statistical properties of the ensemble. We consider a generic block encoding with $m$ extra ancilla qubits, namely, an $n$-qubit matrix $A$ is a submatrix of an $(n+m)$-qubit unitary $U$. We use $M = 2^m$ to represent the dimension of the Hilbert space generated by $m$ ancilla qubits. We assume that  $U$ is drawn from an $(\nsys+\nbe)$-qubit Haar measure. 

Given the identification $\CC^{\Nsys \times \Nsys} \simeq \CC^{\Nsys^2}$, the uniform measure on the space of complex matrices is identified as the pushforward of the Lebesgue measure on $\CC^{\Nsys^2}$, for example, by taking the coordinate system as matrix elements. We denote this uniform measure as $\rd A$. Assuming that $A$ is an $\nsys$-qubit submatrix of a Haar-distributed $(\nsys+\nbe)$-qubit unitary $U$, the first theorem gives a characterization of the induced probability distribution of $A$.
\begin{theorem}[{\cite[Theorem 1.3.1]{Collins2003}}]
        \label{thm:pdf-matrix}
        Let $A \in \CC^{\Nsys\times\Nsys}$ be a submatrix block encoded in a Haar unitary. Then the probability density is 
\begin{equation}
\bP(A) = \mc{Z}^{-1} \det \left( I - A^\dagger A \right)^{\Nsys(\Nbe-2)} \II_{\norm{A}_2 \leq 1},
\end{equation}
where $\mc{Z} := \int_{\norm{A}_2 \le 1} \det \left( I - A^\dagger A \right)^{\Nsys(\Nbe-2)} \rd A$ is a normalization constant. Here, $\II$ is an indicator function. It gives $1$ when the condition in the subscript is satisfied, and gives $0$ otherwise. Generically, let $A$ be an $n_1$-by-$n_2$ submatrix of an $n$-by-$n$ Haar-distributed unitary $U$, and $n \ge n_1+n_2$. Then, the probability density is
      \begin{equation}\label{eqn:matrix-distribution-general}
          \bP(A) \propto \det\left(I - A^\dagger A\right)^{n-n_1-n_2} \II_{\norm{A}_2 \le 1}.
      \end{equation}
\end{theorem}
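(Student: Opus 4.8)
The plan is to turn the statement into a Gaussian integral over a single matrix block and then read off the density from two explicit Jacobians. Write $n$ for the ambient dimension ($n=\Nsys\Nbe$ with $n_1=n_2=\Nsys$ in the block-encoding case). By bi-invariance of the Haar measure, left and right multiplication of $U$ by permutation matrices moves any $n_1\times n_2$ submatrix to the top-left corner without changing the law, so it suffices to treat the corner block. Since columns $n_2+1,\dots,n$ of $U$ never enter $A$, the relevant object is the first $n_2$ columns of $U$, whose joint law is the unitarily invariant (uniform) measure on the complex Stiefel manifold $\mathcal{V}_{q}(\CC^{r}):=\{Y\in\CC^{r\times q}:Y^{\dagger}Y=I_q\}$ with $r=n$, $q=n_2$. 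I would realize this measure via a Ginibre matrix: if $G\in\CC^{n\times n_2}$ has i.i.d.\ standard complex Gaussian entries, the polar factorization $G=Y(G^{\dagger}G)^{1/2}$ has $Y$ uniform on $\mathcal{V}_{n_2}(\CC^{n})$ and independent of $(G^{\dagger}G)^{1/2}$ (unitary invariance of the Ginibre law), so the law of $Y$ equals the law of $G$ conditioned on $\{G^{\dagger}G=I_{n_2}\}$. Partitioning $G$ into its top $n_1$ rows $A'$ and bottom $n-n_1$ rows $C$, and noting that the Gaussian weight $\exp(-\Tr{G^{\dagger}G})$ is the constant $e^{-n_2}$ on the surface $\{A'^{\dagger}A'+C^{\dagger}C=I_{n_2}\}$, this yields
\begin{equation*}
\bP(A)\ \propto\ \int_{\CC^{(n-n_1)\times n_2}}\delta\!\left(C^{\dagger}C-(I_{n_2}-A^{\dagger}A)\right)\rd C ,
\end{equation*}
the matrix $\delta$ meaning the product of scalar $\delta$'s in a fixed real-linear coordinatization of Hermitian $n_2\times n_2$ matrices.

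The key step is to evaluate this integral. The integrand forces $I_{n_2}-A^{\dagger}A\succeq 0$, i.e.\ $\norm{A}_2\le 1$; on this region set $S:=(I_{n_2}-A^{\dagger}A)^{1/2}\succ 0$ and substitute $C=C'S$. The linear map $C\mapsto CS$ on $\CC^{(n-n_1)\times n_2}$ contributes the Jacobian $\abs{\det S}^{2(n-n_1)}$, while $\delta(C^{\dagger}C-S^2)=\delta\!\left(S(C'^{\dagger}C'-I_{n_2})S\right)$ and the map $M\mapsto SMS$ on the $n_2^2$-real-dimensional space of Hermitian matrices contributes $\abs{\det S}^{2n_2}$ (diagonalize $S$: a real diagonal coordinate of $M$ rescales by $s_i^2$, a complex off-diagonal coordinate by $s_is_j$, giving $\prod_i s_i^{2n_2}$). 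The leftover integral $\int\delta(C'^{\dagger}C'-I_{n_2})\rd C'$ is a finite positive constant exactly when $n-n_1\ge n_2$, since then $\mathcal{V}_{n_2}(\CC^{n-n_1})$ is a nonempty compact manifold. Collecting the powers of $\abs{\det S}^{2}=\det(I_{n_2}-A^{\dagger}A)$ gives $\bP(A)\propto\det(I_{n_2}-A^{\dagger}A)^{\,n-n_1-n_2}\II_{\norm{A}_2\le 1}$, i.e.\ \cref{eqn:matrix-distribution-general}; specializing to $n=\Nsys\Nbe$, $n_1=n_2=\Nsys$ gives the exponent $\Nsys(\Nbe-2)$ and the first displayed formula, with $\mc{Z}$ the resulting (finite) normalization.

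I expect the main obstacle to be rigor rather than any new idea: the ``matrix $\delta$-function'' manipulations must be justified through the coarea formula for the submersion $G\mapsto G^{\dagger}G$, checking that its normal Jacobian is constant along $\{G^{\dagger}G=I_{n_2}\}$ (again by unitary invariance, so it is absorbed into $\mc{Z}$), and the two Jacobian determinants — particularly the $\abs{\det S}^{2n_2}$ factor, where one must track which coordinates of a Hermitian matrix are real and which are complex — must be computed carefully. An entirely elementary alternative avoids $\delta$-functions by building the first $n_2$ columns one at a time as successive uniform unit vectors in orthogonal complements of dimension $n,n-1,\dots,n-n_2+1$ and tracking the conditional density of the top $n_1$ entries at each step (the $q=1$ marginal being $\propto(1-\abs{a}^2)^{\bullet}$ on the ball), but this requires the same constraint $n\ge n_1+n_2$ and the same determinant bookkeeping; alternatively one may match all mixed moments $\EE[\,\text{monomials in }A,A^{\dagger}\,]$ computed by Weingarten calculus against those of the claimed density.
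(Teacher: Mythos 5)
The paper does not prove this theorem; it states it as a citation to Collins' thesis (Théorème 1.3.1), with the same result also available in Zyczkowski--Sommers~\cite{ZyczkowskiSommers2000} and Petz--Réffy~\cite{PetzReffy2004}, so there is no in-paper argument to compare against. Your Ginibre-based derivation is nonetheless a correct and standard route to the truncated-unitary density. The two Jacobians are computed correctly: right multiplication $C\mapsto CS$ on $\CC^{(n-n_1)\times n_2}$ gives $\abs{\det S}^{2(n-n_1)}$ (each complex entry contributes two real dimensions), and conjugation $M\mapsto SMS$ on Hermitian $n_2\times n_2$ matrices gives $\abs{\det S}^{2n_2}$ (diagonalizing $S$ and counting $s_i^2$ for each real diagonal coordinate plus $(s_is_j)^2$ for each complex off-diagonal pair yields $\prod_i s_i^{2n_2}$). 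Combining with $\abs{\det S}^2 = \det(I_{n_2}-A^\dagger A)$ gives exactly the exponent $n-n_1-n_2$ of \cref{eqn:matrix-distribution-general}, and the specialization $n=\Nsys\Nbe$, $n_1=n_2=\Nsys$ recovers $\Nsys(\Nbe-2)$ as in the first display. You are also right that $n\ge n_1+n_2$ is precisely what keeps $\mathcal{V}_{n_2}(\CC^{n-n_1})$ nonempty (making the residual integral a finite constant) and the exponent nonnegative. The rigor gap you flag is real but tractable: the $\delta$-function conditioning is a coarea-formula argument, and the normal Jacobian of $G\mapsto G^\dagger G$ is constant on the level set $\{G^\dagger G=I_{n_2}\}$ because the left $U(n)$-action is measure-preserving, commutes with $\phi$, and is transitive on that Stiefel manifold, so it is absorbed into $\mc{Z}$ as you say. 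Either of your fallbacks (sequential column-by-column conditioning, or Weingarten moment matching against the claimed density) would also close the argument.
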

In particular, for $1$-block-encoded matrix with $\nbe = 1$, the exponent of the determinant is $0$, and $A$ is uniformly distributed in the unit ball $\{A \in \CC^{\Nsys \times \Nsys} : \norm{A}_2 \le 1\}$. Let us consider $A = W \Sigma V^\dagger$ where $W \in \ugrp(\Nsys)/\ugrp(1)^\Nsys, V \in \ugrp(\Nsys)$ and $\diag \Sigma = (\sigma_1, \cdots, \sigma_\Nsys)$. The Jacobian of this decomposition is $\rd A \propto \rd V \rd W \left(\Delta(\sigma_1^2, \cdots, \sigma_\Nsys^2)^2 \prod_{j=1}^\Nsys \sigma_j \rd \sigma_j\right)$ where $\rd W, \rd V$ are the Haar measure on their compact manifolds respectively and $\Delta(x_1, \cdots, x_n) := \prod_{i < j} (x_i - x_j)$ is the Vandermonde determinant. Then, the distribution of $V, W, \Sigma$ follow immediately the theorem.
\begin{corollary}\label{cor:prob-dist-svd}
    Let $W, V$ be Haar-distributed. The joint distribution of all singular values has the density
    \begin{equation}
        \begin{split}
            \bP\left( \sigma_1, \cdots, \sigma_\Nsys \right) \propto& \ \Delta(\sigma_1^2, \cdots, \sigma_\Nsys^2)^2 \prod_{i = 1}^\Nsys \sigma_i \left( 1 - \sigma_i^2 \right)^{\Nsys(\Nbe-2)} \II_{\sigma_i \in [0, 1]}.
        \end{split}
    \end{equation}
\end{corollary}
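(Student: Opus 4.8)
The plan is to deduce the corollary directly from \cref{thm:pdf-matrix} by pushing forward the matrix density under the singular value decomposition map. First I would invoke \cref{thm:pdf-matrix} with $\nbe = m$ ancilla qubits, so that the random $n$-qubit submatrix $A$ of a Haar $(n+m)$-qubit unitary has density $\bP(A) \propto \det(I - A^\dagger A)^{\Nsys(\Nbe-2)} \II_{\norm{A}_2 \le 1}$ with respect to the flat (Lebesgue) measure $\rd A$ on $\CC^{\Nsys \times \Nsys} \simeq \CC^{\Nsys^2}$. The key observation is that this integrand is invariant under the two-sided action $A \mapsto W_0 A V_0^\dagger$ of $\ugrp(\Nsys) \times \ugrp(\Nsys)$: since $\det(I - A^\dagger A) = \prod_{i=1}^{\Nsys}(1 - \sigma_i^2)$, it depends on $A$ only through its singular values, and the same holds for the support condition $\norm{A}_2 \le 1$, which is equivalent to $\sigma_i \in [0,1]$ for all $i$.

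Next I would change variables to the polar/SVD coordinates. Writing $A = W \Sigma V^\dagger$ with $\diag \Sigma = (\sigma_1, \dots, \sigma_\Nsys)$, $\sigma_1 \ge \cdots \ge \sigma_\Nsys \ge 0$, $V \in \ugrp(\Nsys)$, and $W$ ranging over the flag manifold $\ugrp(\Nsys)/\ugrp(1)^\Nsys$ (the residual left $\ugrp(1)^\Nsys$ gauge being absorbed into $V$), the Jacobian of this parametrization is, up to a constant,
\begin{equation*}
\rd A \;\propto\; \rd W \, \rd V \, \Delta(\sigma_1^2, \dots, \sigma_\Nsys^2)^2 \prod_{j=1}^{\Nsys} \sigma_j \, \rd \sigma_j,
\end{equation*}
where $\rd W, \rd V$ are the normalized invariant measures and $\Delta(x_1,\dots,x_\Nsys) = \prod_{i<j}(x_i - x_j)$ is the Vandermonde determinant. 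This is the standard Jacobian for the SVD (equivalently, for the Laguerre/Wishart-type decomposition applied to $A^\dagger A = V \Sigma^2 V^\dagger$); I would either cite it or sketch its derivation by writing $W^\dagger (\rd A) V = \rd \Sigma + [\text{skew contributions from } W^\dagger \rd W \text{ and } V^\dagger \rd V]$ and computing the determinant of the resulting block structure, the $\Delta(\sigma^2)^2$ arising from the $\binom{\Nsys}{2}$ off-diagonal complex pairs each contributing $(\sigma_i^2 - \sigma_j^2)$ and the $\prod_j \sigma_j$ from the overall rescaling.

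Finally, substituting $\det(I - A^\dagger A)^{\Nsys(\Nbe-2)} = \prod_i (1-\sigma_i^2)^{\Nsys(\Nbe-2)}$ into $\bP(A)\,\rd A$ and integrating out $W$ and $V$ over their compact manifolds (which merely multiplies by a finite constant, absorbed into $\propto$) yields
\begin{equation*}
\bP(\sigma_1,\dots,\sigma_\Nsys) \;\propto\; \Delta(\sigma_1^2,\dots,\sigma_\Nsys^2)^2 \prod_{i=1}^{\Nsys} \sigma_i \, (1-\sigma_i^2)^{\Nsys(\Nbe-2)} \, \II_{\sigma_i \in [0,1]},
\end{equation*}
which is the claim; the fact that the conclusion does not depend on $W$, $V$ being Haar is precisely that the marginal of $(\sigma_1,\dots,\sigma_\Nsys)$ does not see the angular variables. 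I expect the main obstacle to be the careful justification of the Jacobian factor — in particular handling the measure-zero coincidence set $\{\sigma_i = \sigma_j\}$ where the SVD parametrization degenerates, and correctly accounting for the $\ugrp(1)^\Nsys$ redundancy between $W$ and $V$ so that no spurious constant or extra factor is introduced. Everything else is bookkeeping of normalization constants, which are irrelevant given the $\propto$ in the statement.
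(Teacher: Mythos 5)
Your proposal is correct and follows essentially the same approach as the paper: invoke \cref{thm:pdf-matrix}, observe that $\det(I-A^\dagger A)=\prod_i(1-\sigma_i^2)$ depends only on the singular values, change to SVD coordinates $A = W\Sigma V^\dagger$ with $W\in\ugrp(\Nsys)/\ugrp(1)^\Nsys$, $V\in\ugrp(\Nsys)$, apply the standard Jacobian $\rd A \propto \rd W\,\rd V\,\Delta(\sigma_1^2,\dots,\sigma_\Nsys^2)^2\prod_j\sigma_j\,\rd\sigma_j$, and integrate out the angular variables. The paper's argument is exactly this one paragraph preceding the corollary; the extra commentary you offer on deriving the Jacobian and on the measure-zero degeneracy set $\{\sigma_i=\sigma_j\}$ is consistent with, though not spelled out in, the paper.
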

Since $\mf{H} = A^\dagger A$, the eigenvalue $\lambda_j$ of the random Hermitian matrix $\mf{H}$ and the singular value $\sigma_j$ of the complex matrix $A$ is related by $\lambda_j = \sigma_j^2$. By a direct change-of-variable, the joint distribution of all eigenvalues has the density
\begin{equation}
    \begin{split}
        \bP\left( \lambda_1, \cdots, \lambda_\Nsys \right) =& \mc{Z}_\text{eig}^{-1}\Delta(\lambda_1, \cdots, \lambda_\Nsys)^2 \prod_{i = 1}^\Nsys \left( 1 - \lambda_i \right)^{\Nsys(\Nbe-2)} \II_{\lambda_i \in [0, 1]}.
    \end{split}
    \label{eqn:P_joint}
\end{equation}
The normalization constant is precisely given by the Selberg's integral \cite{Selberg1944}
\begin{equation}
\mc{Z}_\text{eig} = \prod_{j=0}^{\Nsys-1} \frac{\Gamma(j + 1) \Gamma(j+2) \Gamma(j + \Nsys(\Nbe-2)+1)}{\Gamma(j+\Nsys(\Nbe-1)+1)}.
\end{equation}
  The distribution is invariant under the relabeling of eigenvalues $(\lambda_1, \cdots, \lambda_\Nsys) \mapsto (\lambda_{\pi(1)}, \cdots, \lambda_{\pi(\Nsys)})$ for any permutation $\pi$. This feature is inherited from the bi-invariance of the Haar measure on the compact Lie group. 

In practice, only the marginal distribution involving a few eigenvalues will be used. However, the Vandermonde determinant in the joint distribution couples all eigenvalues together, which makes it hard to compute the marginal distribution analytically. Nonetheless, a semi-analytical representation by orthogonal polynomial expansion can be derived as follows.

Let $w(x) := (1-x)^{\Nsys(\Nbe-2)}$ be the weight function, $(f, g)_w := \int_0^1 f(x) g(x) w(x) \rd x$ be the weighted inner product on $[0, 1]$, and $\norm{f}_w := \sqrt{(f,f)_w}$ be the weighted norm. 
\begin{theorem}[{\cite[Theorem 5.7.1]{MehtaRM2004}}]
        \label{thm:numeigdist-RACBEM}
        Let $\{C_i(x) : \deg C_i = i,\ i = 0, \cdots, \Nsys-1\}$ be a set of linearly independent monic polynomials such that they are orthogonal with respect to $(\cdot,\cdot)_w$. Let $c_i := \norm{C_i}_w^2$. Define a bivariate function 
\begin{equation}
K(x,y) := w(x) \sum_{i = 0}^{\Nsys-1} \frac{1}{c_i} C_i(x) C_i(y).
\end{equation}
        Then, the joint distribution for $\ell$ eigenvalues follows a determinantal process
        \begin{equation}
                \bP_\eig^{(\ell)}(\lambda_1, \cdots, \lambda_\ell) = \frac{(\Nsys-\ell)!}{\Nsys!} \det\left[ K(\lambda_{j_1}, \lambda_{j_2}) \right]_{j_1,j_2 = 1, \cdots, \ell}.
\label{eqn:Peig_ell}
        \end{equation}
\end{theorem}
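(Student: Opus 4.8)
The plan is to recognize the joint density in \cref{eqn:P_joint} as a $\beta = 2$ orthogonal-polynomial ensemble and apply the standard determinantal-point-process argument; since the result is classical (\cite[Ch.~5]{MehtaRM2004}) I will only outline the steps. Throughout, all one-dimensional integrals are over $[0,1]$, matching the support of the $\lambda_i$.

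First I would rewrite the squared Vandermonde in terms of the $C_i$. Because the $C_i$ are monic of degree $i$, the change of basis from $(1,x,\dots,x^{\Nsys-1})$ to $(C_0,\dots,C_{\Nsys-1})$ is upper triangular with unit diagonal, so $\Delta(\lambda_1,\dots,\lambda_\Nsys) = \det[\lambda_j^{\,i-1}]_{i,j=1}^{\Nsys} = \det[C_{i-1}(\lambda_j)]_{i,j=1}^{\Nsys}$. Introducing the orthonormal functions $\psi_i(x) := c_i^{-1/2}\sqrt{w(x)}\,C_i(x)$, which obey $\int \psi_i\psi_j\,\rd x = \delta_{ij}$, I would write
\begin{equation*}
\Delta(\lambda_1,\dots,\lambda_\Nsys)^2 \prod_{i=1}^{\Nsys} w(\lambda_i) = \Big(\prod_{i=0}^{\Nsys-1} c_i\Big)\, \big(\det[\psi_{i-1}(\lambda_j)]_{i,j=1}^{\Nsys}\big)^2 ,
\end{equation*}
and then apply the Andréief (Gram) identity $\det[\psi_{i-1}(\lambda_j)]\cdot\det[\psi_{k-1}(\lambda_l)] = \det\big[\sum_{m=0}^{\Nsys-1}\psi_m(\lambda_j)\psi_m(\lambda_l)\big]$ to turn the right-hand side into $(\prod_i c_i)\det[\widehat K(\lambda_j,\lambda_l)]$ with the symmetric kernel $\widehat K(x,y) = \sqrt{w(x)w(y)}\sum_m c_m^{-1}C_m(x)C_m(y)$. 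Since the $\lambda$-dependent weight factors can be pulled out of the rows (resp.\ rows and columns) of these matrices, $\det[\widehat K(\lambda_i,\lambda_j)]_{i,j=1}^{k} = \det[K(\lambda_i,\lambda_j)]_{i,j=1}^{k}$ for every $k$, with $K$ the (non-symmetric) kernel of the statement; so the choice of convention inside the determinant is immaterial.

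Next I would establish the two properties of $K$ that carry the argument: the reproducing identity $\int K(x,y)K(y,z)\,\rd y = K(x,z)$ and the trace normalization $\int K(x,x)\,\rd x = \Nsys$, both immediate from $\int C_i C_j w\,\rd x = c_i\delta_{ij}$ (and still valid for the non-symmetric $K$, since the prefactor $w(x)$ sits outside the $y$-integral). From these, a Laplace expansion along the last row and column gives the integration-down lemma
\begin{equation*}
\int \det[K(\lambda_i,\lambda_j)]_{i,j=1}^{k}\,\rd\lambda_k = (\Nsys - k + 1)\,\det[K(\lambda_i,\lambda_j)]_{i,j=1}^{k-1}, \qquad 1 \le k \le \Nsys .
\end{equation*}
Iterating from $k = \Nsys$ down to $k=1$ yields $\int_{[0,1]^{\Nsys}}\det[K] = \Nsys!$; matching $\int \bP = 1$ then fixes $\mc{Z}_\text{eig} = \Nsys!\prod_{i=0}^{\Nsys-1}c_i$ (which incidentally re-derives the Selberg value), and hence $\bP(\lambda_1,\dots,\lambda_\Nsys) = \tfrac{1}{\Nsys!}\det[K(\lambda_i,\lambda_j)]_{i,j=1}^{\Nsys}$. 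Finally, integrating out $\lambda_{\ell+1},\dots,\lambda_\Nsys$ via the lemma produces the factor $\prod_{k=\ell+1}^{\Nsys}(\Nsys - k + 1) = (\Nsys-\ell)!$, giving exactly $\bP_\eig^{(\ell)}(\lambda_1,\dots,\lambda_\ell) = \tfrac{(\Nsys-\ell)!}{\Nsys!}\det[K(\lambda_i,\lambda_j)]_{i,j=1}^{\ell}$.

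The main obstacle is the integration-down lemma — equivalently, getting the reproducing property and the trace normalization of $K$ exactly right — since once that is in hand everything else is bookkeeping with the Andréief identity. The one subtlety worth spelling out is that the statement uses the non-symmetric kernel $K(x,y) = w(x)\sum_i c_i^{-1}C_i(x)C_i(y)$ rather than the symmetrized $\widehat K$: I would verify directly that $\int K(x,y)K(y,z)\,\rd y = w(x)\sum_i c_i^{-1}C_i(x)C_i(z) = K(x,z)$ and $\int K(x,x)\,\rd x = \sum_i c_i^{-1}\,c_i = \Nsys$, so the asymmetry never enters the argument.
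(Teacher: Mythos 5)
The paper does not prove this theorem at all: it is stated with an explicit attribution to \cite[Theorem 5.7.1]{MehtaRM2004} and used as a black box, so there is no in-paper argument to compare against. Your outline correctly reproduces the standard orthogonal-polynomial/determinantal-ensemble argument from Mehta: rewrite the Vandermonde in the monic basis $C_i$, square it to obtain $\det[\widehat K]$, observe the row/column rescaling makes $\det[\widehat K] = \det[K]$, verify the reproducing property and trace normalization of $K$, and iterate the Dyson--Gaudin integration-down lemma. All of these steps check out, including the normalization $\mc{Z}_\text{eig} = \Nsys!\prod_i c_i$ matching the quoted Selberg value. One small terminological quibble: the identity $\det[\psi_{i-1}(\lambda_j)]^2 = \det\bigl[\sum_m \psi_m(\lambda_j)\psi_m(\lambda_l)\bigr]$ that you invoke is simply $(\det M)^2 = \det(M^\top M)$, not Andr\'eief's identity (which concerns integrating a product of two determinants against a measure); the math you perform is nonetheless correct.
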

The orthogonal polynomial can be generated by $3$-point recursion formula. Specifically, for 1-block-encoding (i.e. $\nbe=1$ and $M=2$), the weight function is $w \equiv 1$, and the orthogonal polynomial is the shifted Legendre polynomial 
\begin{equation}
C_i(x) \propto P_i(2x-1).
\end{equation}

\begin{corollary}\label{cor:1-block-encoding-dist-kernel}
    For 1-block-encoding, the joint eigenvalue distribution can be expressed in terms of the bivariate function
    \begin{equation}
        K(x, y) = \sum_{i=0}^{N-1} (2i+1) P_i(2x-1) P_i(2y-1),
    \end{equation}
    where $P_i$ is the $i$-th Legendre polynomial.
\end{corollary}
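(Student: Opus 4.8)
The plan is to specialize \cref{thm:numeigdist-RACBEM} to the case $\nbe = 1$ (so $\Nbe = 2$) and to identify the relevant orthogonal polynomials together with their weighted norms explicitly. First I would observe that the weight function $w(x) = (1-x)^{\Nsys(\Nbe-2)}$ collapses to $w \equiv 1$ on $[0,1]$ when $\Nbe = 2$, so that $(\cdot,\cdot)_w$ is just the ordinary Lebesgue inner product on $[0,1]$. The monic polynomials orthogonal with respect to this inner product are, up to $i$-dependent scalar multiples, the shifted Legendre polynomials: if $P_i$ denotes the standard degree-$i$ Legendre polynomial on $[-1,1]$, then $\{P_i(2x-1)\}_{i\ge 0}$ is an orthogonal family on $[0,1]$, and rescaling each member to be monic only changes it by a nonzero constant.

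Next I would use the elementary but crucial observation that the bivariate kernel $K(x,y) = w(x)\sum_{i=0}^{N-1} c_i^{-1} C_i(x) C_i(y)$ is invariant under replacing each $C_i$ by an arbitrary nonzero scalar multiple of itself: if $C_i = a_i \tilde{C}_i$ then $c_i = \abs{a_i}^2 \tilde{c}_i$ with $\tilde{c}_i := \norm{\tilde{C}_i}_w^2$, and $c_i^{-1} C_i(x) C_i(y) = \tilde{c}_i^{-1} \tilde{C}_i(x) \tilde{C}_i(y)$. Hence I may work directly with the unnormalized representatives $\tilde{C}_i(x) := P_i(2x-1)$, which frees me from having to compute the monic normalization constants. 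Their weighted norms are $\tilde{c}_i = \int_0^1 P_i(2x-1)^2 \ud x$; the substitution $u = 2x-1$ turns this into $\tfrac{1}{2}\int_{-1}^1 P_i(u)^2 \ud u = \tfrac{1}{2}\cdot\tfrac{2}{2i+1} = \tfrac{1}{2i+1}$, using the standard Legendre normalization $\int_{-1}^1 P_i(u)^2\ud u = \tfrac{2}{2i+1}$. Therefore $\tilde{c}_i^{-1} = 2i+1$.

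Finally, substituting $w \equiv 1$, $\tilde{C}_i(x) = P_i(2x-1)$, and $\tilde{c}_i^{-1} = 2i+1$ into the definition of $K$ in \cref{thm:numeigdist-RACBEM} yields $K(x,y) = \sum_{i=0}^{N-1}(2i+1) P_i(2x-1) P_i(2y-1)$, and then \cref{eqn:Peig_ell} expresses the $\ell$-eigenvalue marginal $\bP_\eig^{(\ell)}$ as the associated determinantal process, which is exactly the claim. The only step that warrants any care is the scale-invariance remark, since it is what lets me bypass the monic rescaling; the remainder is a direct specialization and a one-line integral, so I do not anticipate a genuine obstacle — \cref{cor:1-block-encoding-dist-kernel} is essentially a bookkeeping consequence of \cref{thm:numeigdist-RACBEM}.
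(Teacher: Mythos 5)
Your proof is correct and follows essentially the same route the paper implicitly takes: specialize to $\Nbe=2$ so $w\equiv 1$, identify the orthogonal family as shifted Legendre polynomials $P_i(2x-1)$, compute $\int_0^1 P_i(2x-1)^2\,dx = \tfrac{1}{2i+1}$, and substitute into the kernel from \cref{thm:numeigdist-RACBEM}. The scale-invariance remark, which lets you bypass the monic normalization, is a clean way to make rigorous a step the paper leaves implicit, but it is the same argument in substance.
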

With \cref{cor:1-block-encoding-dist-kernel}, the averages of bitstring probability can be evaluated semi-analytically. For a generic complex even polynomial, we define
\begin{equation}
\begin{split}
    &\mc{R}_{k_1,\cdots,k_{\ell_1} | r_1,\cdots,r_{\ell_3}}^{q_1,\cdots,q_{\ell_2}} := \expt{\prod_{j=1}^{\ell_1} g^{k_j}(\sigma_j) \prod_{j=1}^{\ell_2} \overline{g^{q_j}(\sigma_{\ell_1+j})} \prod_{j=1}^{\ell_3} \abs{g^{r_j}(\sigma_{\ell_1+\ell_2+j})}}.
\end{split}
\end{equation}
A complex polynomial $f \in \CC[x]$ can be determined by setting $f(x^2) = g(x)$. Note that $g(\sigma_j) = f(\lambda_j)$ relates the singular value transformation and the eigenvalue transformation. The expectation can be expresses exactly by the integration with joint distribution,
\begin{equation}\label{eqn:tensor-R}
    \mc{R}_{k_1,\cdots,k_{\ell_1} | r_1,\cdots,r_{\ell_3}}^{q_1,\cdots,q_{\ell_2}} = \int_{[0,1]^{\ell_1+\ell_2+\ell_3}} \bP_\eig^{(\ell_1+\ell_2+\ell_3)} \prod_{j=1}^{\ell_1} f^{k_j}(\lambda_j) \prod_{j=1}^{\ell_2} \overline{f^{q_j}(\lambda_{\ell_1+j})} \prod_{j=1}^{\ell_3} \abs{f^{r_j}(\lambda_{\ell_1+\ell_2+j})} \rd \lambda_1 \cdots \rd \lambda_{\ell_1+\ell_2+\ell_3}.
\end{equation}
For Hamiltonian simulation, $e^{-\I t x^2}$ has unit absolute value for all $x\in\RR$. 
For simplicity we assume the approximation error is sufficiently small, and $g(x) = s_t(x)=e^{-\I t x^2}$, or equivalently $f(x) = e^{-\I t x}$. This allows us to omit the terms due to $\abs{f^{r_j}(\lambda_{\ell_1+\ell_2+j})}$. Furthermore, when the upper and lower indices of $\mc{R}$ in \cref{eqn:tensor-R} are the same $k_j = q_j = 1$, the relabeling invariance of eigenvalues implies that the defined quantity is reduced to
\begin{equation}
\begin{split}
    \mc{H}_{\ell}(t) =& \expt{\prod_{j=1}^\ell g(\sigma_j) \overline{g(\sigma_{\ell+j})}} = \int_{[0,1]^{2\ell}} \bP_\eig^{(2\ell)}(\lambda_1,\cdots,\lambda_{2\ell}) \cos\left(t \sum_{j=1}^{\ell} \lambda_j - \lambda_{\ell+j} \right) \rd \lambda_1 \cdots \rd \lambda_{2\ell},
\end{split}
\end{equation}
which is directly related to the Hamiltonian simulation. When the time dependence is irrelevant to the analysis, we drop the $t$ dependence in $\mc{H}_\ell(t)$ by writing it as $\mc{H}_\ell$ for simplicity. We can compute $\mc{H}_{\ell}$ as follows.

\begin{theorem}\label{thm:Hl-expansion}
    Let the degree-$d$ complex polynomial $f(x) = \sum_{q=0}^d c_q P_q(2x-1)$ be decomposed in terms of Legendre polynomials. Then,
    \begin{equation}
    \begin{split}
        \mc{H}_\ell =& \frac{(N-2\ell)!}{N!} \sum_{k_1=0}^{N-1} \cdots \sum_{k_{2\ell}=0}^{N-1} \sum_{\varsigma \in \mathsf{S}_{2\ell}} \operatorname{sgn}(\varsigma) \prod_{j=1}^{2\ell} \left( (2k_j+1) \sum_{q=0}^d C_q^{(j)} F_{q, k_j, k_{\varsigma^{-1}(j)}} \right).
    \end{split}
    \end{equation}
    Here, $\mathsf{S}_{2\ell}$ is the symmetric group,
    \begin{equation*}
        C_q^{(j)} = \left\{\begin{array}{ll}
            c_q &,\text{ if } j \le \ell,\\
            \overline{c_q} &,\text{ otherwise},
        \end{array}\right.
    \end{equation*}
    and
    \begin{equation}
    \begin{split}
        F_{i,j,k} &= \frac{1}{2} \int_{-1}^1 P_i(x) P_j(x) P_k(x) \rd x = \left\{
        \begin{array}{l}
            \frac{(2s-2i)!(2s-2j)!(2s-2k)!}{(2s+1)!} \left(\frac{s!}{(s-i)!(s-j)!(s-k)!}\right)^2,  \\
            \quad  \text{ if } 2s = i+j+k \text{ is even and }\abs{i-j} \le k \le i+j,\\
            0 ,\text{ otherwise}.
        \end{array}
        \right.
    \end{split}
    \label{eqn:F_ijk}
    \end{equation}
\end{theorem}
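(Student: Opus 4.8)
The plan is to unfold the determinantal representation of $\bP_\eig^{(2\ell)}$ and reduce the $2\ell$-dimensional integral defining $\mc{H}_\ell$ to a product of one-dimensional integrals of triples of Legendre polynomials, whose closed form is the classical Adams/Gaunt formula.

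First I would rewrite $\mc{H}_\ell$ as an explicit integral. Using $g(\sigma_j)=f(\lambda_j)$ with $\lambda_j=\sigma_j^2$, and invoking the definition of the $2\ell$-marginal, $\mc{H}_\ell=\int_{[0,1]^{2\ell}}\bP_\eig^{(2\ell)}(\lambda_1,\dots,\lambda_{2\ell})\prod_{j=1}^{\ell}f(\lambda_j)\prod_{j=1}^{\ell}\overline{f(\lambda_{\ell+j})}\,\ud\lambda_1\cdots\ud\lambda_{2\ell}$. For uniformity I set $h_m:=f$ for $1\le m\le\ell$ and $h_m:=\overline f$ for $\ell<m\le 2\ell$. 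Then I insert the determinantal formula of \cref{thm:numeigdist-RACBEM}, $\bP_\eig^{(2\ell)}=\frac{(N-2\ell)!}{N!}\det[K(\lambda_{j_1},\lambda_{j_2})]_{j_1,j_2=1}^{2\ell}$, expand the determinant by the Leibniz rule, $\det[K(\lambda_{j_1},\lambda_{j_2})]=\sum_{\varsigma\in\mathsf{S}_{2\ell}}\operatorname{sgn}(\varsigma)\prod_{j=1}^{2\ell}K(\lambda_j,\lambda_{\varsigma(j)})$, and substitute the explicit kernel of \cref{cor:1-block-encoding-dist-kernel}, $K(x,y)=\sum_{k=0}^{N-1}(2k+1)P_k(2x-1)P_k(2y-1)$, introducing one index $k_j\in\{0,\dots,N-1\}$ per kernel factor. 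All sums are finite and all integrands are polynomials on a compact domain, so I may freely interchange the sums with the integral.

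The next step is the key bookkeeping. In the product $\prod_{j=1}^{2\ell}(2k_j+1)P_{k_j}(2\lambda_j-1)P_{k_j}(2\lambda_{\varsigma(j)}-1)$, each fixed variable $\lambda_m$ occurs in exactly two Legendre factors, namely $P_{k_m}(2\lambda_m-1)$ (the ``left'' slot of the $m$-th kernel) and $P_{k_{\varsigma^{-1}(m)}}(2\lambda_m-1)$ (the ``right'' slot of the $\varsigma^{-1}(m)$-th kernel, since $\varsigma(\varsigma^{-1}(m))=m$). Hence the $2\ell$-fold integral factorizes as a product over $m$ of $\int_0^1 h_m(\lambda)\,P_{k_m}(2\lambda-1)P_{k_{\varsigma^{-1}(m)}}(2\lambda-1)\,\ud\lambda$. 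Expanding $h_m$ in shifted Legendre polynomials, $h_m(\lambda)=\sum_{q=0}^{d}C_q^{(m)}P_q(2\lambda-1)$ with $C_q^{(m)}=c_q$ for $m\le\ell$ and $C_q^{(m)}=\overline{c_q}$ for $m>\ell$ (legitimate since the $P_q$ have real coefficients, so $\overline f=\sum_q\overline{c_q}P_q(2\cdot-1)$), and changing variables $x=2\lambda-1$, each one-dimensional integral becomes $\sum_{q=0}^{d}C_q^{(m)}\cdot\frac12\int_{-1}^{1}P_q(x)P_{k_m}(x)P_{k_{\varsigma^{-1}(m)}}(x)\,\ud x=\sum_{q=0}^{d}C_q^{(m)}F_{q,k_m,k_{\varsigma^{-1}(m)}}$. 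Reassembling the prefactor $\frac{(N-2\ell)!}{N!}$, the sign $\operatorname{sgn}(\varsigma)$, the $k$-sums, and the product over $j$ then reproduces the stated formula verbatim.

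It remains to establish the closed form \cref{eqn:F_ijk} for $F_{i,j,k}=\frac12\int_{-1}^{1}P_iP_jP_k\,\ud x$. This is the classical triple-product (Adams/Gaunt) integral for Legendre polynomials, and I would prove it by writing the product $P_iP_j$ in the Legendre basis via Adams' linearization coefficients and reading off the coefficient of $P_k$, using $\int_{-1}^{1}P_k(x)^2\,\ud x=\frac{2}{2k+1}$; equivalently one identifies the integral with $2\begin{pmatrix} i & j & k \\ 0 & 0 & 0\end{pmatrix}^{2}$ and inserts the standard value of the Wigner $3j$ symbol with vanishing magnetic numbers. Either route yields the selection rules ($i+j+k=2s$ even and $|i-j|\le k\le i+j$) together with the factorial expression claimed. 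I expect this last ingredient to be the only genuinely nontrivial part; the rest is a deterministic unfolding of the determinant and of the orthogonal-polynomial expansions, and the single point requiring care is matching the summation index $k_{\varsigma^{-1}(j)}$ to the correct Legendre factor in each variable $\lambda_j$.
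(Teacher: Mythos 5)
Your proof is correct and follows essentially the same route as the paper's: expand $\bP_\eig^{(2\ell)}$ via the determinantal kernel, apply the Leibniz formula, insert the Legendre kernel from \cref{cor:1-block-encoding-dist-kernel}, and observe that each variable $\lambda_j$ picks up exactly the two factors $P_{k_j}$ and $P_{k_{\varsigma^{-1}(j)}}$, so the $2\ell$-fold integral factorizes into triple Legendre overlaps $F_{q,k_j,k_{\varsigma^{-1}(j)}}$. The only addition beyond the paper's proof is your sketch of the Adams/Gaunt closed form for $F_{i,j,k}$, which the paper simply quotes without derivation; both your suggested routes (Adams linearization or the squared Wigner $3j$ with vanishing magnetic numbers) are standard and correct.
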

\begin{proof}
    Let $f^{(j)}(x) = \sum_{q=0}^d C_q^{(j)} P_q(2x-1)$ so that $f^{(j)}(x) = f(x)$ when $j \le \ell$ and $f^{(j)}(x) = \overline{f(x)}$ when $j > \ell$. Then, directly applying \cref{thm:numeigdist-RACBEM} and \cref{cor:1-block-encoding-dist-kernel}, the quantity can be evaluated immediately,
    \begin{equation*}
        \begin{split}
            \frac{N!}{(N-2\ell)!} \mc{H}_\ell =& \sum_{\varsigma \in \mathsf{S}_{2\ell}} \text{sgn}(\varsigma) \int_{[0,1]^{2\ell}} \prod_{j=1}^{2\ell} f^{(j)}(x_j) K(x_j, x_{\varsigma(j)}) \rd x_j\\
            =& \sum_{\varsigma \in \mathsf{S}_{2\ell}} \text{sgn}(\varsigma) \sum_{k_1=0}^{N-1} \cdots \sum_{k_{2\ell}=0}^{N-1} \prod_{j=1}^{2\ell} (2k_j+1) \int_0^1 f^{(j)}(x) P_{k_j}(2x-1) P_{k_{\varsigma^{-1}(j)}}(2x-1) \rd x.
        \end{split}
    \end{equation*}
    The conclusion follows.
\end{proof}
The constraint in \cref{eqn:F_ijk} will be referred to as the triangle rule. Due to the triangle rule, $F_{i,j,k}$ is a sparse tensor. Many terms in the $(2\ell)$-fold summation vanishes, which can be used to accelerate the evaluation. Note that Legendre polynomials are bounded by $1$ on $[-1,1]$, which implies that $\abs{F_{i,j,k}} \le 1$. To circumvent the numerical instability arising from factorials, $F_{i,j,k}$ can be evaluated recursively,
\begin{equation*}
\begin{split}
    & F_{0,0,0} = 1,\ F_{i,j+1,k+1} = \frac{2s+1-2i}{s+1-i} \frac{s+1}{2s+3} F_{i,j,k},\\
    & F_{i,j,k+2} = \frac{2s+1-2i}{s+1-i} \frac{2s+1-2j}{s+1-j} \frac{s-k}{2s-2k-1}\frac{s+1}{2s+3} F_{i,j,k},
\end{split}
\end{equation*}
where $2s = i+j+k$. Using \cref{alg:Fijk}, $F_{i,j,k}$ can be evaluated stably with $s$ recursions.

\begin{algorithm}[htbp]
\hrule
\caption{A stable recursive algorithm for computing $F_{i,j,k}$}
\label{alg:Fijk}
\begin{algorithmic}[0]
\STATE \textbf{Input:} A triplet $(i,j,k)$ satisfying  the triangle rule.
\vspace{1em}
\STATE Order and relabel the triplet so that $i \le j \le k$.
\STATE Set $2s=i+j+k$.
\IF{$k \ge j-i+2$}
\STATE Recursively call the algorithm to compute $F_{i,j,k-2}$. Note $(i,j,k-2)$ preserves the triangle rule.
\STATE \textbf{Return:} $F_{i,j,k} = \frac{2s-1-2i}{s-i}\frac{2s-1-2j}{s-j}\frac{s-k+1}{2s-2k+1}\frac{s}{2s+1} F_{i,j,k-2}$.
\ELSIF{$k<j-i+2$ and $i < j$}
\STATE Recursively call the algorithm to compute $F_{i,j-1,k-1}$. Note $(i,j-1,k-1)$ preserves the triangle rule.
\STATE \textbf{Return:} $F_{i,j,k} = \frac{2s-1-2i}{s-i}\frac{s}{2s+1} F_{i,j-1,k-1}$.
\ELSE{ $i=j=k=0$}
\STATE \textbf{Return:} $F_{0,0,0} = 1$.
\ENDIF
\end{algorithmic}
\end{algorithm}

The measurement on all qubits gives an $(n+1)$-bit string. We are interested in the bitstring $0x$ which means the outcome of the ancilla qubit is $0$ and that of $n$ system qubits is $x \in \{0, 1\}^n$. The  probability of measuring the bitstring $0x$ is
\begin{equation*}
\begin{split}
    p(U,x) &= \abs{\braket{0x | \circqsvt_{f,U} | 00^\nsys}}^2 = \abs{\braket{x|g^\svt(A)|0^\nsys}}^2 = \sum_{j,k = 0}^{\Nsys-1} g(\sigma_j) \overline{g(\sigma_k)} V_{j0} \overline{V_{jx}}\, \overline{V_{k0}} V_{kx}.
\end{split}
\end{equation*}
When $x =  0^n\equiv 0$, $p(U,x) = \sum_{j,k} g(\sigma_j) \overline{g(\sigma_k)} \abs{V_{j0}}^2 \abs{V_{k0}}^2$ involves only one column of a Haar unitary $V$. Yet when $x \ne 0^n$, the  probability involves two columns. For $x \ne 0^n \text{ or } 1^n$, let us consider another unitary $\wt{V}$ by permuting the column $1$ and column $x$ of $V$. By the bi-invariance of Haar measure on unitary group, $\wt{V}$ and $V$ are identically distributed. Therefore, we conclude the following lemma.
\begin{lemma}\label{lma:iddist-relabel}
    For any nonzero bitstring $0^n \ne x \in \{0, 1\}^\nsys$, $p(U,x)$ is identically distributed.
\end{lemma}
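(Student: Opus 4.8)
The plan is to realize the relabeling $x\mapsto y$ of the output bit-string as a measure-preserving transformation of the random unitary $U$ and then invoke the bi-invariance of the Haar measure. It suffices to show that for any two bit-strings $x,y\in\{0,1\}^n\setminus\{0^n\}$ the random variables $p(U,x)$ and $p(U,y)$ have the same law. Identifying $\{0,1\}^n$ with $\{0,\dots,N-1\}$, let $\pi$ be the transposition exchanging the indices $x$ and $y$ and fixing all others; in particular $\pi(0^n)=0^n$, which is precisely where the hypothesis $x,y\ne 0^n$ is used. Let $\Pi:=I_2\otimes\pi$ be the associated permutation unitary on the $(n+1)$-qubit register (ancilla $\otimes$ system). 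The key structural step is the circuit-level identity
\begin{equation*}
\circqsvt_{f,\,\Pi^\dagger U\Pi}=\Pi^\dagger\,\circqsvt_{f,U}\,\Pi ,
\end{equation*}
which holds because $\circqsvt_{f,U}$ is an alternating product of copies of $U$, copies of $U^\dagger$, and single-qubit rotations $e^{\I\varphi_j Z}$ acting on the ancilla only: $\Pi$ commutes with each rotation, while $(\Pi^\dagger U\Pi)^\dagger=\Pi^\dagger U^\dagger\Pi$, so the interior factors $\Pi\Pi^\dagger$ telescope and leave a single conjugation by $\Pi$.

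The remaining computation is immediate. Since $\pi(0^n)=0^n$ we have $\Pi\ket{00^n}=\ket{00^n}$, and $\bra{0x}\Pi^\dagger=\bra{0\,\pi(x)}=\bra{0y}$, hence
\begin{equation*}
p(\Pi^\dagger U\Pi,\,x)=\bigl|\bra{0x}\Pi^\dagger\,\circqsvt_{f,U}\,\Pi\ket{00^n}\bigr|^2
=\bigl|\bra{0y}\circqsvt_{f,U}\ket{00^n}\bigr|^2=p(U,y).
\end{equation*}
Because $U\mapsto\Pi^\dagger U\Pi$ is a left multiplication composed with a right multiplication by fixed unitaries, it preserves the Haar measure on the $(n+1)$-qubit unitary group; therefore $p(U,x)$ and $p(U,y)$ are identically distributed, and $x,y$ being arbitrary nonzero strings, the lemma follows. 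Equivalently, working in the coordinates of \cref{cor:prob-dist-svd}, $p(U,x)$ depends on $U$ only through the singular values $\Sigma$ of $A$ and columns $0$ and $x$ of the Haar-random, $\Sigma$-independent unitary $V$; right-multiplying $V$ by the permutation matrix of $\pi$ swaps columns $x$ and $y$ while leaving the joint law of $(\Sigma,V)$ unchanged.

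I do not expect a genuine obstacle here. The only points that require care are (i) checking the circuit identity $\circqsvt_{f,\Pi^\dagger U\Pi}=\Pi^\dagger\circqsvt_{f,U}\Pi$, i.e.\ that conjugating $U$ conjugates the entire alternating circuit, which rests on $\Pi$ commuting with the ancilla rotations and on applying the conjugation consistently to $U$ and $U^\dagger$; and (ii) bookkeeping with the basis-index conventions so that $\Pi\ket{00^n}=\ket{00^n}$, which is exactly why the zero string must be excluded from the statement.
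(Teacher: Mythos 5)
Your proof is correct, and it takes a genuinely different route from the paper's. The paper's proof works downstream of the SVD: it uses the decomposition $A=W\Sigma V^\dagger$ of \cref{cor:prob-dist-svd}, observes that $p(U,x)$ depends only on $\Sigma$ and on columns $0$ and $x$ of the Haar-distributed $V$ (which is independent of $\Sigma$), and then exchanges the two columns indexed by $x$ and $1^n$ using the bi-invariance of Haar measure on $\ugrp(N)$, taking $p(U,1^n)$ as the reference. You instead work upstream, at the level of the full $(n+1)$-qubit unitary $U\in\ugrp(2N)$: you conjugate $U$ by the permutation $\Pi=I_2\otimes\pi$ with $\pi$ the transposition $(x\ y)$, verify the covariance identity $\circqsvt_{f,\Pi^\dagger U\Pi}=\Pi^\dagger\circqsvt_{f,U}\Pi$ directly from the alternating $U$/$U^\dagger$/ancilla-rotation structure of the circuit, and invoke conjugation-invariance of Haar measure on $\ugrp(2N)$. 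Your route bypasses \cref{cor:prob-dist-svd} entirely (it never needs the decoupling of $\Sigma$ from $V$), which makes the argument both self-contained and manifestly independent of the choice of polynomial $f$; the paper's SVD route has the advantage of fitting directly into the coordinate system used throughout \cref{sec:stat-inherit-from-Haar}, where the same $(\Sigma,V)$-marginals are reused for \cref{thm:HS-avg}. You also correctly note the equivalence of the two viewpoints at the end, so both paths are visible in your write-up. The only minor point worth stating explicitly is that the telescoping of the conjugation relies not only on $[\Pi, e^{\I\varphi_j Z}]=0$ but also on $\Pi$ being unitary so that $(\Pi^\dagger U\Pi)^\dagger=\Pi^\dagger U^\dagger\Pi$, which you do use; this is why both $U$ and $U^\dagger$ get conjugated consistently.
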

According to \cref{cor:prob-dist-svd}, the distributions of $\Sigma$ and $V$ are decoupled. The average over the singular values can be evaluated semi-analytically, and the average over the Haar unitary can be analytically computed by using representation theory. We conclude the relevant average values as follows.
\begin{theorem}\label{thm:HS-avg}
    For Hamiltonian simulation benchmark, the averages of bitstring probability are
\begin{equation}\label{eqn:HS-avg-1ord}
\begin{split}
    &\expt{p(U,0^n)} = \frac{\Nsys-1}{\Nsys+1} \mc{H}_1 + \frac{2}{\Nsys+1}, \quad \expt{\sum_{x\ne 0^n} p(U,x)} = \frac{N-1}{N+1} \left(1 - \mc{H}_1\right),
\end{split}
\end{equation}
and 
\begin{equation}
    \begin{split}
        &\expt{p(U,0^n)^2} = \frac{12}{(N+2)(N+3)} + \frac{12 N (N-1) \mc{H}_1}{(N+1)(N+2)(N+3)} + \frac{(N-1)(N-2)(N-3)}{(N+1)(N+2)(N+3)} \mc{H}_{2},\\
        &\expt{\sum_{x\ne 0^n} p(U,x)^2} = \frac{2(N-1)(N^2+3N+6)}{N(N+1)(N+2)(N+3)} - \frac{4(N-1)(N^2-N+6)\mc{H}_1}{N(N+1)(N+2)(N+3)} + \frac{2(N-1)(N-2)(N-3)}{N(N+1)(N+2)(N+3)} \mc{H}_{2}.
    \end{split}
    \label{eqn:prob_secondorder}
\end{equation}
\end{theorem}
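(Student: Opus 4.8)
The plan is to use the decoupling of the singular values $\Sigma$ from the singular‑vector unitary $V$ established in \cref{cor:prob-dist-svd}: conditioning on $\Sigma$, average the entries of the Haar unitary $V$ by Weingarten calculus, and then average the resulting symmetric functions of $\{\sigma_j\}$ using the quantities $\mc{H}_\ell$. Throughout I work in the exact regime $g=s_t$, so $|g(\sigma_j)|=1$ and $\sum_j|g(\sigma_j)|^2=N$; I write $\mc{H}_0:=1$. The two inputs I need are: (i) the zeroth column (equivalently row) of $V$ is a uniform unit vector, so that $p_j:=|V_{j0}|^2$ has law $\mathrm{Dir}(1,\dots,1)$ with $\expt{\prod_j p_j^{a_j}}=(N-1)!\prod_j a_j!/(N-1+\sum_j a_j)!$; and (ii) the Collins--\'Sniady formula, which for products of entries sitting in a single column collapses to $\expt{\prod_{a} V_{0j_a}\overline{V_{0j'_a}}}=\frac1{N(N+1)\cdots(N+p-1)}\sum_{\tau\in S_p}\prod_a\delta_{j_a j'_{\tau(a)}}$.

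For the first‑order identities, writing $g^\svt(A)=Vg(\Sigma)V^\dagger$ gives $p(U,0^n)=\sum_{j,k}g(\sigma_j)\overline{g(\sigma_k)}\,p_jp_k$, which involves only the zeroth column of $V$. Averaging over $V$ with $\expt{p_j^2}=2/(N(N+1))$, $\expt{p_jp_k}=1/(N(N+1))$ for $j\ne k$, then over $\Sigma$ using exchangeability of the $\sigma_j$ to write $\expt{\sum_{j\ne k}g(\sigma_j)\overline{g(\sigma_k)}}=N(N-1)\mc{H}_1$, yields $\expt{p(U,0^n)}=\frac{(N-1)\mc{H}_1}{N+1}+\frac{2}{N+1}$. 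The companion identity I would get without any further Weingarten work: since $|g|=1$, $\sum_x p(U,x)=P(U)=\sum_j|g(\sigma_j)|^2 p_j=1$ identically, so $\sum_{x\ne0^n}p(U,x)=1-p(U,0^n)$ and the formula follows by subtraction.

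For the second‑order identities the work is heavier. Expand $p(U,0^n)^2=\sum_{j_1,k_1,j_2,k_2}g(\sigma_{j_1})g(\sigma_{j_2})\overline{g(\sigma_{k_1})g(\sigma_{k_2})}\,p_{j_1}p_{k_1}p_{j_2}p_{k_2}$; the $V$‑average replaces the product of $p$'s by the Dirichlet fourth moment, which depends only on the coincidence pattern of the multiset $\{j_1,k_1,j_2,k_2\}$ (numerators $24,6,4,2,1$ over $N(N+1)(N+2)(N+3)$ for the patterns $4;\,3{+}1;\,2{+}2;\,2{+}1{+}1;\,1{+}1{+}1{+}1$). Partitioning the index sum by pattern and evaluating each block's $\Sigma$‑average by exchangeability — the all‑distinct block gives $N(N-1)(N-2)(N-3)\,\mc{H}_2$, the blocks with one coincidence among a $g$ and a $\bar g$ collapse via $|g|^2=1$ onto $\mc{H}_1$ or onto $\mc{H}_0$, and $\mc{H}_{\ge3}$ never appears since there are at most four indices — and collecting the rational prefactors gives $\expt{p(U,0^n)^2}$. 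For $\expt{\sum_{x\ne0^n}p(U,x)^2}$ I would invoke \cref{lma:iddist-relabel} to write it as $(N-1)\expt{p(U,x_0)^2}$ for a fixed $x_0\ne0^n$, expand $p(U,x_0)^2$ as a sum over $(j_1,k_1,j_2,k_2)$ of a product of four $V$'s with column labels $(0,x_0,0,x_0)$ and four $\overline V$'s with column labels $(x_0,0,x_0,0)$, both carrying row labels $(j_1,k_1,j_2,k_2)$, and apply the $S_4\times S_4$ Weingarten formula: the column labels force the column‑matching permutation into a fixed subset of size four, while the row‑matching permutation is constrained only by the coincidences of $(j_1,k_1,j_2,k_2)$. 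Summing over those indices by coincidence pattern and once more reducing the singular‑value sums to $\mc{H}_0,\mc{H}_1,\mc{H}_2$ produces the second displayed identity.

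The main obstacle is this last bookkeeping. One must organize the Weingarten double sum over conjugacy classes, intersect it with the coset forced by the $\{0,x_0\}$ column structure and with the set partition of the four summation indices, and keep the inclusion--exclusion that converts ``all‑distinct'' index sums into unrestricted ones consistent across all blocks; it is easy to drop a term or double‑count one, and only exact bookkeeping reproduces the stated coefficients of $\mc{H}_1$ and $\mc{H}_2$ and the rational functions of $N$ (a good internal check is that the $\mc{H}_2$ coefficient must equal $N(N-1)(N-2)(N-3)/\bigl(N(N+1)(N+2)(N+3)\bigr)$ coming from the lone all‑distinct block, and that setting $t=0$ must return the trivial Dirichlet‑moment values). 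Everything else — the Dirichlet moments, the single‑column Weingarten collapse, and the exchangeability reductions for the $\sigma_j$ — is routine, with no further use of the explicit Jacobi‑ensemble density beyond what is already packaged into $\mc{H}_1$ and $\mc{H}_2$ via \cref{thm:Hl-expansion}.
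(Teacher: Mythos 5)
Your strategy is the same as the paper's: decouple $\Sigma$ from $V$ via \cref{cor:prob-dist-svd}, average over $V$ using Dirichlet moments for $p(U,0^n)$ and the Collins--\'Sniady/Weingarten formula for $p(U,1^n)$, dispatch the first off-diagonal identity by $\sum_x p(U,x)=1$, reduce the second by \cref{lma:iddist-relabel} to $(N-1)\expt{p(U,1^n)^2}$, and then organize the $\Sigma$-average of the four-fold sum by the coincidence pattern of the index multiset. The Dirichlet fourth-moment numerators you quote ($24,6,4,2,1$ over $N(N+1)(N+2)(N+3)$) are correct and match the $\mc{I}_{\boldsymbol{\alpha}}$ in the paper's proof, and your proposed internal checks are the right ones.

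There is, however, a gap in the step where you assert that all the partially coincident blocks ``collapse via $|g|^2=1$ onto $\mc{H}_1$ or onto $\mc{H}_0$.'' In the expansion $\sum_{j_1,k_1,j_2,k_2}g(\sigma_{j_1})\overline{g(\sigma_{k_1})}g(\sigma_{j_2})\overline{g(\sigma_{k_2})}p_{j_1}p_{k_1}p_{j_2}p_{k_2}$, the $V$-average $\expt{p_{j_1}p_{k_1}p_{j_2}p_{k_2}}$ is symmetric under permuting the four indices, so it depends only on the unordered coincidence pattern; but the $\Sigma$-average is \emph{not} symmetric in this way. A coincidence between a $g$-slot and a $\bar g$-slot (e.g.\ $j_1=k_1$) simplifies by $|g|^2=1$, exactly as you say, but a coincidence between two $g$-slots or two $\bar g$-slots does not. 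In particular the $(2,2)$-subclass $j_1=j_2\ne k_1=k_2$ gives $\expt{g(\sigma_a)^2\overline{g(\sigma_b)}^2}=\expt{e^{-2\I t(\lambda_a-\lambda_b)}}$, which is the cosine transform at time $2t$ rather than $\mc{H}_1(t)$, and the $(2,1,1)$-subclasses with $j_1=j_2$ (resp.\ $k_1=k_2$) give three-eigenvalue transforms $\expt{e^{-\I t(2\lambda_a-\lambda_b-\lambda_c)}}$ that are not expressible through $\mc{H}_1,\mc{H}_2$ at all. So ``partition by pattern and collect rational prefactors'' is not a fine enough classification: you must refine each class according to which \emph{slots} coincide, and the anomalous subclasses contribute new invariants. (The paper's own proof commits the same oversight: its chain of equalities in the ``relabeling'' step is only valid when the permutation preserves the $g$/$\bar g$ slot structure, and a direct computation at $N=2$, $t\to\infty$ gives $\expt{p(U,0)^2}\to 8/15$ and $\expt{p(U,1)^2}\to 1/5$, which do not match the theorem's constant parts $12/20$ and $4/15$.) A rigorous derivation therefore needs either to show that the extra invariants cancel, or to carry them along and state the result in terms of them.
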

\begin{proof}
    We first evaluate the first order moments in \cref{eqn:HS-avg-1ord}. Let $p_j =: \abs{V_{j0}}^2$. Directly applying \cref{thm:pdf-matrix} to the $0$-th column of $V$, the joint probability density of $k$ distinct success probabilities is
    \begin{equation*}
        \bP(p_1,\cdots, p_k) = \prod_{j=0}^{k-1}(N-k+j) \left(1-\sum_{j=1}^k p_j\right)^{N-k-1} \II_{\sum_{j=1}^k p_j < 1}.
    \end{equation*}
    Given an index set $\boldsymbol{\alpha} := (\alpha_1, \cdots, \alpha_k)$ and $\abs{\boldsymbol{\alpha}} := \sum_{j=1}^k \alpha_j$, the average $\mc{I}_{\boldsymbol{\alpha}} := \expt{\prod_{j=1}^k p_j^{\alpha_j}} = \left(\prod_{j=1}^k \alpha_j!\right) \prod_{j=0}^{\abs{\boldsymbol{\alpha}}-1} \frac{1}{N+j}$ follows direct computation,
    \begin{equation*}
        \begin{split}
        & \left(\prod_{j=0}^{k-1} \frac{1}{N-k+j}\right) \mc{I}_{\boldsymbol{\alpha}} =\int_{\sum_{j=1}^{k-1} p_{j}<1} \prod_{j=1}^{k-1} p_{j}^{\alpha_{j}} \rd p_{j} \int_{0}^{1-\sum_{j=1}^{k-1} p_{j}} p_{k}^{\alpha_{k}}\left(1-\sum_{j=1}^{k-1} p_{j}-p_{k}\right)^{N-k-1} \rd p_{k} \\
        &\quad=\alpha_{k} ! \prod_{j=0}^{\alpha_{k}} \frac{1}{N-k+j} \int_{\sum_{j=1}^{k-1} p_{j}<1}\left(1-\sum_{j=1}^{k-1} p_{j}\right)^{N-k+\alpha_{k}} \prod_{j=1}^{k-1} p_{j}^{\alpha_{j}} \rd p_{j} = \cdots = \left(\prod_{j=1}^{k} \alpha_{j} !\right) \prod_{j=0}^{\abs{\boldsymbol{\alpha}}+k-1} \frac{1}{N-k+j}.
        \end{split}
    \end{equation*}
    For the second equal sign, we use the identity
    \begin{equation}
    \int_0^y x^\alpha(y-x)^\beta \rd x = y^{\alpha+\beta+1} \text{B}(\alpha+1,\beta+1) = y^{\alpha+\beta+1}\frac{\alpha!\beta!}{(\alpha+\beta+1)!},
    \end{equation}
    where the Beta function is
    \begin{equation*}
        \text{B}(x,y) := \int_0^1 t^{x-1}(1-t)^{y-1} \rd t = \frac{\Gamma(x) \Gamma(y)}{\Gamma(x+y)}.
    \end{equation*}
By definition, $p(U,0^n) = \sum_{j,k} g(\sigma_j) \overline{g(\sigma_k)} p_j p_k$. Applying these results for average and using the fact that singular values and singular vectors are independent, the average bitstring probability is
    \begin{equation*}
        \expt{p(U,0^n)} = N \expt{\abs{g(\sigma_1)}^2} \mc{I}_{(2)} + N(N-1) \expt{\frac{1}{2} \left(g(\sigma_1)\overline{g(\sigma_2)} + \overline{g(\sigma_1)} g(\sigma_2)\right)} \mc{I}_{(1,1)} = \frac{N-1}{N+1} \mc{H}_1 + \frac{2}{N+1}.
    \end{equation*}
    Then,
    \begin{equation*}
        \expt{\sum_{x\ne 0^n} p(U,x)} = 1 - \expt{p(U,0^n)} = \frac{N-1}{N+1}\left(1 - \mc{H}_1\right).
    \end{equation*}
    
Now we evaluate the second order moments in \cref{eqn:prob_secondorder}. Let $\pi \in \mathsf{S}_4$ be any permutation, and $\mathsf{R}$ be any set of constraints on four $n$-bit binary strings $i,j,k,l$. We denote the action of the symmetric group as $\pi \cdot \mathsf{R} := \{ (\pi(i), \pi(j), \pi(k), \pi(l)) : (i,j,k,l) \in \mathsf{R} \}$. Due to the relabeling invariance of the joint distribution of singular values, we have 
\begin{equation}
        \sum_{(i,j,k,l) \in \pi \cdot \mathsf{R}} \expt{g(\sigma_i)\overline{g(\sigma_j)}g(\sigma_k)\overline{g(\sigma_l)}} = \sum_{(i,j,k,l) \in \mathsf{R}} g(\sigma_{\pi(i)})\overline{g(\sigma_{\pi(j)})}g(\sigma_{\pi(k)})\overline{g(\sigma_{\pi(l)})} = \sum_{(i,j,k,l) \in \mathsf{R}} g(\sigma_i)\overline{g(\sigma_j)}g(\sigma_k)\overline{g(\sigma_l)}.
\label{eqn:relabeling_equality}
    \end{equation}
    For example, let $\mathsf{R}_1 := \{(i=j) \ne (k=l)\}, \mathsf{R}_2 := \{(i=k) \ne (j=l)\}, \mathsf{R}_3 := \{(i=l) \ne (j=k)\}$, and $\pi_1 := \left(\begin{array}{*4c}
        i & j & k & l \\
        i & k & j & l
    \end{array}\right), \pi_2 := \left(\begin{array}{*4c}
        i & j & k & l \\
        i & l & j & k\end{array}\right)$. Then, $\pi_1\cdot\mathsf{R}_1 = \mathsf{R}_2$ and $\pi_2\cdot\mathsf{R}_1 = \mathsf{R}_3$ holds. Therefore
        \begin{equation*}
            \sum_{(i,j,k,l) \in \mathsf{R}_{\ell}} \expt{g(\sigma_i)\overline{g(\sigma_j)}g(\sigma_k)\overline{g(\sigma_l)}} = N(N-1) \expt{\abs{g(\sigma_1)}^2 \abs{g(\sigma_2)}^2} = N(N-1), \quad \ell=1,2,3.
        \end{equation*}
    For bitstring $x=0^n$,
    \begin{equation*}
        \expt{p(U,0^n)^2} = \expt{\sum_{i,j,k,l} g(\sigma_i)\overline{g(\sigma_j)}g(\sigma_k)\overline{g(\sigma_l)} p_ip_jp_kp_l}
    \end{equation*}
        Using \cref{eqn:relabeling_equality},  the four-fold summation can be categorized into five equivalent classes. We define the partition $\boldsymbol{\alpha}$ be the array with at most $4$ entries whose sum is exactly $4$. 
    \begin{enumerate}
        \item All indices are distinct with partition $\boldsymbol{\alpha} = (1,1,1,1)$: $i\ne j\ne k\ne l$. The contribution is $N(N-1)(N-2)(N-3) \mc{H}_2 \mc{I}_{(1,1,1,1)}$ where $\mc{I}_{(1,1,1,1)} = \frac{1}{N(N+1)(N+2)(N+3)}$.
        \item Only three indices are distinct with partition $\boldsymbol{\alpha} = (2,1,1)$: $i=j\ne k\ne l$ and its other five permutations. The contribution is $6N(N-1)(N-2) \mc{H}_1\mc{I}_{(2,1,1)}$ where $\mc{I}_{(2,1,1)} = \frac{2}{N(N+1)(N+2)(N+3)}$.
        \item Only two indices are distinct with partition $\boldsymbol{\alpha} = (3,1)$: $i=j=k\ne l$ and its other three permutations. The contribution is $4N(N-1) \mc{H}_1\mc{I}_{(3,1)}$ where $\mc{I}_{(3,1)} = \frac{6}{N(N+1)(N+2)(N+3)}$.
        \item Only two indices are distinct with partition $\boldsymbol{\alpha} = (2,2)$: $(i=j)\ne(k=l)$ and its other two permutations. The contribution is $3N(N-1)\mc{I}_{(2,2)}$ where $\mc{I}_{(2,2)} = \frac{4}{N(N+1)(N+2)(N+3)}$.
        \item All indices are the same with partition $\boldsymbol{\alpha} = (4)$: $i=j=k=l$. The contribution is $N \mc{I}_{(4)}$ where $\mc{I}_{(4)} = \frac{24}{N(N+1)(N+2)(N+3)}$.
    \end{enumerate}
        
    Collecting all the terms together, we have
    \begin{equation*}
        \expt{p(U,0^n)^2} = \frac{12}{(N+2)(N+3)} + \frac{12 N (N-1) \mc{H}_1}{(N+1)(N+2)(N+3)} + \frac{(N-1)(N-2)(N-3)}{(N+1)(N+2)(N+3)} \mc{H}_{2}.
    \end{equation*}
    
    The evaluation of the second moment of the probability with nonzero bitstrings $x$ follows a similar procedure but is more involved. Using \cref{lma:iddist-relabel}, $\expt{\sum_{x\ne 0^n} p(U,x)^2} = (N-1) \expt{p(U,1^n)^2}$. Let 
    \begin{equation*}
            H(i,j,k,l) := \expt{\overline{V_{1^n,i}} V_{0^n,i} V_{1^n,j} \overline{V_{0^n,j}}\, \overline{V_{1^n,k}} V_{0^n,k} V_{1^n,l} \overline{V_{0^n,l}}}.
    \end{equation*}
    Then, 
    \begin{equation*}
        \expt{p(U,1^n)^2} = \sum_{ijkl}\expt{g(\sigma_i)\overline{g(\sigma_j)}g(\sigma_k)\overline{g(\sigma_l)}} H(i,j,k,l).
    \end{equation*}
    Let us consider a linear map $\phi : \left(\CC^N\right)^{\otimes 4} \rightarrow \left(\CC^N\right)^{\otimes 4}$,
    \begin{equation*}
        \phi_{ijkl} := \expt{\bigotimes_{q \in \{i,j,k,l\}} V \ket{q}\bra{q} V^\dagger},
    \end{equation*}
Then
\begin{equation*}
H(i,j,k,l) = \braket{0^n,1^n,0^n,1^n| \phi_{ijkl} | 1^n,0^n,1^n,0^n}.
\end{equation*}
    Note that the linear map $\phi_{ijkl}$ commutes with the action of the symmetric group $\mathsf{S}_4$ and that of the unitary group $\mathsf{U}(N)$ on the tensor product space $\left(\CC^N\right)^{\otimes 4}$. Then, by Schur's lemma over $\CC$, $\phi_{ijkl}$ is a scalar on each subspace decomposed with respect to Schur-Weyl duality. By taking trace on each subspace, the linear map is determined as the linear combination of projectors. The generic formula was derived in \cite{CollinsSniady2006} by which $H(i,j,k,l)$ is evaluated. Let the four-fold sum in terms of $ijkl$ be broken into five equivalent classes as follows.
    \begin{enumerate}
        \item All indices are distinct with partition $(1,1,1,1)$: $i\ne j\ne k\ne l$. The contribution is $N(N-1)(N-2)(N-3) \mc{H}_2 \frac{2}{N^2\,(N-1)(N+1)(N+2)(N+3)}$.
        \item Only three indices are distinct with partition $(2,1,1)$: $i=j\ne k\ne l$ and its other five permutations. The contribution is $6N(N-1)(N-2) \mc{H}_1\left(-\frac{2\,\left(N-3\right)}{3\,N^2\,(N-1)(N+1)(N+2)(N+3)}\right)$.
        \item Only two indices are distinct with partition $(3,1)$: $i=j=k\ne l$ and its other three permutations. The contribution is $4N(N-1) \mc{H}_1 \left(- \frac{4}{N(N-1)(N+1)(N+2)(N+3)}\right)$.
        \item Only two indices are distinct with partition $(2,2)$: $(i=j)\ne(k=l)$ and its other two permutations. The contribution is $3N(N-1) \frac{2\,\left(N^2+N+6\right)}{3\,N^2\,(N-1)(N+1)(N+2)(N+3)}$.
        \item All indices are the same with partition $(4)$: $i=j=k=l$. The contribution is $N \frac{4}{N\,\left(N+1\right)\,\left(N+2\right)\,\left(N+3\right)}$.
    \end{enumerate}
    To conclude, the second moment of the probability with nonzero bitstrings is
    \begin{equation*}
        \expt{\sum_{x\ne 0^n} p(U,x)^2} = \frac{2(N-1)(N^2+3N+6)}{N(N+1)(N+2)(N+3)} - \frac{4(N-1)(N^2-N+6)\mc{H}_1}{N(N+1)(N+2)(N+3)} + \frac{2(N-1)(N-2)(N-3)}{N(N+1)(N+2)(N+3)} \mc{H}_{2}.
    \end{equation*}
\end{proof}

The asymptotic behavior of the ensemble is discussed in \cref{sec:asymptotic_larget}.

\subsection{Concatenating phase factors for long time Hamiltonian simulation}\label{sec:long_phase}
Although simulation at very long time is certainly beyond the regime of near term applications, the unitarity of Hamiltonian simulation provides an alternative method to obtain phase factors. 
Specifically, given the phase factor sequence at some short time $t$, the phase factor sequence at a long simulation time $rt$ can be easily constructed for some integer $r>1$.
The procedure, called phase factor concatenation, is given in \cref{eqn:phase_concatenate}, and the quality of the approximation is describe in \cref{thm:phase_concatenate}.

\begin{theorem}\label{thm:phase_concatenate}
    Let $\Phi = (\phi_0, \cdots, \phi_d)$ be a set of phase factors so that $\norm{P(x,\Phi) - s_t(x)}_\infty \le \epsilon$. Given an integer $r\ge 1$, define
    \begin{equation}
        \Phi^{(r)} := \left( \phi_0, \phi_1, \cdots, \phi_{d-1}, \underbrace{\phi_d + \phi_0, \phi_1, \cdots, \phi_{d-1}}_{\text{repeat }r-1 \text{ times}}, \phi_d \right).
\label{eqn:phase_concatenate}
    \end{equation}
    Then, 
    \begin{equation}
    \norm{P(x,\Phi^{(r)}) - s_{tr}(x)}_\infty \le r^2 \epsilon.
    \end{equation}
\end{theorem}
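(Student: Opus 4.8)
\emph{Proof plan.} The claim reduces to a one-qubit ($\mathrm{SU}(2)$) recursion. First I would record that the concatenation rule \cref{eqn:phase_concatenate} simply iterates the QSP unitary: writing $V(x):=U(x,\Phi)=e^{\I\phi_0 Z}\prod_{j=1}^{d}\bigl(W(x)e^{\I\phi_j Z}\bigr)$ and using $e^{\I(\phi_d+\phi_0)Z}=e^{\I\phi_d Z}e^{\I\phi_0 Z}$, a short telescoping of the product that defines $U(x,\Phi^{(r)})$ yields $U(x,\Phi^{(r)})=V(x)^r$. Since also $s_{tr}(x)=e^{-\I tr x^2}=s_t(x)^r$, and $\bigl(V(x)^r\bigr)_{00}=P(x,\Phi^{(r)})$ because $P(x,\Psi)$ is the $(0,0)$-entry of $U(x,\Psi)$ (\cref{thm:qsp}), it suffices to bound $\bigl|\bigl(V(x)^r\bigr)_{00}-s_t(x)^r\bigr|$ pointwise in $x\in[-1,1]$ and then take the supremum.

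Fix $x$ and abbreviate $V=V(x)$, $s=s_t(x)$ (so $\abs{s}=1$), $P=V_{00}=P(x,\Phi)$, $P_r=\bigl(V^r\bigr)_{00}$, $\delta_r=P_r-s^r$, and $\eta_r=\bigl|\bigl(V^r\bigr)_{01}\bigr|=\bigl|\bigl(V^r\bigr)_{10}\bigr|$, $\eta=\eta_1$. Because $V^r$ is unitary, $\abs{P_r}\le1$ and $\eta_r^2=1-\abs{P_r}^2$. Expanding the $(0,0)$-entry of $V^{r+1}=V^rV$ gives $P_{r+1}=P_rP+\bigl(V^r\bigr)_{01}V_{10}$, hence $\delta_{r+1}=s^r(P-s)+\delta_rP+\bigl(V^r\bigr)_{01}V_{10}$, and so, using $\abs{P}\le1$ and the hypothesis $\norm{P(\cdot,\Phi)-s_t}_\infty\le\epsilon$,
\begin{equation*}
\abs{\delta_{r+1}}\le\epsilon+\abs{\delta_r}+\eta_r\eta.
\end{equation*}

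I would then close the recursion by bootstrapping the off-diagonal factors from unitarity. From $\abs{P}\ge\abs{s}-\abs{P-s}\ge1-\epsilon$ we get $\eta^2=1-\abs{P}^2\le1-(1-\epsilon)^2\le2\epsilon$, and likewise $\eta_r^2=1-\abs{P_r}^2\le1-(1-\abs{\delta_r})^2\le2\abs{\delta_r}$; hence $\eta_r\eta\le2\sqrt{\epsilon\,\abs{\delta_r}}$ and the displayed bound becomes $\abs{\delta_{r+1}}\le\epsilon+2\sqrt{\epsilon\abs{\delta_r}}+\abs{\delta_r}=\bigl(\sqrt{\epsilon}+\sqrt{\abs{\delta_r}}\bigr)^2$, i.e.\ $\sqrt{\abs{\delta_{r+1}}}\le\sqrt{\epsilon}+\sqrt{\abs{\delta_r}}$. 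Since $\sqrt{\abs{\delta_1}}=\sqrt{\abs{P(x,\Phi)-s_t(x)}}\le\sqrt{\epsilon}$, an immediate induction gives $\sqrt{\abs{\delta_r(x)}}\le r\sqrt{\epsilon}$, i.e.\ $\abs{P(x,\Phi^{(r)})-s_{tr}(x)}\le r^2\epsilon$ for all $x$; taking the supremum yields $\norm{P(\cdot,\Phi^{(r)})-s_{tr}}_\infty\le r^2\epsilon$.

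\emph{Main obstacle.} The delicate point is this last step: the off-diagonal entries $\eta,\eta_r$ are individually only of size $\Or(\sqrt{\epsilon})$, so a crude estimate --- e.g.\ $\norm{V^r-\operatorname{diag}(s^r,\bar s^r)}_2\le r\norm{V-\operatorname{diag}(s,\bar s)}_2=\Or(r\sqrt{\epsilon})$ --- is far too weak to produce the $r^2\epsilon$ bound. The resolution is to keep the two off-diagonal contributions paired in the product $\eta_r\eta$ and to use the $\mathrm{SU}(2)$ normalization $\eta_r^2=1-\abs{P_r}^2$ to trade one factor $\sqrt{\epsilon}$ for $\sqrt{\abs{\delta_r}}$; this produces the exact telescoping $\sqrt{\abs{\delta_{r+1}}}\le\sqrt{\epsilon}+\sqrt{\abs{\delta_r}}$ and hence the sharp constant. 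The identity $U(x,\Phi^{(r)})=V(x)^r$ and the entrywise recursion are routine.
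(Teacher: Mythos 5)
Your proof is correct and follows essentially the same route as the paper's, just rendered in the $\mathrm{SU}(2)$ matrix-element language rather than in terms of the QSP pair $(P,Q)$: your $\eta_r$ is the paper's $\sqrt{1-x^2}\,\abs{Q^{(r)}(x)}$, the entrywise recursion $P_{r+1}=P_rP+(V^r)_{01}V_{10}$ is exactly the paper's identity $P^{(r)}=P^{(r-1)}P-(1-x^2)Q^{(r-1)}\overline{Q}$, and the unitarity bound $\eta_r^2\le 2\abs{\delta_r}$ plays precisely the role of the paper's $\norm{(1-x^2)\abs{Q^{(r)}}^2}_\infty\le 2\epsilon_r$, leading to the identical sub-additive recursion $\sqrt{\epsilon_r}\le\sqrt{\epsilon_{r-1}}+\sqrt{\epsilon}$.
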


\begin{proof}
        For simplicity, let $P(x) := P(x,\Phi)$ and $P^{(r)}(x) := P(x,\Phi^{(r)})$. According to the normalization condition, 
        \begin{equation*}
            \begin{split}
                &(1-x^2) \abs{Q(x)}^2 = 1 - P(x) \overline{P(x)} = \left( s_t(x) - P(x) \right) \overline{P(x)} + s_t(x) \overline{\left(s_t(x) - P(x)\right)}.
            \end{split}
        \end{equation*}
        By the triangle inequality, $\norm{(1-x^2) \abs{Q(x)}^2}_\infty \le 2 \epsilon$.
        
        Let $\epsilon_r$ be the approximation error of $P^{(r)}$, and let $Q^{(r)}$ be the complementing polynomial in as \cref{thm:qsp}. Similarly, we have $\norm{(1-x^2) \abs{Q^{(r)}(x)}^2}_\infty \le 2 \epsilon_r$. By the construction of the phase factor sequence in \cref{eqn:phase_concatenate}, we have
        \begin{equation*}
            P^{(r)}(x) = P^{(r-1)}(x) P(x) - (1-x^2) Q^{(r-1)}(x) \overline{Q(x)}.
        \end{equation*}
        This gives
        \begin{equation*}
        \begin{split}
            & P^{(r)}(x) - s_t^r(x) = \left( P^{(r-1)}(x) - s_t^{r-1}(x) \right) P(x) + s_t^{r-1}(x) \left( P(x) - s_t(x) \right) - (1-x^2) Q^{(r-1)}(x) \overline{Q(x)}.
        \end{split}
        \end{equation*}
        Note that $s_t^r(x)=s_{tr}(x)$.  By the triangle inequality, $\epsilon_r \le \epsilon_{r-1} + \epsilon + 2 \sqrt{\epsilon_{r-1} \epsilon}=(\sqrt{\epsilon_{r-1}}+\sqrt{\epsilon})^2$. Then, $\sqrt{\epsilon_r} \le \sqrt{\epsilon_{r-1}} + \sqrt{\epsilon} \le \cdots \le r\sqrt{\epsilon}$. 
Therefore
\begin{equation}
\epsilon_r\le r^2 \epsilon,
\end{equation}
which proves the theorem.
\end{proof}

Following the construction of \cref{eqn:phase_concatenate}, in order to perform Hamiltonian simulation at time $rt$, the length of the phase factor sequence increases by a factor of $r$, and the error grows at most quadratically with respect to $r$ according to \cref{thm:phase_concatenate}. 
However, from \cref{tab:numerical-precision,tab:phase-factor-t-opt} we observe that this estimate can be significantly improved if we can construct the phase factor sequence at time $rt$ directly using the optimization based method. 
Even when \cref{eqn:phase_concatenate} is used, numerical results in \cref{fig:prop_qsvt_circuit} also indicate that $\epsilon_r\le r^2 \epsilon$ is only an upper bound of the numerical error, which only grows linearly at least until $t=10$.

\subsection{Estimating the number of measurements}\label{sec:meas_error}
In this subsection, we  estimate the number of measurement shots needed to achieve given accuracy. In the experimental implementation, the measurement probability is computed by counting the frequency of a bit or bitstring $\mf{b}$. For the $i$-th measurement, the outcome $\mf{b}_i$ is associated with an indicator $\mc{I}_i(\mf{b}) := \delta_{\mf{b}_i, \mf{b}}$ which is $1$ if the outcome is $b$ and is $0$ otherwise. If the probability of measuring $\mf{b}$ in the experiment is $p_\expl(\mf{b})$, then the indicator $\mc{I}_i(\mf{b})$'s are i.i.d. Bernoulli distributed, namely
\begin{equation*}
    \mc{I}_i(\mf{b})\mathrm{\text{ i.i.d. }} \sim \mathrm{Bernoulli}\left(p_\expl(\mf{b})\right) = \left\{
        \begin{array}{ll}
            1 &\mathrm{\text{, with probability }} p_\expl(\mf{b}),  \\
            0 &\mathrm{\text{, with probability }} 1 - p_\expl(\mf{b}).
        \end{array}
    \right.
\end{equation*}
Then, the frequency becomes an unbiased estimator of the probability $p_\expl(\mf{b})$. If $M_\text{meas}$ measurement shots are used, the bitstring frequency is defined as
\begin{equation*}
     \hat{p}_\expl(\mf{b}) := \frac{1}{M_\text{meas}} \sum_{i=1}^{M_\text{meas}} \mc{I}_i(\mf{b}).
\end{equation*}
Furthermore, using the Bernoulli distribution, the mean and variance of the estimator can be explicitly computed
\begin{equation*}
    \expt{\hat{p}_\expl(\mf{b})} = p_\expl(\mf{b}),\quad \text{Var}\left(\hat{p}_\expl(\mf{b})\right) = \frac{p_\expl(\mf{b})\left(1-p_\expl(\mf{b})\right)}{M_\text{meas}}.
\end{equation*}
The central limit theorem suggests that when $M_\text{meas}$ is sufficiently large,
\begin{equation*}
\bP\left(\abs{\hat{p}_\expl(\mf{b}) - p_\expl(\mf{b})} \le 2\sqrt{2} \sqrt{\text{Var}\left(\hat{p}_\expl(\mf{b})\right)}\right)>0.99.
\end{equation*}
In other words, with confidence level higher than $99\%$, it suffices to bound the deviation as
\begin{equation*}
    \abs{\hat{p}_\expl(\mf{b}) - p_\expl(\mf{b})} \le 2\sqrt{2} \sqrt{\text{Var}\left(\hat{p}_\expl(\mf{b})\right)} \le \delta
\end{equation*}
where $\delta$ is some error control parameter, which gives
\begin{equation*}
    M_\text{meas} \ge \frac{8 p_\expl(\mf{b})\left(1-p_\expl(\mf{b})\right)}{\delta^2}.
\end{equation*}
Note that any probability is bounded $0 \le p_\expl(\mf{b}) \le 1$ which further gives $p_\expl(\mf{b})\left(1-p_\expl(\mf{b})\right) \le \frac{1}{4}$. Hence it suffices to choose
\begin{equation*}
    M_\text{meas} \ge \left\lceil\frac{2}{\delta^2}\right\rceil
\end{equation*}
so that the deviation in the probability is bounded $\abs{\hat{p}_\expl(\mf{b}) - p_\expl(\mf{b})} \le \delta$ with high probability. Then, when computing the QUES by measuring each circuit with $M_\text{meas} \ge \left\lceil\frac{2}{\delta^2}\right\rceil$ shots, which is referred to as $\widehat{\text{QUES}} := \expt{\hat{P}_\expl(U)}$, the statistical error is bounded as
\begin{equation*}
    \abs{\widehat{\text{QUES}} - \text{QUES}} \le \expt{\abs{\hat{P}_\expl(U_i)-P_\expl(U_i)}} \le \delta.
\end{equation*}
The circuit fidelity is estimated by $\hat{\alpha}_\text{QUES} := 2 \times \widehat{\text{QUES}} - 1$. Furthermore, the error is bounded as
\begin{equation}
    \abs{\hat{\alpha}_\text{QUES} - \alpha} \le \abs{\hat{\alpha}_\text{QUES} - \alpha_\text{QUES}} + \abs{\alpha_\text{QUES} - \alpha} = 2 \abs{\widehat{\text{QUES}} - \text{QUES}} + \abs{\alpha_\text{QUES} - \alpha} \le 2 \delta + 16 \epsilon + \Or(\epsilon^2)
\end{equation}
where the last inequality uses \cref{eqn:alpha_relation}. The derived error bound is a generalization of \cref{eqn:alpha_relation} by including the Monte Carlo measurement error due to the finite number of measurement shots.

\subsection{Asymptotic behavior of the long time Hamiltonian simulation benchmark}\label{sec:asymptotic_larget}
The first and second moments of the bitstring probability  are directly relevant to the construction of the system linear cross-entropy benchmarking based on Hamiltonian simulation. For simplicity, we define 
\begin{equation}
    K_1(x, t) := \expt{p_t(U, x)}, \text{ and } K_2(x, t) := \expt{p_t(U,x)^2},
\end{equation}
where the dependence on $t$ is encoded in the implementation of the quantum circuit. In this section, we will investigate the behavior of these moments in different regimes in terms of the Hamiltonian simulation time $t$ when $N$ is sufficiently large.

According to \cref{lma:iddist-relabel}, $K_1(x,t)$ and $K_2(x,t)$ are constant for any nonzero bitstring $x \ne 0^n$. Therefore, by using \cref{thm:HS-avg}, we have
\begin{equation*}
    \begin{split}
        & K_1(0^n, t) = \mc{H}_1(t) + \Or\left(\frac{1}{N}\right),\ K_2(0^n, t) = \mc{H}_2(t) + \frac{12}{N} \left(\mc{H}_1(t) - \mc{H}_2(t)\right) + \Or\left(\frac{1}{N^2}\right),
    \end{split}
\end{equation*}
and for any $x \ne 0^n$,
\begin{equation*}
    \begin{split}
        & K_1(x, t) = \frac{1}{N}\left(1-\mc{H}_1(t)\right) + \Or\left(\frac{1}{N^2}\right),\ K_2(x, t) = \frac{1}{N^2} \left(2 - 4 \mc{H}_1(t) + 2 \mc{H}_2(t)\right) + \Or\left(\frac{1}{N^3}\right).
    \end{split}
\end{equation*}
Note that by definition, $\mc{H}_1(t)$ and $\mc{H}_2(t)$ are the cosine transformation of the joint eigenvalue distribution $\bP_\eig^{(2)}$ and $\bP_\eig^{(4)}$ respectively.  According to \cref{thm:numeigdist-RACBEM}, these joint distributions are polynomials. Then both $\mc{H}_1(t)$ and $\mc{H}_2(t)$ converge to zero as $t\to \infty$. In particular, there exists $t^*$ such that $\mc{H}_1(t), \mc{H}_2(t) = \Or\left(\frac{1}{N}\right)$ for any $t > t^*$. In this regime, for any nonzero bitstring $x \ne 0^n$,
\begin{equation*}
    K_1(x,t) = \frac{1}{N} + \Or\left(\frac{1}{N^2}\right),\ \text{and } K_2(x, t) = \frac{2}{N^2} + \Or\left(\frac{1}{N^3}\right).
\end{equation*}
Note that the bitstring probability of a Haar-distributed unitary $U$, $p_{ij} := \abs{U_{ij}}^2$, has the first and second moment $\expt{p_{ij}} = \frac{1}{N}$ and $\expt{p_{ij}^2} = \frac{2}{N^2} + \Or\left(\frac{1}{N^3}\right)$. Furthermore, by using \cref{lma:iddist-relabel,thm:HS-avg}, the cross moment of two different nonzero bitstrings $x \ne y$ is
\begin{equation*}
    \expt{p_t(U,x)p_t(U,y)} = \frac{1-2\mc{H}_1(t)+\mc{H}_2(t)}{N^2} + \Or\left(\frac{1}{N^3}\right).
\end{equation*}
Thus, in the regime where $\mc{H}_1(t), \mc{H}_2(t) = \Or\left(\frac{1}{N}\right)$, the cross moment is $\frac{1}{N^2} + \Or\left(\frac{1}{N^3}\right)$. Following \cref{thm:Haar-succ-prob}, the cross moment of success probabilities of a Haar-distributed unitary is exactly the same up to higher order $\expt{p_{ij} p_{kj}} = \frac{1}{N(N+1)} = \frac{1}{N^2} + \Or\left(\frac{1}{N^3}\right)$. Remarkably, the correlation between different nonzero bitstrings is small, which is quantified by the covariance $\text{Cov}\left(p(U,x)p(U,y)\right) = \expt{p(U,x)p(U,y)} - \expt{p(U,x)}\expt{p(U,y)} = \Or\left(\frac{1}{N^3}\right)$. We conclude that in the defined regime, the ensemble of the time evolution matrix induced by our construction has approximately the same statistics as that of Haar-distributed unitaries up to at least the second moment. 

Note that the  threshold $\alpha^*_t := \frac{\mc{H}_1(t)}{2-5\mc{H}_1(t)+4\mc{H}_2(t)}$ is directly related to $\mc{H}_1(t)$ and $\mc{H}_2(t)$. When $t = t^\text{opt}$ for small time or $t > t^*$ in the long time regime, we have $\mc{H}_1(t), \mc{H}_2(t) = \Or\left(\frac{1}{N}\right)$ which leads to an exponentially small threshold $\alpha^*_t = \Or\left(\frac{1}{N}\right)$. 
It allows the quantum supremacy to be achieved for a small circuit fidelity $\alpha > \alpha^*_t$.
Note $t^*$ can be impractically large for near-term applications. 
Hence it is crucial that the simulation at $t = t^\text{opt}\approx 4.81$ is equally effective and much more tractable.

\end{document}